\def\input@path{{styles/}{../styles/}}
\newcommand{\SoCG}[1]{}
\newcommand{\NotSoCG}[1]{#1}%
\newcommand{\SoCG}[1]{#1}
\newcommand{\NotSoCG}[1]{}%
   \def\UseBibLatex{1}%
\newcommand{\SarielComp}[1]{}
\newcommand{\NotSarielComp}[1]{#1}%
\newcommand{\SarielComp}[1]{#1}%
\newcommand{\NotSarielComp}[1]{}%
\numberwithin{figure}{section}%
\numberwithin{table}{section}%
\numberwithin{equation}{section}%
\theoremstyle{plain}%
\newtheorem{theorem}{Theorem}[section]
\newtheorem{lemma}[theorem]{Lemma}
\newtheorem{corollary}[theorem]{Corollary}
\newtheorem{observation}[theorem]{Observation}
\theoremstyle{plain}%
\newtheorem*{remark:unnumbered}[theorem]{Remark}%
\newtheorem{remark}[theorem]{Remark}%
\newtheorem{algorithm}[theorem]{Algorithm}%
\newtheorem{definition}[theorem]{Definition}
\newcommand{\myqedsymbol}{\rule{2mm}{2mm}}
\theoremstyle{nonumberplain}%
\newtheorem{proof}{Proof:}%
   \newtheorem{algorithm}[theorem]{Algorithm}
\definecolor{blue25emph}{rgb}{0, 0, 11}
\providecommand{\emphic}[2]{%
   \textcolor{blue25emph}{%
      \textbf{\emph{#1}}}%
   \index{#2}}
\providecommand{\emphi}[1]{\emphic{#1}{#1}}
\definecolor{almostblack}{rgb}{0, 0, 0.3}
\providecommand{\emphw}[1]{{\textcolor{almostblack}{\emph{#1}}}}%
\newcommand{\atgen}{\symbol{'100}}%
\newcommand{\SarielThanks}[1]{%
   \thanks{%
      Department of Computer Science; %
      University of Illinois; %
      201 N. Goodwin Avenue; %
      Urbana, IL, 61801, USA; %
      \href{mailto:spam@illinois.edu}{sariel@illinois.edu}; %
      \url{http://sarielhp.org/}. %
   #1%
   }%
}
\newcommand{\BenThanks}[1]{%
   \thanks{Department of Computer Science;
           University of Texas at Dallas; Richardson, TX 75080, USA;
      {\tt benjamin.raichel\atgen{}utdallas.edu}; {\tt
         \url{http://utdallas.edu/\string~benjamin.raichel}.} #1}}
\newcommand{\EliotThanks}[1]{%
   \thanks{%
      Department of Computer Science; %
      University of Illinois; %
      201 N. Goodwin Avenue; %
      Urbana, IL, 61801, USA; %
      \href{mailto:erobson2@illinois.edu}{erobson2@illinois.edu}; %
      \url{https://eliotwrobson.github.io}.%
   #1%
   }%
}
\newcommand{\HLink}[2]{\hyperref[#2]{#1~\ref*{#2}}}
\newcommand{\HLinkSuffix}[3]{\hyperref[#2]{#1\ref*{#2}{#3}}}
\newcommand{\figlab}[1]{\label{fig:#1}}
\newcommand{\figref}[1]{\HLink{Figure}{fig:#1}}
\newcommand{\thmlab}[1]{{\label{theo:#1}}}
\newcommand{\corlab}[1]{\label{cor:#1}}
\newcommand{\corref}[1]{\HLink{Corollary}{cor:#1}}%
\providecommand{\deflab}[1]{\label{def:#1}}
\newcommand{\defref}[1]{\HLink{Definition}{def:#1}}
\newcommand{\defrefY}[2]{\hyperref[def:#2]{#1}}
\newcommand{\alglab}[1]{\label{Algorithm:#1}}%
\newcommand{\algref}[1]{\HLink{Algorithm}{Algorithm:#1}}%
\newcommand{\obslab}[1]{\label{observation:#1}}
\newcommand{\obsref}[1]{\HLink{Observation}{observation:#1}}
\newcommand{\remlab}[1]{\label{rem:#1}}
\newcommand{\remref}[1]{\HLink{Remark}{rem:#1}}%
\newcommand{\lemlab}[1]{\label{lemma:#1}}
\newcommand{\lemref}[1]{\HLink{Lemma}{lemma:#1}}%
\newcommand{\seclab}[1]{\label{sec:#1}}
\newcommand{\secref}[1]{\HLink{Section}{sec:#1}}
\providecommand{\eqlab}[1]{}%
\renewcommand{\eqlab}[1]{\label{equation:#1}}
\newcommand{\Eqref}[1]{\HLinkSuffix{Eq.~(}{equation:#1}{)}}
\providecommand{\IntRange}[1]{\mleft\llbracket #1 \mright\rrbracket}
\newcommand{\IRX}[1]{\IntRange{#1}}%
\newcommand{\IRY}[2]{\left\llbracket #1:#2 \right\rrbracket}
\newcommand{\remove}[1]{}%
\newcommand{\Set}[2]{\left\{ #1 \;\middle\vert\; #2 \right\}}
\newcommand{\pth}[1]{\mleft( #1 \mright)}%
\newcommand{\floor}[1]{\left\lfloor {#1} \right\rfloor}
\newcommand{\cardin}[1]{\left| {#1} \right|}%
\renewcommand{\th}{th\xspace}
\renewcommand{\Re}{\mathbb{R}}%
\newlist{compactenumA}{enumerate}{5}%
\setlist[compactenumA]{topsep=0pt,itemsep=-1ex,partopsep=1ex,parsep=1ex,%
   label=(\Alph*)}%
\newlist{compactenuma}{enumerate}{5}%
\setlist[compactenuma]{topsep=0pt,itemsep=-1ex,partopsep=1ex,parsep=1ex,%
   label=(\alph*)}%
\newlist{compactenumI}{enumerate}{5}%
\setlist[compactenumI]{topsep=0pt,itemsep=-1ex,partopsep=1ex,parsep=1ex,%
   label=(\Roman*)}%
\newlist{compactenumi}{enumerate}{5}%
\setlist[compactenumi]{topsep=0pt,itemsep=-1ex,partopsep=1ex,parsep=1ex,%
   label=(\roman*)}%
\newlist{compactitem}{itemize}{5}%
\setlist[compactitem]{topsep=0pt,itemsep=-1ex,partopsep=1ex,parsep=1ex,%
   label=\ensuremath{\bullet}}%
\providecommand{\BibLatexMode}[1]{}
\providecommand{\BibTexMode}[1]{#1}
  \renewcommand{\BibLatexMode}[1]{}
  \renewcommand{\BibTexMode}[1]{#1}
  \renewcommand{\BibLatexMode}[1]{#1}
  \renewcommand{\BibTexMode}[1]{}
\newcommand{\UsePackage}[1]{%
  \IfFileExists{styles/#1.sty}{%
      \usepackage{styles/#1}%
   }{%
      \IfFileExists{../styles/#1.sty}{%
         \usepackage{../styles/#1}%
      }{%
         \usepackage{#1}%
      }%
   }%
}
\providecommand{\Mh}[1]{#1}%
\newcommand{\etal}{\textit{et~al.}\xspace}
\newcommand{\Frechet}{Fr\'{e}chet\xspace}
\newcommand{\norm}[1]{\left\| {#1} \right\|}
\newcommand{\dY}[2]{\norm{#1 #2}}
\newcommand{\dSY}[2]{\mathcalb{d}\pth{#1, #2}}
\newcommand{\nnY}[2]{\Downarrow_{#2}\!\!\pth{#1}}
\newcommand{\curveA}{\Mh{\pi}}%
\newcommand{\curveB}{\Mh{\sigma}}%
\newcommand{\cA}{\curveA}
\newcommand{\cB}{\curveB}
\newcommand{\cAs}{\overline{\cA}}
\newcommand{\cBs}{\overline{\cB}}
\newcommand{\simpX}[1]{\mathrm{s{i}m{p}l}\pth{#1}}
\newcommand{\Cell}{\Mh{C}}
\newcommand{\CellY}[2]{\Cell_{#1, #2}}
\newcommand{\edge}{\Mh{e}}
\newcommand{\vecP}{\mathcalb{p}}%
\newcommand{\vecQ}{\mathcalb{q}}
\newcommand{\curveC}{\Mh{\tau}}%
\newcommand{\lenX}[1]{\left\|{#1} \right\|}
\newcommand{\pA}{\Mh{p}}%
\newcommand{\pB}{\Mh{q}}%
\newcommand{\pC}{\Mh{t}}%
\newcommand{\pD}{\Mh{z}}%
\newcommand{\HeightChar}{\Mh{\mathcalb{h}}}
\newcommand{\heightX}[1]{\HeightChar^{}\pth{#1}}
\newcommand{\FrChar}{{\EuScript{F}}}
\newcommand{\distC}{\mathsf{d}}%
\newcommand{\distFr}[2]{\distC^{}_\FrChar\pth{#1, #2}}
\newcommand{\distDFr}[2]{{\overline{\mathsf{d}^{}_\FrChar}\pth{#1, #2}}}
\newcommand{\distWFr}[2]{\mathsf{d}^{w}_{\EuScript{F}}\pth{#1, #2}}
\newcommand{\distVEFr}[2]{\mathsf{d}^{ve}_{\EuScript{F}}\pth{#1, #2}}
\newcommand{\dSWY}[2]{\distC_{SW}\pth{#1,#2}}
\newcommand{\DotProd}[2]{\permut{{#1},{#2}}}
\newcommand{\permut}[1]{\left\langle {#1} \right\rangle}
\newcommand{\pp}{\mathcalb{p}}%
\newcommand{\qq}{\mathcalb{q}}%
\newcommand{\pps}{\mathcalb{p}_0}%
\newcommand{\qqs}{\mathcalb{q}_0}%
\newcommand{\ppe}{\mathcalb{p}_1}%
\newcommand{\qqe}{\mathcalb{q}_1}%
\newcommand{\uus}{\mathcalb{u}_0}%
\newcommand{\wC}{\Mh{\mathcalb{w}}}
\newcommand{\wX}[1]{\wC\pth{#1}}
\newcommand{\WidthChar}{\omega}
\newcommand{\WidthX}[1]{\WidthChar\pth{#1}}
\newcommand{\widthY}[2]{{\WidthChar}^{}_{#2}\pth{#1}}
\newcommand{\Grid}{H}
\newcommand{\Term}[1]{\textsf{#1}}
\newcommand{\DAG}{\Term{DAG}\xspace}%
\newcommand{\G}{\Mh{G}}%
\newcommand{\VV}{\Mh{V}}%
\newcommand{\EE}{\Mh{E}}%
\newcommand{\btl}{\Mh{\mathcalb{b}}}%
\providecommand{\VV}{\Mh{\mathsf{V}}}%
\newcommand{\VX}[1]{\VV\pth{#1}}
\newcommand{\EGX}[1]{\EE\pth{#1}}
\newcommand{\doggy}{\Mh{R}}%
\newcommand{\doggyY}[2]{\doggy\pth{#1, #2} }
\newcommand{\elevC}{\Mh{\mathcalb{e}}}%
\newcommand{\elevCS}{\Mh{\mathcalb{e}'}}%
\newcommand{\elevD}{\Mh{\overline{\mathcalb{e}}}}%
\newcommand{\elevY}[2]{\elevC\pth{#1,#2}}
\newcommand{\elevX}[1]{\elevC\pth{#1}}
\newcommand{\elevSY}[2]{\elevCS\pth{#1,#2}}
\newcommand{\elevDY}[2]{\elevD\pth{#1,#2}}
\newcommand{\repX}[1]{\mathrm{rep}\pth{#1}}%
\newcommand{\seg}{\Mh{\tau}}
\newcommand{\segA}{\Mh{\nu}}
\newcommand{\monoX}[1]{\mathrm{mono}\pth{#1}}%
\newcommand{\wFX}[1]{\mathcalb{w}_{\FrChar}\pth{#1}}
\newcommand{\hippoX}[1]{\mathcalb{h}\pth{#1}}%
\newcommand{\lb}{\mathcalb{l}}%
\newcommand{\lbY}[2]{\lb\pth{#1, #2}}%
\newcommand{\slackX}[1]{\Delta\pth{#1}}%
\newcommand{\mrp}{\mathcalb{m}}%
\newcommand{\MrpY}[2]{\mathcal{M}_{#1,#2}}%
\newcommand{\MrpMY}[2]{\mathcal{M}^+_{#1,#2}}%
\newcommand{\eps}{\varepsilon}
\newcommand{\VE}{\textsf{VE}\xspace}
\newcommand{\VEFrechet}{\VE{}-\Frechet{}\xspace}%
\newcommand{\GVEC}{\G_{\VE}}%
\newcommand{\GVEY}[2]{\GVEC \pth{#1, #2}}
\newcommand{\VEmorphing}{\mathcalb{m}_\VE}%
\newcommand{\morphVE}[2]{\mrp_\VE\pth{#1,#2}}%
\newcommand{\relCC}{\mathsf{N}}
\newcommand{\relCZ}[3]{\mathsf{N}_{#1}\pth{#2, #3}}
\newcommand{\CDTW}{\Term{CDTW}\xspace}
\newcommand{\costX}[1]{\mathrm{cost}\pth{#1}}
\newcommand{\dX}[1]{\mathop{\mathrm{d}}\!#1}
\newcommand{\dTWY}[2]{\distC_{\CDTW}\pth{#1,#2}}
\newcommand{\asinh}{\mathop{\mathrm{asinh}}}
\newcommand{\pnt}{\mathsf{p}}
\newcommand{\pntA}{\mathsf{q}}
\newcommand{\signX}[1]{\mathrm{sign}\pth{#1}}
\newcommand{\lbSDY}[2]{\ell\pth{#1,#2}}%
\newcommand{\Python}{\texttt{Python}\xspace}
\newcommand{\Julia}{\texttt{Julia}\xspace}
\newcommand{\CPP}{\texttt{C++}\xspace}
\providecommand{\si}[1]{#1}
\renewcommand{\si}[1]{#1}
\newcommand{\Cells}{\Xi}
\newcommand{\AlgorithmI}[1]{{%
      \textcolor[named]{AlgorithmColor}{\texttt{\bf{#1}}}%
   }}
\newcommand{\Alg}{\AlgorithmI{comp{}Profile}\xspace}
\newcommand{\extract}{\AlgorithmI{extract}\xspace}
\newcommand{\dirY}[2]{\overrightarrow{ #1 #2}}
\newcommand{\priceX}[1]{\mathrm{price}\pth{#1}}%
\title{The \Frechet Distance Unleashed: Approximating a Dog with a Frog}%
   \author{%
      Sariel Har-Peled%
      \SarielThanks{Work on this paper was partially supported by NSF AF award CCF-2317241.}%
      \and%
      Benjamin Raichel%
      \BenThanks{Work on this paper was partially supported by NSF CCF award 2311179.}%
      \and%
      Eliot W. Robson%
      \EliotThanks{}%
   }%
   \author{Sariel Har-Peled}%
   {Department of Computer Science, University of Illinois, 201 N. Goodwin Avenue, Urbana, IL 61801, USA %
      \and \url{https://sarielhp.org}%
   }%
   {sariel@illinois.edu}%
   {https://orcid.org/0000-0003-2638-9635}%
   {Work on this paper was partially supported by an NSF AF award CCF-2317241.}%
   \author{Benjamin Raichel}{Department of Computer Science; University of Texas at Dallas; 800 W. Campbell Road; Richardson, TX 75080, USA \and \url{http_//utdallas.edu/\string~benjamin.raichel} } {benjamin.raichel@utdallas.edu}%
   {{https://orcid.org/0000-0001-6584-4843}}%
   {Work on this paper was partially supported by NSF AF Award CCF-2311179.}
   \author{Eliot W. Robson}%
   {Department of Computer Science; University of Illinois; 201 N. Goodwin Avenue; Urbana, IL, 61801, USA \and \url{https://eliotwrobson.github.io/}%
   }%
   {erobson2@illinois.edu} {https://orcid.org/0000-0002-1476-6715}%
   {Work on this paper was partially supported by an NSF AF award CCF-2317241.}%
   \authorrunning{S. Har-Peled, B. Raichel, and E. W. Robson}
   \keywords{Curve similarity, \Frechet distance}%
\date{\today}
\begin{document}

\SoCG{%
   \relatedversion{\url{https://arxiv.org/abs/2407.03101}}%
}%
\maketitle

\begin{abstract}
    We show that a variant of the continuous \Frechet{} distance between polygonal curves can be computed using essentially the same algorithm used to solve the discrete version. The new variant is not necessarily monotone, but this shortcoming can be easily handled via refinement.

    Combined with a Dijkstra/Prim type algorithm, this leads to a realization of the \Frechet distance (i.e., a morphing) that is locally optimal (aka locally correct), that is both easy to compute, and in practice, takes near linear time on many inputs.  The new morphing has the property that the leash is always as short as possible.  These matchings/morphings are more natural and are better than the ones computed by standard algorithms -- in particular, they handle noise more graciously.  This approach should make the \Frechet distance more useful for real-world applications.

    We implemented the new algorithm and various strategies to obtain reasonably fast practical performance. We performed extensive experiments on our new algorithm, and released publicly available (and easily installable and usable) \Julia and \Python packages. Our algorithms can be used to compute the almost-exact \Frechet distance between polygonal curves.

    Implementations and numerous examples are available here: \href{https://frechet.xyz}{frechet.xyz}.

    We emphasize, however, that the existing state-of-the-art algorithm/implementation in \CPP is faster, by several orders of magnitude, than our current algorithm/implementation.
\end{abstract}

\section{Introduction}

\subsection{Definitions}

Given two polygonal curves, their \Frechet distance is the length of a leash that a person needs if they walk along one of the curves. In contrast, a dog connected by the leash walks along the other curve, assuming they synchronize their walks to minimize the length of this leash. (I.e.\ they walk to minimize their maximum distance apart during the walk.)  Our approach is slightly different than the standard approach, and we define it carefully first.

\subsubsection{Free space diagram and morphings}

\begin{definition}
    \deflab{u_parameter}%
    For a (directed) curve $\cA \subseteq \Re^d$, its \emphi{uniform parameterization} is the bijection $\cA:[0,\lenX{\cA}] \rightarrow \cA$, where $\lenX{\cA}$ is the length of $\cA$, and for any $x \in [0,\lenX{\cA}]$, the point $\cA(x)$ is at distance $x$ (along $\cA$) from the starting point of $\cA$.
\end{definition}

\begin{definition}
    \deflab{f_s_d}%
    The \emphw{free space diagram} of two curves $\curveA$ and $\curveB$ is the rectangle $\doggy =\doggyY{\curveA}{\curveB} = [0, \lenX{\curveA}] \times [0, \lenX{\curveB}]$. Specifically, for any point $(x,y) \in \doggy$, we associate the \emphi{elevation function} $\elevY{x}{y} = \dY{\curveA(x)}{-\curveB(y)}$.
\end{definition}

The free space diagram $\doggy$ is partitioned into a non-uniform grid, where each cell corresponds to all leash lengths when a point lies on a fixed segment of one curve, and the other lies on a fixed segment of the other curve, see \figref{Frechet_ex}.  For a given value $\delta \geq 0$, the \emph{sublevel set} of a real-valued function consists of all inputs whose function value is $\leq \delta$. It is known that for any value $\delta \geq 0$, the sublevel set of the elevation function inside such a grid cell is a clipped ellipse.

\begin{definition}
    A \emphi{morphing}\footnote{A morphing induces a natural homotopy between the two curves.} $\mrp$ between $\cA$ and $\cB$ is a (not self-intersecting) curve $\mrp \subseteq \doggyY{\curveA}{\curveB}$ with endpoints $(0,0)$ and $(\lenX{\cA},\lenX{\cB})$. The set of all morphings between $\cA$ and $\cB$ is $\MrpY{\cA}{\cB}$.  A morphing that is a segment inside each cell of the free space diagram that it visits, is \emphw{well behaved}\footnote{All the morphings we deal with in this paper are well behaved.  }.
\end{definition}

Intuitively, a morphing is a reparameterization of the two curves, encoding a synchronized motion along the two curves. That is, for a morphing $\mrp \in \MrpY{\cA}{\cB}$, and $t \in [0,\lenX{\mrp}]$, this encodes the configuration, with a point $\cA\bigl( x(\mrp(t)) \bigr) \in \cA$ matched with $\cB\bigl( y(\mrp(t)) \bigr) \in \cB$. The \emphi{elevation} of this configuration is
\begin{math}
    \elevC(t) =%
    \elevX{ \mrp(t) } =%
    \dY{ \cA\bigl( x(\mrp(t)\bigr) }{-\cB\bigl( y(\mrp(t)\bigr) }.
\end{math}

\subsubsection{\Frechet distance}

\begin{definition}%
    \deflab{width_f}%
    The \emphi{width} of a morphing $\mrp$ between $\cA$ and $\cB$ is
    \begin{math}
        \WidthX{\mrp} = \max_{ t \in [0,\lenX{\mrp}] } \elevC( t ).
    \end{math}
    The \emphi{\Frechet distance} between the two curves $\cA$ and $\cB$ is
    \begin{align*}
      \distFr{\cA}{\cB} = \min_{\mrp \in \MrpMY{\cA}{\cB}}
      \WidthX{\mrp},
    \end{align*}
    where $\MrpMY{\cA}{\cB} \subseteq \MrpY{\cA}{\cB}$ is the set of all $x/y$-monotone morphings.
\end{definition}

Conceptually, the \Frechet distance is the problem of computing the minimum bottleneck matching between two curves, respecting the order and continuity of the curves.\footnote{Formally, since the reparameterization is not one-to-one, this is not quite a matching. One can restrict to using only such bijections, with no adverse effects, but it adds a level of tediousness, which we avoid for the sake of simplicity of exposition.} Alternatively, it is an $L_\infty$-norm type measure of the similarity between two curves. It thus suffers from sensitivity to outliers. Furthermore, even if only a small portion of the morphing requires a long leash, the measure, and the algorithms computing it, may use this long leash in large portions of the walk, generating a matching that is loose in many places, see \figref{f_bad}.

Observe that the \Frechet distance is the minimum value such that the sublevel set of the elevation function has an $x/y$-monotone path from $(0,0)$ to $(\lenX{\curveA}, \lenX{\curveB})$ in $\doggy$.
\begin{figure}
    \phantom{}\hfill%
    \includegraphics[width=0.4\linewidth,height=3cm]{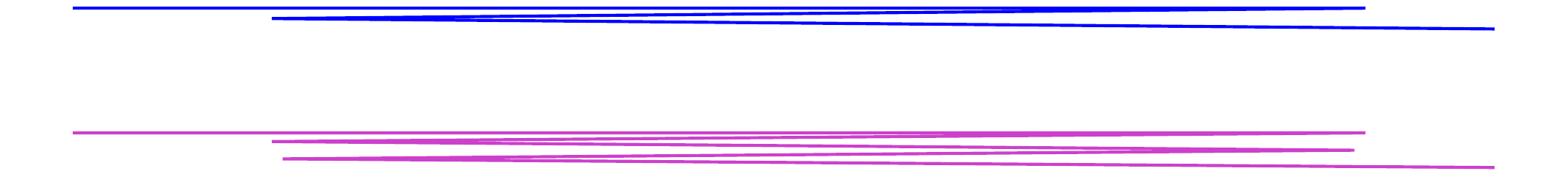} \hfill%
    \includegraphics[width=0.4\linewidth]{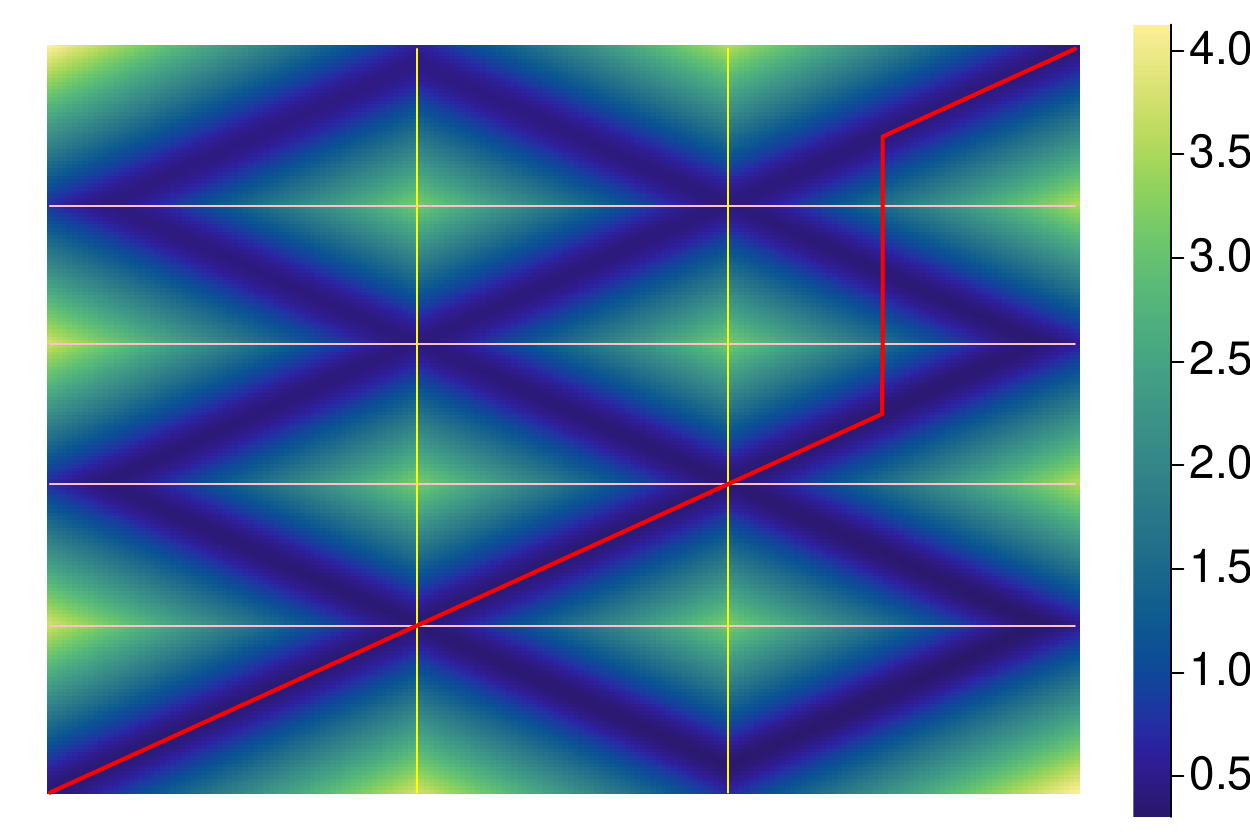}%
    \hfill%
    \phantom{}%
    \caption{Two curves, their free space diagram (and the associated elevation function), and the optimal \Frechet morphing between the two curves encoded as an $x/y$-monotone curve.  More illustrations and animations of this example are available \href{\baseUrlX{09}}{here}.  }%
    \figlab{Frechet_ex}
\end{figure}

\begin{figure}
    \centering
    \begin{minipage}{0.45\textwidth}
        \includegraphics[width=\linewidth,page=1]{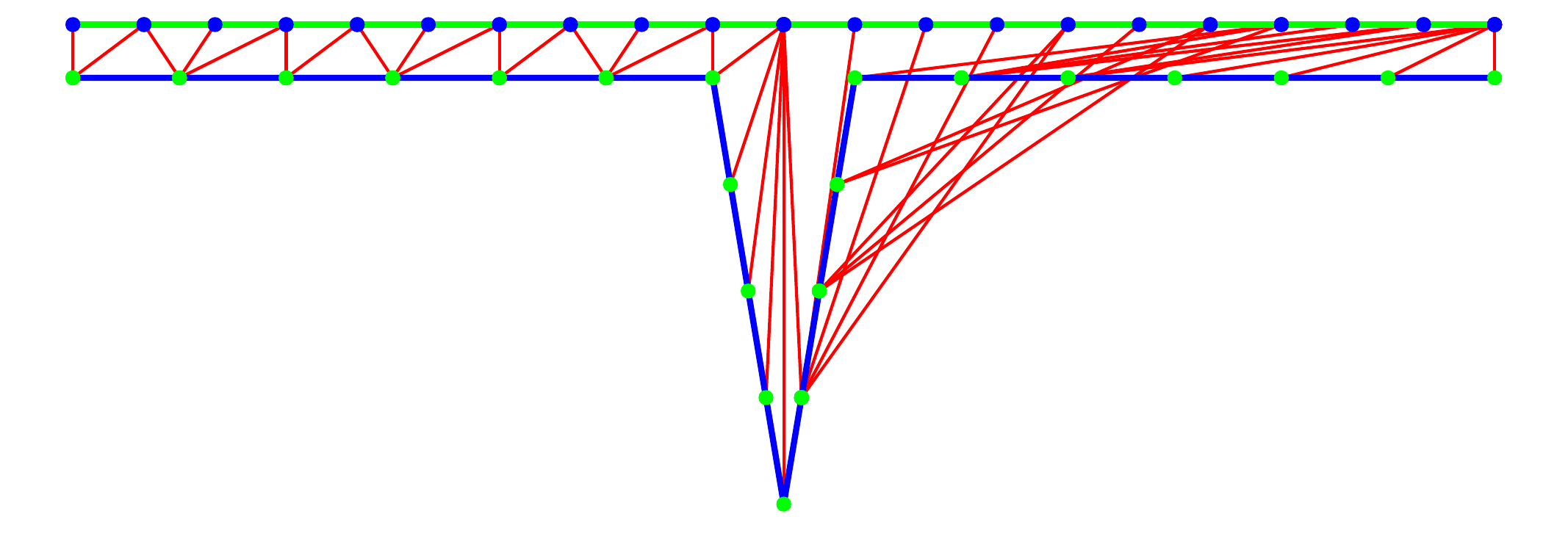}
        \begin{minipage}{0.9\textwidth}
            \caption*{The classical (discrete) \Frechet morphing, caring only about the maximum leash length.}
        \end{minipage}
    \end{minipage}
    ~
    \begin{minipage}{0.45\textwidth}
        \includegraphics[width=\linewidth,page=1]{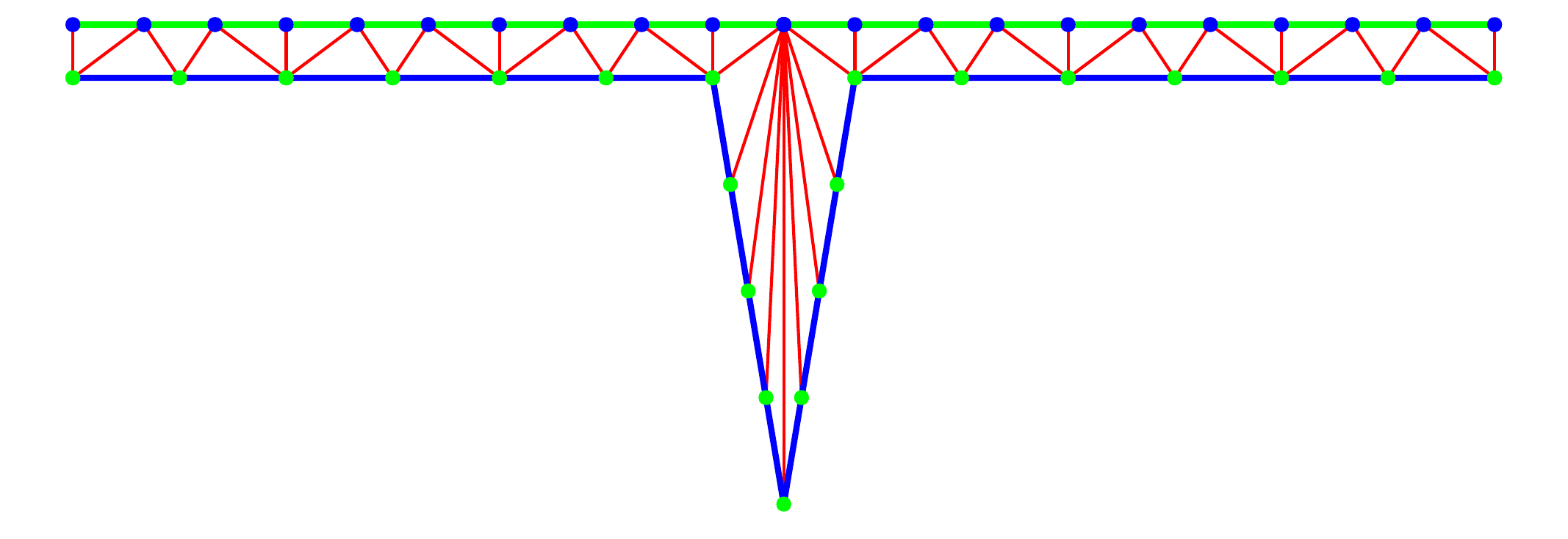}
        \begin{minipage}{0.9\textwidth}
            \caption*{The retractable discrete \Frechet morphing, using the shortest leash possible at each point.  }
        \end{minipage}
    \end{minipage}

    \caption{A comparison between the classical and retractable \Frechet{} distances. Observe that the morphing generated by the classical distance can be quite loose in many places. An animation of both morphings is available \href{\baseUrlX{01}}{here}.}
    \figlab{f_bad}
\end{figure}
\subsection{Background}

Alt and Godau \cite{ag-cfdbt-95} presented a rather involved $O(n^2 \log n)$ time algorithm to compute the \Frechet distance using parametric search. The parametric search can be removed by using randomization, giving a simpler algorithm as shown in \cite{hr-fdre-14}. Buchin \etal \cite{bblmm-cfdrl-16} presented an alternative algorithm for computing the \Frechet distance that replaces the decision procedure by using a data structure to maintain appropriate lower envelopes.  Despite some simplifications, all these algorithms are somewhat involved.

Unfortunately, it is believed this problem requires quadratic time in the worst case, although a logarithmic speedup is possible; see \cite{bbmm-fswdi-17} and references therein.  The quadratic time can be improved for realistic inputs by assuming that the input is ``nice'', and introducing approximation, but the resulting algorithms are still not simple \cite{dhw-afdrc-12}.

\subsubsection{Variants of the \Frechet distance}

\myparagraph{Weak \Frechet distance.}

The \emphi{weak \Frechet distance} allows morphings where the agents are allowed to go back and traverse portions of the curve visited previously (i.e., the morphing does not have to be $x/y$-monotone).  Since the weak \Frechet distance allows considering more parameterizations, it is potentially smaller, and we have that $\distWFr{\cA}{\cB} \leq \distFr{\cA}{\cB}$, for any two curves $\cA, \cB$.  Elegantly, Alt and Godau \cite{ag-cfdbt-95} showed that the weak \Frechet distance can be reduced to computing the minimum spanning tree of an appropriate graph. Unfortunately, there does not seem to be a natural way to overcome this non-monotonicity (and thus get the ``strong'' version).

\myparagraph{Discrete \Frechet distance.}

The complexity of these algorithms, together with the sensitivity of the \Frechet distance to noise, leads to using ``easier'' related measures, such as the discrete version of the problem, and dynamic time-warping (discussed below).  In the discrete version, you are given two sequences of points $p_1, \ldots, p_n$, and $q_1, \ldots, q_m$, and the purpose is for two ``frogs'' starting at $p_1$ and $q_1$, respectively, to jump through the points in the sequence until reaching $p_n, q_m$, respectively, while minimizing the maximum distance between the two frogs during this traversal. At each step, only one of the frogs can jump from its current location to the next point in its sequence (no jumping back). Computing the optimal distance under this measure can be done by dynamic programming, similar to the standard approach to edit distance.  Indeed, the configuration space here is the grid $\Grid = \IRX{n} \times \IRX{m}$, where $\IRX{n } = \{1,\ldots, n\}$.

To use the discrete version in the continuous case, one sprinkles enough points along both input curves and then solves the discrete version of the problem. Beyond the error this introduces, to get a distance that is close to the standard \Frechet distance, one has to sample the two curves quite densely in some cases.

For the (monotone) discrete \Frechet distance, the induced graph on the grid $\Grid$ is a \DAG, and the task of computing the \Frechet distance is to find a minimum bottleneck path from $(1,1)$ to $(n,m)$, where the weights are on the vertices.  Here, the weight on the vertex $(i,j)$ is the distance $\dY{p_i}{-q_j}$.  In particular, a \Frechet morphing is an $x/y$-monotone path in $\Grid$ from $(1,1)$ to $(n,m)$. The standard algorithm to do this traverses the grid, say, by increasing rows $i$, and in each row by increasing column $j$, such that the value at $(i,j)$ is the maximum of the length of the leash of this configuration, together with the minimum solution for $(i-1,j)$ and $(i, j-1)$. This algorithm leads to a straightforward, $O(n m)$ time algorithm for the discrete \Frechet distance. However, the morphing computed might be inferior, see \figref{f_bad} for such a bad example.

\myparagraph{Retractable \Frechet.}

For simplicity, assume that the pairwise distances between all pairs of points in the two sequences are unique. We would like to imagine that we have a retractable leash that can become shorter at times, and the leash ``resists'' being longer than necessary. It is thus natural to ask for a morphing where the leash is as short as possible at any point in time.

Informally, the optimal \emphw{retractable \Frechet morphing} between the two sequences includes the bottleneck configuration, realizing the \Frechet distance, in the middle of its path, and the two subpaths from the endpoints to this configuration have to be also recursively optimal. This concept is formally defined and described in \secref{r_d_f_dist}.  This concept was introduced by Buchin \etal \cite{bbms-lcfm-19}. Interestingly, they show that the discrete version can be computed in $O( nm )$ time, but unfortunately, the algorithm is quite complicated. They also show that the continuous retractable \Frechet can be computed in $O(n^3 \log n)$ time.

Buchin \etal \cite{bbms-lcfm-19} refers to this \Frechet distance as \emph{locally correct}, but we prefer the \emph{retractable} labeling.  The term ``retractable \Frechet'' was used by Buchin \etal \cite{bblmm-cfdrl-16}, but in a different (and not formally defined) context than ours.

\myparagraph{(Continuous) Dynamic Time Warping.}

One way to get less sensitivity for noise is to compute the total area ``swept'' by the leash as the walk is being performed. In the discrete case, we add up the lengths of the leashes during the configurations in the walk. There is also work on extending this to the continuous setting \cite{mp-cdtwt-99, bbknp-klmct-20}. For the continuous case, this intuitively boils down to computing (or approximating) an integral along the morphing.  See \secref{sweep_distance} for more details.

\subsubsection{Critical events}
\seclab{event_types}

The standard algorithm for computing the \Frechet distance works by performing a ``binary'' search for the \Frechet distance. Given a candidate distance, it constructs a ``parametric'' diagram that is a grid, where inside each grid cell the feasible region is a clipped ellipse. The task is then to decide if there is an $x/y$-monotone path from the bottom left corner to the top right corner, which is easily doable. The critical values the search is done over are: \medskip%
\begin{compactenumI}
    \item \emphw{Vertex-vertex events}: The distance between two vertices of the two curves,
    \item \emphw{Vertex-edge events}: The distance between a vertex of one curve and an edge of the other.

    \item \emphw{Monotonicity events}: This is the minimum distance between a point on one edge $\edge$ of the curves, and (maximum distance to) two vertices $u,v$ of the other curve. Specifically, it is realized by the point on $\edge$ with equal distance to $u$ and $v$.
\end{compactenumI}
\medskip%
The first two types of events are easy to handle, but the monotonicity events are the bane of the algorithms for the \Frechet distance.

\subsubsection{Algorithm engineering the \Frechet{} distance}

Given the asymptotic complexity and involved implementations of the aforementioned algorithms, there has been substantial work on practical aspects of computing the \Frechet{} distance.  In particular, in 2017, ACM SIGSPATIAL GIS Cup had a programming challenge to implement algorithms for computing the \Frechet distance. See \cite{wo-asgcr-2018} for details.

More recently, Bringmann \etal \cite{bkn-wdfpa-21} presented an optimized implementation of the decider for the \Frechet{} distance.  Somewhat informally, Bringmann \etal \cite{bkn-wdfpa-21} builds a decider for the \Frechet distance using a $kd$-tree over the free space diagram, keeping track of the reachable regions on the boundary of each cell, refining cells by continuing down the (virtual) $kd$-tree if needed.

\subsection{Our results}

\subsubsection{Result I: A new algorithm for retractable discrete \Frechet}

We observe that a natural approach to compute the retractable \Frechet morphing is to modify Dijkstra's/Prim's algorithm so that it solves the minimum bottleneck path problem. This observation leads to a simpler (but log factor slower) algorithm for computing it. The only modification of Dijkstra necessary is that one always handles the cheapest edge coming out of the current cut induced by the set of vertices already handled.  (In the discrete \Frechet case, the weights are on the vertices, but this is a minor issue.)  This modified version of Dijkstra is well known. Still, we include, for the sake of completeness, the proof showing that it indeed computes the recursively optimal path, which is also a retractable \Frechet morphing between the two sequences.  This algorithm results in better and more natural morphings, see \figref{f_bad}.

Maybe more importantly, in practice, one does not need to explore the whole space of $n m$ configurations (since we are in the discrete case, a configuration $(i,j) \in \IRX{n}\times\IRX{m}$ encodes the matching of $p_i$ with $q_j$), as the algorithm can stop as soon as it arrives at the destination configuration $(n,m)$.  Informally, if the discrete \Frechet distance is ``small" compared to the vast majority of pairwise distances (i.e., the two sequences are similar), then the algorithm only explores a small portion of the configuration space. Thus, this leads to an algorithm that is faster than the standard algorithm in many natural cases, while computing a significantly better output morphing.

\begin{figure}[t]
    \phantom{}%
    \hfill%
    \includegraphics[width=0.2\linewidth]{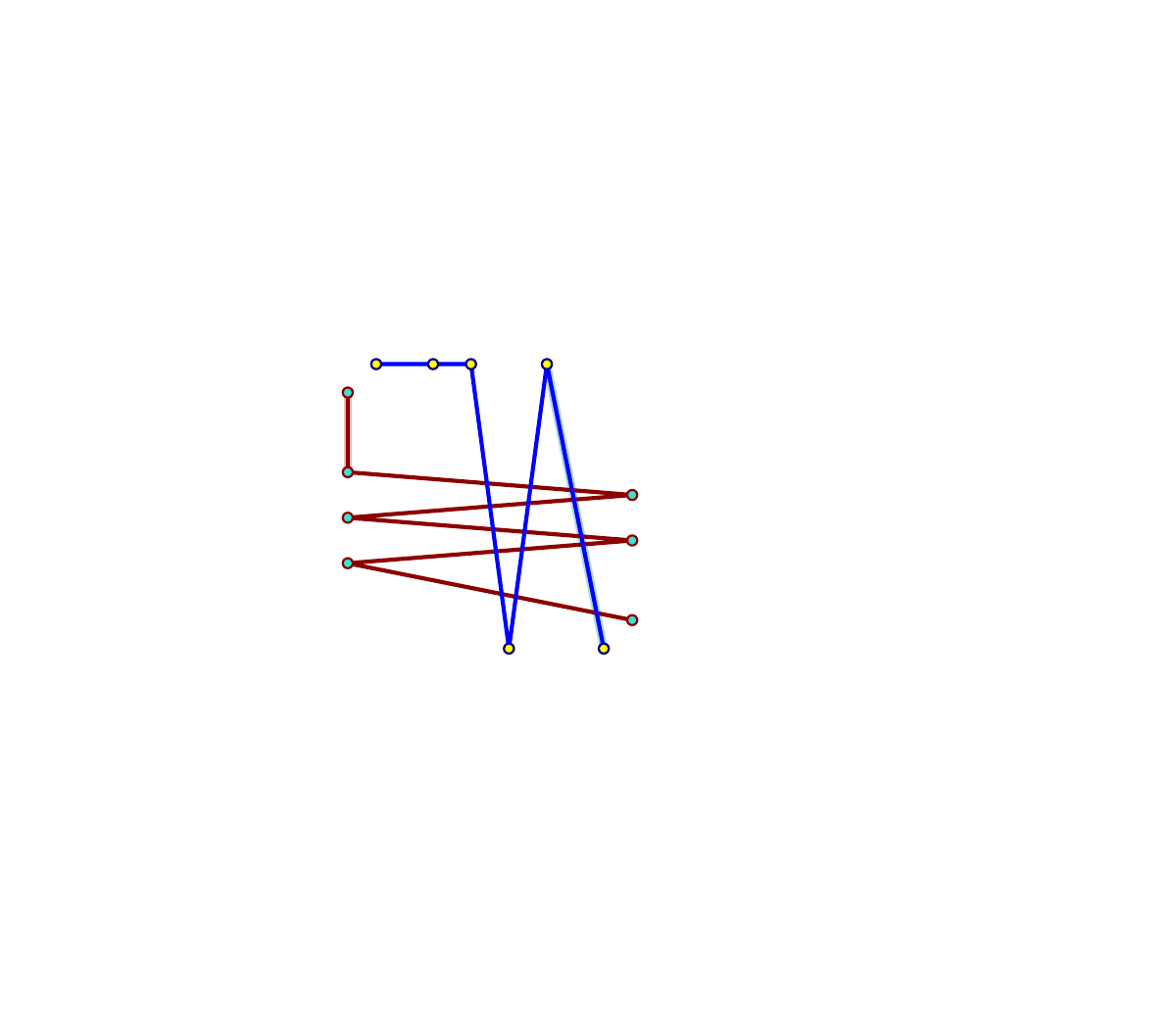}%
    \hfill%
    \includegraphics[width=0.3\linewidth]{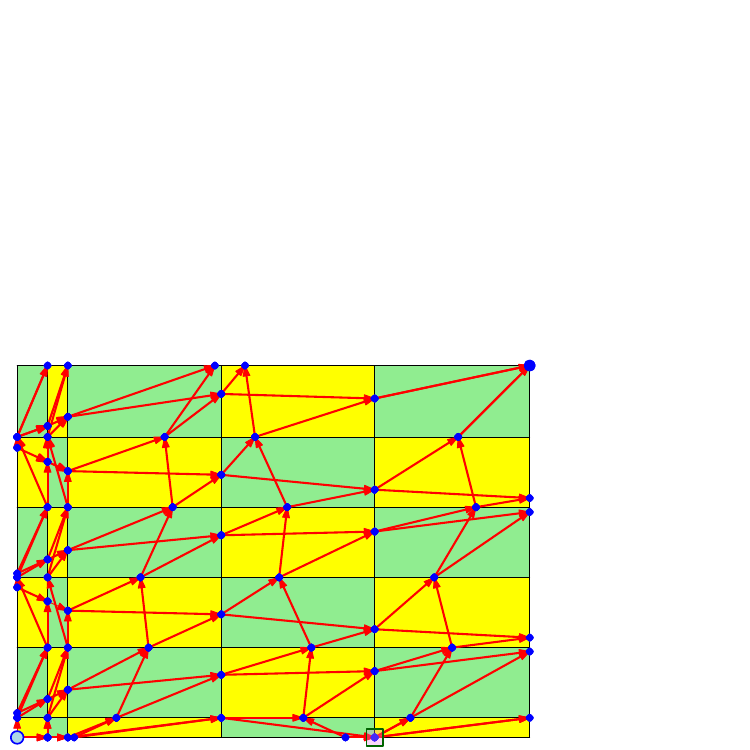}%
    \hfill%
    \includegraphics[width=0.38\linewidth]{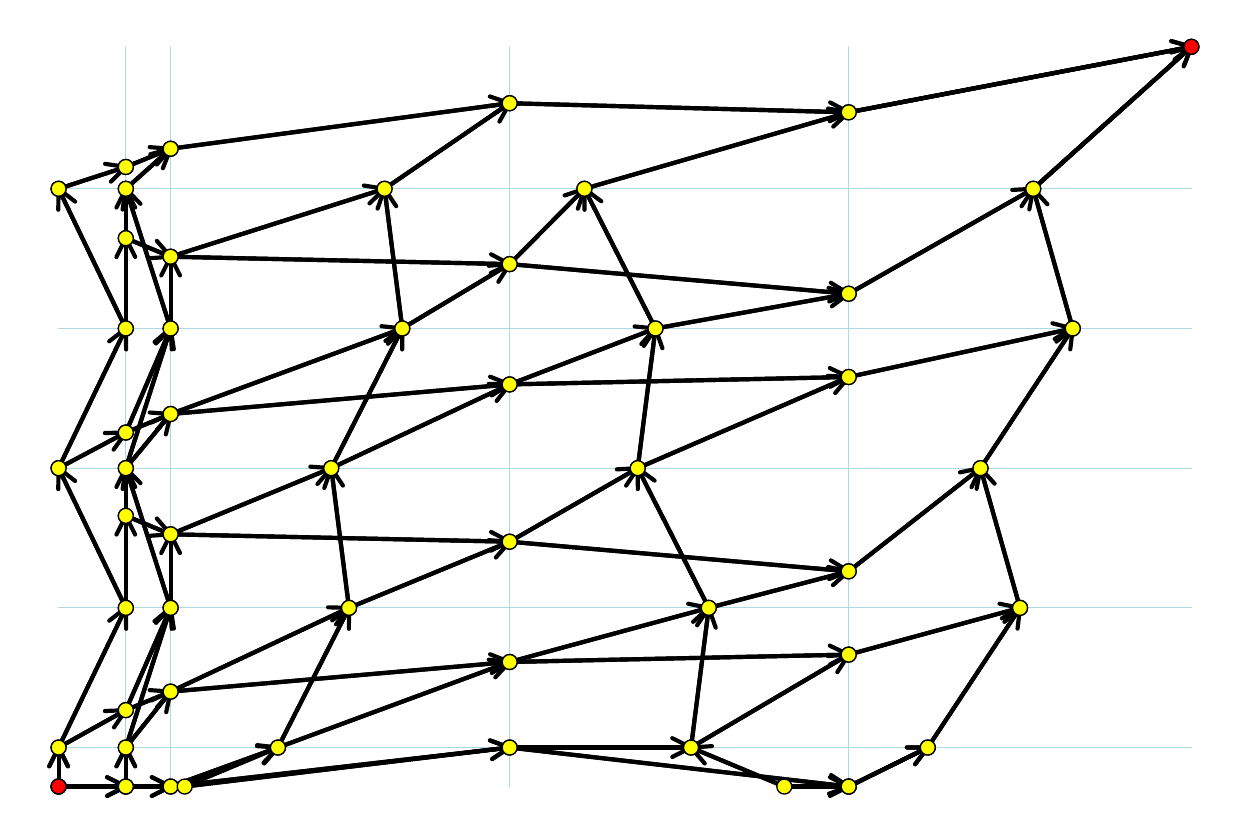} \hfill%
    \phantom{}%
    \caption{Two curves, and their associated \VEFrechet graph. %
       Note that every internal edge of the grid contains a portal (i.e., a vertex of the graph), but in many cases the portals of two edges are co-located on their common vertex (see the square-marked vertex in the middle figure). For the edges adjacent to the starting and ending corners, we set their portals to lie at the corners themselves.  See \href{\baseUrlX{10}}{here} for more info about a similar example.}
    \figlab{v_e_g_1}
\end{figure}

\subsubsection{Result II: A new distance and algorithm: \VEFrechet}

It is not clear how to extend the above to the continuous case.  A natural first step is to consider the continuous \Frechet distance, where one restricts the solution inside each cell of the free space diagram to be a segment (which is already the case for the morphings computed by existing algorithms), but more importantly, insisting that the shape of this segment must be determined only locally, thus facilitating a greedy strategy compatible with the retractable approach. In practical terms, we throw away the (global) monotonicity events.

\myparagraph{Traveling only through vertex-edge events.}

Because of the continuity and strict convexity of the elevation function, the function has a unique minimum on each edge of the free space diagram grid -- geometrically, this is the minimum distance between a vertex of one curve, and an edge of the other curve (a \emph{vertex-edge event}). We restrict our solution to enter and leave a cell only through these \emphw{portals} (which are easy to compute). The continuous configuration space now collapses to a discrete graph that is somewhat similar to the natural grid graph on $\IRX{n} \times \IRX{m}$. Indeed, a grid cell has four portals on its boundary edges. Specifically, there are directed edges from the portal on the bottom edge to the portal on the top and right edges of the cell. Similarly, there are edges from the portal on the left edge to the portals lying on the top and right edges. As before, the values are on the portals, and our purpose is to compute the optimal bottleneck path in this grid-like graph. Examples of this graph are depicted in \figref{v_e_g_1}, \figref{v_e_g_2} and \figref{v_e_g_3}.

\begin{remark}
    Munich and Perona \cite{mp-cdtwt-99} used a similar idea, but they used it in the other direction -- namely, in defining a better \CDTW distance for two discrete sequences. However, this idea was already present (implicitly) in the original work of Alt and Godau \cite{ag-cfdbt-95} -- indeed, their algorithm for the Weak \Frechet distance uses only the Vertex-Edge events (i.e., edges in the free space diagram).  This problem boils down to solving the bottleneck shortest path problem in an undirected graph.  In this case, this problem can be solved by computing the minimum spanning tree (e.g., by Prim's algorithm, which is a variant of Dijkstra's algorithm), as Alt and Godau do. For the directed case, one needs to use a variant of Dijkstra's algorithm \cite{gt-atbop-88} -- see \lemref{retractable}.  See also Buchin \etal \cite{bbdfj-ctmc-17} who also used a similar idea.
\end{remark}

We can now run the retractable bottleneck shortest-path algorithm (i.e., the variant of Dijkstra described above) on this implicitly defined graph, computing the vertices and edges of it, as they are being explored. For many natural inputs, this algorithm does not explore a significant fraction of the configuration space, as it involves distances that are significantly larger than the maximum leash length needed. The algorithm seems to have near-linear running time for many natural inputs.  The \emphi{\VEFrechet{}} morphing is the one induced by the computed path in this graph. Unfortunately, the \VEFrechet{} morphing might allow the agents to move backwards on an edge, but importantly, the motion across a vertex is monotone. Namely, the \VEFrechet is monotone for vertices, but not necessarily monotone on the edges. A vertex is thus a \emph{checkpoint} that once passed, cannot be crossed back.

\subsubsection{Result III: New algorithm for the regular \Frechet distance}

The natural question is how to use the (easily computable) \VEFrechet morphing to compute the optimal (regular) continuous \Frechet distance.  We next describe how this can be done in practice.

\myparagraph{The hunt for a monotone morphing.}  We denote the \VEFrechet distance between two curves $\cA$ and $\cB$ by $\distVEFr{\cA}{\cB}$. Clearly, we have that $\distWFr{\cA}{\cB} \leq \distVEFr{\cA}{\cB} \leq \distFr{\cA}{\cB}$.

One can, of course, turn any morphing into a monotone one by staying put instead of moving back. This approach is appealing for \VEFrechet, as the corresponding \VE morphing $\mrp$, say between two curves $\cA$ and $\cB$, never backtracks over vertices (only edges), so we already expect the error this introduces to be relatively small.  Let $\mrp^+$ denote the monotone morphing resulting from this simple strategy. Observe that
\begin{equation*}
    \WidthX{\mrp}
    =
    \distVEFr{\cA}{\cB}
    \leq
    \distFr{\cA}{\cB}
    \leq
    \WidthX{\smash{\mrp^+}}.
\end{equation*}
In particular, if $ \WidthX{\mrp} = \WidthX{\mrp^+} $, then $\distFr{\cA}{\cB}$ is realized by $\mrp^+$, and we have computed the \Frechet distance between $\cA$ and $\cB$.

A less aggressive approach is to introduce new vertices in the middle of the edges of $\cA$ and $\cB$ as to enforce monotonicity. Indeed, clearly, if we refine both curves by repeatedly introducing vertices into them, the \VEFrechet distance between the two curves converges to the \Frechet distance between the original curves, as introducing a vertex in the middle of an edge does not change the regular \Frechet distance, while preventing the \VEFrechet morphing from backtracking over this point.

We refer to this process of adding vertices to the two curves as \emphi{refinement}. (See \figref{f_refinement}.) In practice, in many cases, one or two rounds of (carefully implemented) refinement are enough to isolate the maximum leash in the morphing from the non-monotonicity, and followed by the above brute-force monotonization leads to the (practically) optimal \Frechet distance. Even for pathological examples, after a few more rounds of refinement, this process computes the almost-exact \Frechet distance. That is, the computed lower bound, which is the \VEFrechet distance, is equal to the width of the computed monotone morphing, which is the \Frechet distance.

\begin{remark}[Almost-exact: Floating point issues]
    \remlab{issues}%
    As we are implementing our algorithm using floating point arithmetic, and the calculation of the optimal \Frechet distance involves distances, imprecision is unavoidable. A slight improvement in precision can be achieved by using squared distances (and also slightly faster code) --- but for simplicity, we have not used this idea in our code. In particular, we take the somewhat pragmatic view that an approximation to the optimal up to a factor of (say) $1.00001$ can be considered as computing the ``optimal'' solution. We refer to such solutions as being \emphi{almost-exact}.

    Note that \Frechet morphings are somewhat less sensitive to numerical issues than other geometric problems --- indeed, once a morphing is computed, one can calculate its width directly.
\end{remark}

\begin{remark}[What if one wants the exact distance?]
    As pointed out above, our algorithm computes the almost-exact \Frechet distance, and in practice, there is no difference to the exact \Frechet distance (and in many cases they seem to coincide). Nevertheless, what if one insists on the exact \Frechet distance?

    The refinement process can be modified to compute the monotonicity events on the regions on the curves where monotonicity is being violated. This change would readily lead to an exact algorithm, though we did not pursue it any further.
    See the paper \cite{chr-cefdf-25}, who gave an exact implementation of our algorithm with some additional improvements.
\end{remark}

\subsubsection{Result IV: Computing the \Frechet distance quickly for real inputs}

The above leaves us with a natural strategy for computing the \Frechet distance between two given curves. Compute quickly, using simplification, a morphing between the two input curves, and maintain (using \VEFrechet, for example) both upper and lower bounds on the actual \Frechet distance. By carefully inspecting the morphing, (re)simplifying the curves in a way that is sensitive to their (local) \Frechet distance, and recomputing the above bounds, one can get an improved morphing. Repeat this process potentially several times till the upper and lower bounds meet, at which point the optimal \Frechet distance has been computed.

This approach seems somewhat overkill, but it enables us to compute (in practice) the almost-exact \Frechet distance between huge polygonal curves quickly.

\begin{figure}
    \centerline{\includegraphics[width=0.5\linewidth]{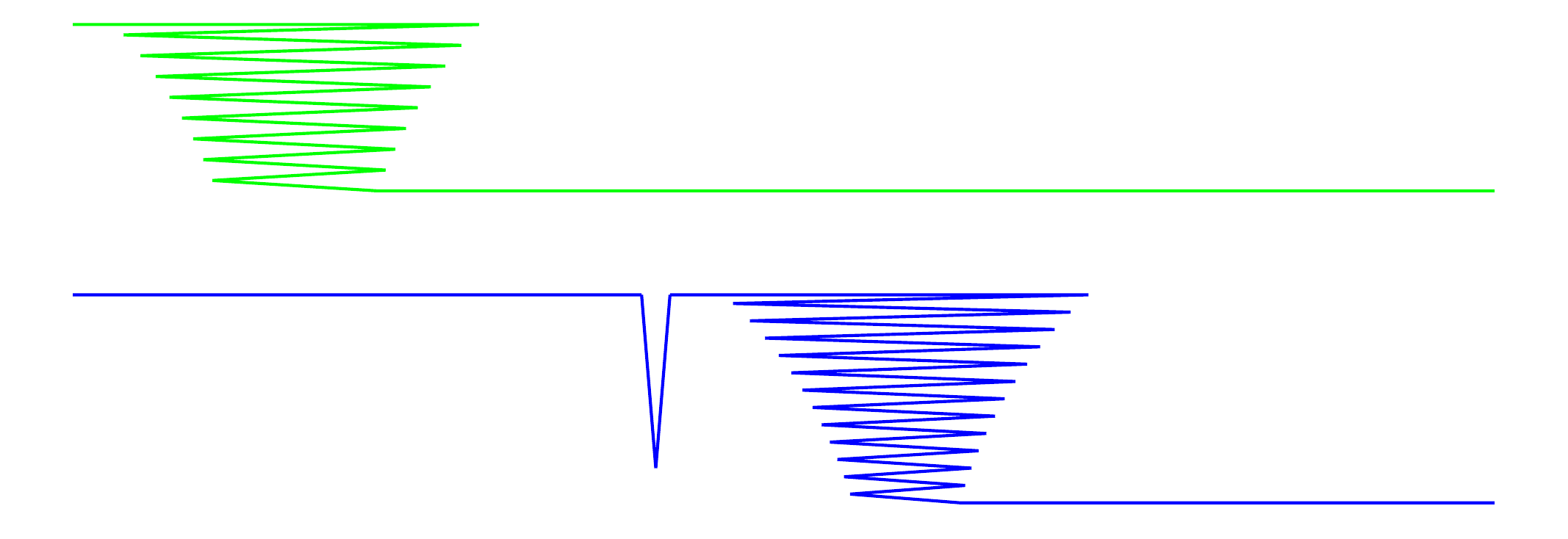}}
    \caption{For these two curves, the solution involves an agent stopping at one point on one curve while the other agent traverses a zig-zag, and vice versa. The algorithm enforces monotonicity by refining the two curves by introducing new vertices. For the results, see \href{\urlBaseX{06}}{here}.}
    \figlab{f_refinement}
\end{figure}

\subsubsection{Contribution: Implementation in \Julia and \Python}

We implemented the above algorithms as official packages in \Python and \Julia. A webpage with animations showing our algorithm in action is available \href{https://frechet.xyz/}{here}. The \Python package is available at \url{https://github.com/eliotwrobson/FrechetLib}, and the \Julia package is available at \url{https://github.com/sarielhp/FrechetDist.jl}.  Animations and examples computed by the new algorithm are available at \url{https://frechet.xyz/}.

\subsubsection{Additional results}

\myparagraph{Sweep distance.}

We demonstrate how one can convert our algorithm for computing \VEFrechet to an algorithm that computes a variant of the \CDTW distance, which we call the \emphw{sweep distance}. One can then use refinement to approximate the \CDTW distance. One can also compute a lower bound on this quantity, and the two quantities converge.  See \secref{sweep_distance} for details.

\myparagraph{Fast simplification.}

We show how to preprocess a curve $\cA$ with $n$ vertices, in $O(n \log n)$ time, such that given a query $w$, one can quickly extract a simplification of $\cAs$ of $\cA$, such that $\distFr{\cA}{\cAs} \leq w$. Importantly, the time to extract $\cAs$ is proportional to its size (i.e., the extraction is output sensitive). More importantly, in practice, this works quite well -- the size of $\cardin{\VX{\cAs}}$ is reasonably close (by a constant factor) to the optimal approximation. Combined with greedy simplification, this yields a very good simplification result. See \secref{fast_simplifier} for details.

\subsection{Significance}

We demonstrate in this paper that a minor variant of the \textbf{\emph{continuous}} \Frechet distance can be computed by a straightforward Dijkstra-type algorithm. Furthermore, for many real-world inputs, it runs in near-linear time (out of the box). Similarly, for many real-world inputs, the computed morphing is monotone, thus realizing the (standard) \Frechet distance. More importantly, because of the \emph{retractable} nature of the morphing computed, the matching computed is more natural, and can handle noise/outliers more gracefully than the matchings calculated by the current algorithms for the \Frechet distance. Indeed, areas that are noise/outliers get isolated in the morphing to a small interval, which can be easily identified and handled, see \figref{f_bad}. We believe that this makes the morphings computed by our new algorithms significantly better for real-world applications than previous algorithms.  There is also previous work on other variants of \Frechet distance that are more robust to uncertainty and outliers \cite{fnpr-fed-24}, and shortcutting \cite{dh-jydfd-13}.

We then show how to modify this algorithm to compute the (monotone) \Frechet distance (in cases when the morphing was not already monotone), and how to make it handle large inputs quickly via simplifications (handling all the technical difficulties this gives rise to).

Finally, we implemented our new algorithms in \Julia and \Python. We made them publicly available as standard packages, thus making the computation of the \Frechet distance accessible to a broader audience.  Using such packages in \Python/\Julia is significantly easier than using any non-pre-compiled code in \CPP.

\smallskip

The combination of simplification, new (and not so new) algorithmic ideas (such as retractability via Dijkstra, and \VEFrechet, among others), and careful implementation, leads to reasonably fast performance in practice.

\subsection{Errata}
\seclab{err}

An earlier version of our paper \cite{hrr-fdu-25} made the extremely incorrect claim that our \Julia implementation is faster than the \CPP implementation of Bringmann \etal \cite{bkn-wdfpa-21}. This error was due to both misconfiguration of the data and incorrect measurement of the running times of the \CPP implementation. In reality, the \CPP algorithm is, by several orders of magnitude, faster than our implementation/algorithm. We believe that the ideas in our paper may still be of interest. See \cite{chr-cefdf-25}, who recently turned our approach into an exact algorithm after making some additional improvements, and provided an empirical evaluation.

\section{Implementation and experiments}

To demonstrate the practicality of our techniques, we have developed open-source packages that implement our algorithms. Our implementations follow the ideas described in the paper -- aggressively using simplification, and upper/lower bound computations to guarantee we computed the optimal \Frechet distance.  Furthermore, because our algorithms use the retractable \VEFrechet as our building block, the \Frechet morphings our implementation computes look better than what the standard \Frechet algorithms would return.

To see numerous animations and more information, the reader is encouraged to visit the (anonymized) site: \url{https://frechet.xyz}.

\subsection{Hardware and implementations}

The tests were all performed on a Linux system with 64GB of memory and an Intel \texttt{i7-11700} CPU with 16 threads. This system is a fast desktop, but far from the best possible hardware currently available. We were also able to perform development for the experiments on a laptop with substantially weaker hardware, giving anecdotal evidence of the scalability of our algorithms.  We use several real-world datasets featured in prior works to demonstrate the effectiveness of our algorithm.

Here, we discuss the individual implementations used in our experiments. The first two (\Julia and \Python) use the algorithmic techniques discussed in this work. We emphasize, however, that our implementation is slower by several orders of magnitude than the implementation of Bringmann \etal \cite{bkn-wdfpa-21}.

\subsubsection{Julia}

The \Julia code is about 5000 lines for the library, and an additional 5000 for the examples/testing code. It implements the algorithms described here faithfully, with some additional hacks to handle floating-point issues. \Julia has (surprisingly easy to use) support for multi-threading (unlike \texttt{C++} or \Python), and we also tested a multi-threaded version.  The code is available from \url{https://frechet.xyz/}, including instructions on how to replicate our tests.

The \Julia package is available at \url{https://github.com/sarielhp/FrechetDist.jl}.  Animations and examples computed by the \Julia code are available at \url{https://frechet.xyz/}.

\myparagraph{\Julia low-level optimizations.}

We were able to significantly improve the performance of the \Julia code using some standard low-level optimizations: (i) \Julia performs inlining code automatically, but explicitly forcing it to inline some low-level key functions (such as computing the distance between two points), improved the performance measurably ($>10\%$). \Julia also has high overhead for creating \texttt{s{}tr{}u{c}t{}s} that get destroyed immediately (essentially because it allocates them on the heap, instead of the stack like \CPP). Thus, by rewriting the code so that computing the distance of a point from a segment is done directly on the endpoints of the segment (where the segment is an edge of a polygon), performance was also improved. Caching simplifications, so that one can extract/compute new simplification quickly, among other ideas, see \secref{fast_simplifier}, helped create fast code.

\myparagraph{One allocation to rule them all.}  Since \Julia performance suffers when a lot of dynamic allocations are done (because of the automatic garbage collection), it turned out that implementing the retractable \VEFrechet without using hashing (for the vertices of the graph) resulted in much faster code. To this end, one pre-allocates a quadratic-sized array representing the vertices of the grid (but exploring only the relevant portions of the grid/this array) --- this works because the allocation/initialization of this (single!)  array is high-speed. This seems to yield significant speedups, somewhere up to ten times faster than the hashing version, depending on the input (for large inputs, the two implementations seem to have similar running times).

\newcommand{\Struct}{\texttt{struct}\xspace}%

\paragraph*{Julia mutable vs immutable behavior.}
Another interesting technicality is the way \Julia stores \Struct{}s in arrays. If the \Struct is declared mutable (i.e., its fields can be changed), an array of such \Struct{}s is implemented as an array of pointers to a singleton \Struct, each individually allocated on the memory heap. If \Struct{} is immutable, which is the default, such arrays are contiguous \Struct{}s placed one after the other in memory (as in \CPP).  Somewhat counterintuitively, an immutable \Struct{} can be modified by copying a newly constructed \Struct into it.  This behavior gives rise to differences in performance. It is motivated by the idea that, say, points rarely require a single coordinate to be changed, and once created, they tend not to be modified.  In any case, we changed our point type to be immutable, and it led to significant improvements in running time.

\paragraph*{Parameter tuning.}

A significant amount of speedup resulted from optimizing various parameters in the algorithm implementation. One can usually guess what should be roughly the right range for these parameters, but in the end, we manually searched over some values, which was both tedious and heuristic. The problem of parameter tuning is, of course, well known and widely studied in machine learning -- and using such techniques might be the next stage in further optimizing the implementation.

\subsubsection{Python}

For the \Python implementation, we represented curves using NumPy arrays \cite{numpy-20}, where each row in the array is a single point on the input curve. This representation is memory efficient and allows for fast computations of distances between points using library primitives. We also made heavy use of Numba \cite{lps-nlpjc-15} to further improve performance, as many of the algorithms discussed in this work are iterative and thus easy to accelerate using the library.

\myparagraph{Numerical Issues.} %
Although not exclusive to this implementation, the \Python code suffers from some issues related to numerical precision. Specifically, this implementation struggles with underflow of floating-point numbers, which seems to occur in situations where consecutive points are too close in the input data (i.e., the distance between points is too small). However, this doesn't represent a meaningful distance between these points, and this can be resolved by adding simplification as a preprocessing step (such as the algorithm in \secref{fast_simplifier}).

Our experiments are meant to emphasize the performance of our algorithmic techniques, and we do not have any tasks specifically designed to evaluate issues with floating-point numbers or methods that can be used to combat this problem. In particular, we did not encounter these issues to any significant extent with the datasets mentioned in this work. This is a potential avenue for future work if given data that encounters these issues frequently.

\subsubsection{\CPP}
The \CPP implementation is provided by Bringmann \etal \cite{bkn-wdfpa-21}, and it is faster by several orders of magnitude than our implementation.

\subsection{Experiments}

To compare our implementations, we performed benchmarks using two different tasks using real-world datasets. These were selected to determine the scalability of the algorithms relative to the large size of real-world data. This section is organized by the different tasks performed.

\subsubsection{Direct \Frechet distance computations}

To compare the \Julia and \Python implementations of our algorithms, we directly computed the \Frechet distance at different levels of approximation using our algorithms.

This was done on individual pairs of curves selected from different datasets. Specifically, we use curves taken from the GeoLife dataset \cite{zfxml-ggtdu-11} (as used in \cite{bkn-wdfpa-21}), a stork migration dataset \cite{rktzw-eabgc-18}, (retrieved from Movebank \cite{rktzw-deabg-18}, used in \cite{bbknp-klmct-20}), and a pigeon flight dataset \cite{pgrbvg-pigeon-17}. \figref{datasets} lists the individual curves from each dataset used in our benchmarks.

We do not perform any transformations on the input data (to reduce aberration due to Earth's curvature, for example), as we are primarily interested in the execution speed of our algorithm.

See \figref{python_results} and \figref{Julia_vs_python} for the results (observe that the \Julia code was tested for $1.001$ approximation -- we have not done the same for the \Python code, since it was numerically unstable in this case).  The \Python performance on these datasets is competitive with the \Julia implementation, although in some cases significantly slower. In general, the \Julia implementation is faster (we did not perform enough experiments to decide how much faster). Still, the \Python implementation is fast and robust enough to be used in practice. The runtimes also do not include the Numba JIT compilation times.

Specifically, the slow performance observed on some test cases for the \Python implementation may be caused by using sub-optimal data structures for some simplification tasks. We hope to improve this performance in further development of this implementation.

\subsection{Discussion}

Overall, our \Julia implementation is still significantly slower, by several orders of magnitude, than the \CPP of Bringmann \etal \cite{bkn-wdfpa-21}. This leaves it open whether any of the ideas used in this paper can be used together with their algorithm to get a faster algorithm.

One optimization used by the \Julia code, that should be generally useful, is the following. As a preprocessing step, precompute a hierarchy of simplified curves for each input curve. This improves the query process, but makes the preprocessing more expensive. Thus, storing such precomputed hierarchies might be a good idea if input curves are going to be used repeatedly for performing distance queries.  We emphasize that the reported running times include this (light) preprocessing stage (interestingly, the \si{SIGSPTIAL} \cite{wo-asgcr-2018} competition allowed such preprocessing to not be included in the overall running time).

\paragraph*{Acknowledgements.}
We thank the authors of \cite{chr-cefdf-25} and \cite{bkn-wdfpa-21}
for observing that our prior comparisons with the \CPP implementation
of Bringmann \etal \cite{bkn-wdfpa-21} were incorrect, as discussed in
\secref{err}.

\begin{figure}[h!]
    \centering%

\begin{tabular}{|c|l|}
  \hline
  \textbf{Input} & \textbf{Description} \\\hline
  \cellcolor{lightgray}{\texttt{1}} & \cellcolor{lightgray}{\texttt{birds: 1787\_1 / 1797\_1}} \\
  \texttt{2} & \texttt{birds: 2307\_3 / 2859\_3} \\
  \cellcolor{lightgray}{\texttt{3}} & \cellcolor{lightgray}{\texttt{birds: 2322\_2 / 1793\_4}} \\
  \texttt{4} & \texttt{GeoLife 20080928160000 / 20081219114010} \\
  \cellcolor{lightgray}{\texttt{5}} & \cellcolor{lightgray}{\texttt{GeoLife  20090708221430 / 20090712044248}} \\
  \texttt{6} & \texttt{Pigeons RH887\_1 / RH887\_11} \\
  \cellcolor{lightgray}{\texttt{7}} & \cellcolor{lightgray}{\texttt{Pigeons C369\_5 / C873\_6}} \\
  \texttt{8} & \texttt{Pigeons C360\_10 / C480\_9} \\\hline
\end{tabular}

    \caption{The inputs tested and where they are taken from, \texttt{birds} referring to the stork migration dataset \cite{rktzw-deabg-18}, the GeoLife dataset \cite{zfxml-ggtdu-11}, and \texttt{Pigeons} referring to a dataset from \cite{pgrbvg-pigeon-17}.}
    \figlab{datasets}
\end{figure}

\begin{figure}[h!]
    \centering%
\begin{tabular}{|c|rr|rrrrr|r|}
  \hline
  \textbf{Input} & \textbf{P \#} & \textbf{Q \#} & \textbf{$\approx$4} & \textbf{$\approx$1.1} & \textbf{$\approx$1.01} & \textbf{$\approx$1.001} & \textbf{Exact} & \textbf{VER} \\\hline
  \cellcolor{lightgray}{\texttt{1}} & \cellcolor{lightgray}{\texttt{10,406}} & \cellcolor{lightgray}{\texttt{11,821}} & \cellcolor{lightgray}{\texttt{0.106}} & \cellcolor{lightgray}{\texttt{0.256}} & \cellcolor{lightgray}{\texttt{4.164}} & \cellcolor{lightgray}{\texttt{16.639}} & \cellcolor{lightgray}{\texttt{0.476}} & \cellcolor{lightgray}{\texttt{27.743}} \\
  \texttt{2} & \texttt{16,324} & \texttt{14,725} & \texttt{0.114} & \texttt{0.428} & \texttt{0.985} & \texttt{4.762} & \texttt{3.522} & \texttt{19.734} \\
  \cellcolor{lightgray}{\texttt{3}} & \cellcolor{lightgray}{\texttt{22,316}} & \cellcolor{lightgray}{\texttt{4,613}} & \cellcolor{lightgray}{\texttt{0.151}} & \cellcolor{lightgray}{\texttt{0.651}} & \cellcolor{lightgray}{\texttt{1.207}} & \cellcolor{lightgray}{\texttt{3.262}} & \cellcolor{lightgray}{\texttt{0.882}} & \cellcolor{lightgray}{\texttt{7.138}} \\
  \texttt{4} & \texttt{56,730} & \texttt{91,743} & \texttt{0.578} & \texttt{1.207} & \texttt{4.215} & \texttt{28.109} & \texttt{4.259} & \texttt{---} \\
  \cellcolor{lightgray}{\texttt{5}} & \cellcolor{lightgray}{\texttt{6,103}} & \cellcolor{lightgray}{\texttt{9,593}} & \cellcolor{lightgray}{\texttt{0.043}} & \cellcolor{lightgray}{\texttt{0.040}} & \cellcolor{lightgray}{\texttt{0.305}} & \cellcolor{lightgray}{\texttt{0.519}} & \cellcolor{lightgray}{\texttt{0.484}} & \cellcolor{lightgray}{\texttt{5.678}} \\
  \texttt{6} & \texttt{2,702} & \texttt{1,473} & \texttt{0.034} & \texttt{0.084} & \texttt{1.367} & \texttt{1.387} & \texttt{0.160} & \texttt{1.081} \\
  \cellcolor{lightgray}{\texttt{7}} & \cellcolor{lightgray}{\texttt{1,068}} & \cellcolor{lightgray}{\texttt{1,071}} & \cellcolor{lightgray}{\texttt{0.026}} & \cellcolor{lightgray}{\texttt{0.066}} & \cellcolor{lightgray}{\texttt{0.048}} & \cellcolor{lightgray}{\texttt{0.049}} & \cellcolor{lightgray}{\texttt{0.044}} & \cellcolor{lightgray}{\texttt{0.011}} \\
  \texttt{8} & \texttt{864} & \texttt{1,168} & \texttt{0.018} & \texttt{0.044} & \texttt{0.142} & \texttt{0.124} & \texttt{0.069} & \texttt{0.080} \\\hline
\end{tabular}

    \caption{\Julia performance on various inputs. All running times are in seconds. The $\#$ columns specify the number of vertices in each input. The single missing running time is for a case where the program ran out of memory. The quality of approximation is in the title of the column. The \text{V{E}R} column is for running the \VEFrechet (retractable) algorithm on the original input curves, which is much slower than the exact algorithm, which computes the exact \Frechet distance, but uses simplification internally for speed.}
    \figlab{Julia_vs_python}
\end{figure}

\begin{figure}[h!]
    \centering%
\begin{tabular}{|c|rr|rrrrr|}
  \hline
  \textbf{Input} & \textbf{P \#} & \textbf{Q \#} & \textbf{$\approx$4} & \textbf{$\approx$1.1} & \textbf{$\approx$1.01} & \textbf{Exact} & \textbf{VER} \\\hline
  \cellcolor{lightgray}{\texttt{1}} & \cellcolor{lightgray}{\texttt{10,400}} & \cellcolor{lightgray}{\texttt{11,815}} & \cellcolor{lightgray}{\texttt{0.071}} & \cellcolor{lightgray}{\texttt{0.250}} & \cellcolor{lightgray}{\texttt{5.372}} & \cellcolor{lightgray}{\texttt{0.593}} & \cellcolor{lightgray}{\texttt{92.869}} \\
  \texttt{2} & \texttt{16,318} & \texttt{14,719} & \texttt{0.091} & \texttt{0.240} & \texttt{1.142} & \texttt{69.525} & \texttt{66.844} \\
  \cellcolor{lightgray}{\texttt{3}} & \cellcolor{lightgray}{\texttt{22,310}} & \cellcolor{lightgray}{\texttt{4,607}} & \cellcolor{lightgray}{\texttt{0.087}} & \cellcolor{lightgray}{\texttt{0.232}} & \cellcolor{lightgray}{\texttt{1.041}} & \cellcolor{lightgray}{\texttt{49.426}} & \cellcolor{lightgray}{\texttt{25.032}} \\
  \texttt{4} & \texttt{56,730} & \texttt{91,743} & \texttt{0.307} & \texttt{0.563} & \texttt{1.892} & \texttt{2.390} & \texttt{---} \\
  \cellcolor{lightgray}{\texttt{5}} & \cellcolor{lightgray}{\texttt{6,103}} & \cellcolor{lightgray}{\texttt{9,593}} & \cellcolor{lightgray}{\texttt{0.042}} & \cellcolor{lightgray}{\texttt{0.032}} & \cellcolor{lightgray}{\texttt{0.175}} & \cellcolor{lightgray}{\texttt{3.657}} & \cellcolor{lightgray}{\texttt{19.618}} \\
  \texttt{6} & \texttt{2,696} & \texttt{1,467} & \texttt{0.077} & \texttt{0.294} & \texttt{4.913} & \texttt{2.946} & \texttt{283.538} \\
  \cellcolor{lightgray}{\texttt{7}} & \cellcolor{lightgray}{\texttt{1,062}} & \cellcolor{lightgray}{\texttt{1,065}} & \cellcolor{lightgray}{\texttt{0.145}} & \cellcolor{lightgray}{\texttt{0.275}} & \cellcolor{lightgray}{\texttt{0.455}} & \cellcolor{lightgray}{\texttt{0.666}} & \cellcolor{lightgray}{\texttt{382.455}} \\
  \texttt{8} & \texttt{858} & \texttt{1,162} & \texttt{0.034} & \texttt{0.100} & \texttt{1.537} & \texttt{0.159} & \texttt{0.709} \\\hline
\end{tabular}

    \caption{Python performance in seconds on the inputs given in \figref{datasets}. The missing runtime is for a case where the test code ran out of memory.}
    \figlab{python_results}
\end{figure}

\section{The retractable \Frechet distance}

\subsection{The retractable path in a directed graph}

Let $\G=(\VV,\EE)$ be a directed graph with $n$ vertices and $m$ edges, with weights $\wC:\EE \rightarrow \Re$ on the edges (assume for simplicity of exposition that they are all distinct).  Consider a simple path $\pi$ in $\G$. The \emphi{bottleneck} of $\pi$ is $b(\pi) = \max_{e \in \pi} \wX{ e}$. For any two vertices $s$ and $t$ in $\G$, let $\Pi(s,t)$ denote the set of all simple paths from $s$ to $t$ in $\G$.  The \emphw{bottleneck distance} between $s$ and $t$ is $d_B(s,t) = \min_{\pi \in \Pi(s,t)} b(\pi)$. The \emph{unique} edge (under our assumption of distinct weights on the edges) that realizes $d_B(s, T)$ is the \emphi{bottleneck edge} of $s$ and $t$, denoted by $\btl(s,t)$.

\myparagraph{Literature on computing the bottleneck edge/path.}

This variant of the problem is a $\min \max$, which is equivalent to the $\max \min$ (maximize the cheapest edge on the path) via negation of the prices of edges. If the edges are sorted, then computing the bottleneck edge can be done in linear time. Dijkstra can be modified to solve this problem \cite{cktzz-bptdg-16}, in $O(n \log n + m)$ time. Similarly, the problem is readily solvable in linear time if the underlying graph is a \DAG.  Gabow and Tarjan \cite{gt-atbop-88} showed an algorithm with running time $O( m \log^* m)$, and some minor further improvements are known \cite{cktzz-bptdg-16}.

\myparagraph{A path that is bottleneck-optimal for subpaths.}

Here, we study the following harder variant of the bottleneck problem.

\begin{definition}
    \deflab{retract}%
    For two vertices $s,t$ of a directed graph $\G$ with distinct weights on its edges, the \emphi{retractable} path from $s$ to $t$, is the unique path that contains the edge $\btl(s,t) = u\rightarrow v$, and the subpath from $s$ to $u$, and from $v$ to $t$ are both retractable.
\end{definition}
Importantly, we are interested in computing the retractable path itself (not just the bottleneck). Confusingly, and this is not obvious, the standard dynamic programming algorithm for \DAG{}s does not work in this case.

\myparagraph{Modifying Dijkstra.}

It is not hard to see that a minor simplification to Dijkstra's algorithm, started from $s$, leads to a $O(n \log n + m)$ time algorithm for computing the retractable tree, containing the retractable paths from $s$ to all the vertices in $\G$ (we assume here that all of $\G$ is reachable from $s$). Indeed, if $C$ is in the current set of vertices already visited, the algorithm always handles the next cheapest edge in the directed cut $(C, \VV \setminus C)$. As in Dijkstra's algorithm, one can maintain a variable $d(v)$ to store the weight of the cheapest edge from a vertex already visited to the (yet unvisited) vertex $v$. Setting $d(v) = \min( d(v), \wX{ u\rightarrow v} )$ when handling the edge $u \rightarrow v$, records this information. Under this vertex-based accounting, the algorithm always handles the next vertex with the minimum $d$ value that is not yet visited (and these values can be maintained in a heap).  Gabow and Tarjan \cite{gt-atbop-88} described how to modify Dijkstra to solve the bottleneck problem -- what we point out in the next lemma is that it is somewhat stronger -- it computes the retractable path.

\begin{lemma}
    \lemlab{retractable}%
    The above algorithm computes a directed tree $T$ rooted at $s$, such that for any vertex $u$, the path in $T$ from $s$ to $u$ is retractable. The running time of the algorithm is $O(n \log n + m )$.
\end{lemma}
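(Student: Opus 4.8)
The plan is to analyze the modified Dijkstra described above and show, by induction on the order in which vertices are removed from the heap (equivalently, added to the visited set $C$), that the tree edge by which a vertex enters $C$ is exactly its bottleneck edge from $s$, and moreover that the whole $s$-to-$u$ tree path is retractable in the sense of \defref{retract}. First I would set up notation: let $v_1 = s, v_2, \ldots, v_n$ be the vertices in the order the algorithm finalizes them, let $C_k = \{v_1,\ldots,v_k\}$, and let $e_k = u_k \rightarrow v_k$ be the cut edge the algorithm uses when it finalizes $v_k$ (so $u_k \in C_{k-1}$). The key greedy fact, analogous to the standard cut property for MSTs, is: when the algorithm finalizes $v_k$, the edge $e_k$ it picks is the globally cheapest edge of the cut $(C_{k-1}, \VV\setminus C_{k-1})$, because the $d$-values maintained in the heap record precisely the cheapest edge from $C_{k-1}$ into each unvisited vertex, and we extract the minimum such $d$-value.

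The first substantive claim is that $e_k = \btl(s, v_k)$, i.e.\ the cut edge is the bottleneck edge of the $s$-$v_k$ pair. For this I would argue in two directions. (\emph{Upper bound on $d_B$}: the tree path from $s$ to $v_k$ uses only edges among $e_2,\ldots,e_k$, and I claim $\wX{e_j} \le \wX{e_k}$ for all $j \le k$ on this path — in fact more strongly $\wX{e_2} < \wX{e_3} < \cdots$ is \emph{not} quite true, but the relevant monotonicity is that along any single root-to-$v_k$ tree path the weights are increasing, which follows because once $u_k$ was finalized, $e_k$ was already a candidate edge out of the then-current cut and was not chosen, so every edge finalized between $u_k$'s insertion and $v_k$'s insertion is at most $\wX{e_k}$; iterating gives that the heaviest edge on the tree path to $v_k$ is $e_k$ itself, so $b(\text{tree path}) = \wX{e_k}$, hence $d_B(s,v_k) \le \wX{e_k}$.) (\emph{Lower bound}: any simple $s$-$v_k$ path must cross the cut $(C_{k-1},\VV\setminus C_{k-1})$ on some edge $f$; since $e_k$ is the cheapest cut edge, $\wX{f} \ge \wX{e_k}$, so $b(\text{any path}) \ge \wX{e_k}$.) Together these give $d_B(s,v_k) = \wX{e_k}$, and by the distinctness of weights the minimizing cut edge, hence the bottleneck edge, is $e_k = u_k \rightarrow v_k$.

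The second claim is the recursive/retractability part: writing $\btl(s,v_k) = u_k \rightarrow v_k$ with $u_k = v_j$ for some $j < k$, I need the tree path from $s$ to $u_k$ to be retractable, which is exactly the inductive hypothesis applied to $v_j$ (finalized earlier), and I need the ``path'' from $v_k$ to $v_k$ to be trivially retractable. By \defref{retract} this is precisely what it means for the tree path to $v_k$ to be retractable, so the induction closes. For the running time, the algorithm is structurally Dijkstra: each vertex is extracted from the heap once ($O(n\log n)$ total) and each edge $u\rightarrow v$ triggers at most one $d(v) \leftarrow \min(d(v), \wX{u\rightarrow v})$ with a possible decrease-key ($O(m)$ with a Fibonacci heap, or fold the $\log n$ into the first term), giving $O(n\log n + m)$.

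The main obstacle I anticipate is the monotonicity-along-tree-paths lemma used in the upper-bound direction — making rigorous the claim ``$e_k$ is the heaviest edge on the $s$-to-$v_k$ tree path.'' The subtle point is that $\wX{e_k}$ need not exceed the weight of \emph{every} previously finalized edge (only those on its own ancestral path), so one must phrase the induction carefully around the moment $u_k$ was finalized and track why no cheaper-than-$e_k$ cut edge was available to reach $v_k$ sooner. Once that lemma is in place, both the bottleneck identity and the recursive structure fall out cleanly from the cut property, exactly mirroring the correctness proof of Prim's algorithm for minimum bottleneck spanning trees, now in the directed setting.
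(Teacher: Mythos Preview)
Your central claim---that the tree edge $e_k$ by which $v_k$ is finalized equals $\btl(s,v_k)$---is false, and the monotonicity lemma you rely on to prove it is also false. Take $\VV=\{s,a,b,c\}$ with directed edges $s\to a$ (weight $10$), $s\to b$ (weight $5$), $b\to c$ (weight $3$). The algorithm finalizes $b$ via $e_2=s\to b$, then $c$ via $e_3=b\to c$. The tree path to $c$ is $s\to b\to c$ with edge weights $5,3$, so weights are \emph{decreasing} along this root-to-leaf path, the heaviest edge is $e_2$ not $e_3$, and $\btl(s,c)=s\to b\neq e_3$. Your argument ``once $u_k$ was finalized, $e_k$ was a candidate and was not chosen, so every edge finalized between $u_k$'s insertion and $v_k$'s insertion is at most $\wX{e_k}$'' only controls $e_{j+1},\ldots,e_{k-1}$; it says nothing about $e_j$ itself (which was chosen \emph{at} $u_k$'s finalization, before $e_k$ entered the cut), so the ``iterating'' step has nothing to iterate on.

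This breaks your recursive part as well: you write ``$\btl(s,v_k)=u_k\to v_k$'' and then only need the trivial suffix from $v_k$ to $v_k$, but in reality the bottleneck edge can sit anywhere on the tree path, and then \defref{retract} requires the \emph{suffix} from the bottleneck head to $v_k$ to be retractable---a statement about paths starting at an interior vertex, which your induction on finalization order (which only certifies paths from $s$) does not supply. The paper handles exactly this by inducting instead on the length of the retractable path: it takes the global bottleneck edge $e=x\to y$ of the $s$--$u$ pair, observes that every edge of the retractable path is processed before anything heavier than $e$, and then splits the graph at the cut $(S,T)$ where $S$ is the set of vertices visited before $e$ is handled, applying the induction separately to $\G_S$ (for the $s$--$x$ prefix) and $\G_T$ (for the $y$--$u$ suffix). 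Your running-time analysis is fine and matches the paper's.
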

\begin{proof}
    The running time follows as the algorithm performs $O(m)$ decrease-keys and $O(n)$ delete-min operations, which take $O(1)$ and $O(\log n)$ time, respectively, if using a Fibonacci heap.

    As for correctness -- we analyze a somewhat slower implementation of the algorithm (i.e., this is a variant of Prim's algorithm using cuts). The algorithm repeatedly handles the cheapest edge not handled yet that is in the directed cut $(C, \overline{C})$\footnote{$(C, \overline{C}) = \Set{u\rightarrow v \in \EGX{\G}}{ u \in C, v \in \VX{G} \setminus C}$.}, where $C$ is the set of vertices visited so far. Initially, $C = \{s \}$ and the min-heap is initialized to hold the set $(C, \overline{C})$.  The algorithm now repeatedly extracts the minimum edge $u \rightarrow v$ in the heap. If $v \in C$, then the algorithm continues to the next iteration. Otherwise, the algorithm marks the $v$ as visited, and adds all the outgoing edges from $v$ to the heap. It is easy to verify (by induction) that this slower algorithm and the algorithm described above compute the same bottleneck tree.

    As for the correctness of the slower algorithm -- the proof is by induction on the number of edges of the retractable path.  Consider the case that the retractable path is of length one -- namely, the edge $e = s \rightarrow u$. Let $\wC = \wX{ e}$. Clearly, before the algorithm handles $e$, all the weights being handled are strictly smaller than $e$, and thus the algorithm can not arrive at $u$. Since the algorithm schedules $e$, the claim readily follows.

    So, consider a retractable path $\pi$ from $s$ to $u$ that contains more than one edge, and let $e = x \rightarrow y$ be its bottleneck edge. Let $\wC = \wX{e}$. Observe that the algorithm would handle all the edges of $\pi$ before handling any edge with strictly larger weight than $w$. As such, for our purposes, assume that $e$ is the heaviest edge in the graph. Let $S$ be all the vertices visited just before $e$ was handled, and let $T = \VV \setminus S$. Consider the two induced graphs $\G_S$ and $\G_T$, and observe that one can apply induction to both parts, separately, as there are no edges except $e$ between the two parts (there might be edges going ``back'', but these can be ignored, as the algorithm already handled the vertices of $S$). It thus follows that the algorithm computed $\pi$ as the desired path.
\end{proof}

\begin{figure}[t]
    \centering%
    \centering \includegraphics[width=0.96\linewidth]{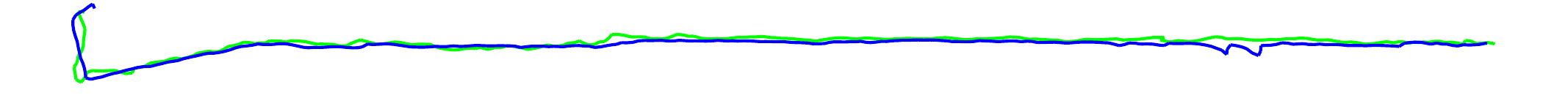} \vspace{-0.3cm}
    \caption{Close encounters of the \Frechet type: An example of two curves (from Geolife GPS tracks) that are made out of $547$ vertices, such that their retractable discrete \Frechet distance was computed by visiting only $1,144$ edges, while the whole diagram has $74,236$ distinct cells. For more details, follow this \href{\baseUrlX{12/}}{link}.  Informally, for examples where the \Frechet distance is dramatically smaller than the diameter of the curves, the retractable \Frechet distance (discrete or continuous) is computed by the new algorithm in near linear time.}%
    \figlab{near_linear}
\end{figure}

\subsection{The retractable discrete \Frechet distance}
\seclab{r_d_f_dist}

\subsubsection{The discrete \Frechet distance}

Let $\curveA = p_1, \ldots, p_n$ and $\curveB = q_1, \ldots, q_m$ be two sequences of points in some normed space. Conceptually, we have two agents starting at $p_1$ and $q_1$ respectively. During an atomic time interval, one of them can jump forward to the next point in the sequence (one can allow both to jump forward in the same time, but we disallow it for the sake of simplicity of exposition). In the end, both agents have to arrive at $p_n$ and $q_m$, respectively, and our purpose is to minimize the maximum distance between them during this motion.

\begin{definition}
    Consider two sequences $\curveA = p_1, \ldots, p_n$ and $\curveB = q_1, \ldots, q_m$, a \emphi{discrete morphing} is an $x/y$-monotone path $\curveC = (1,1), \ldots, (n,m)$ in the grid graph defined over the set of points $U = \IRX{n}\times\IRX{m}$ from $(1,1)$ to $(n,m)$.  For a vertex $(i,j) \in U$, let $\heightX{i,j}=\dY{p_i}{q_j}$ be its associated \emphw{weight}.  The path $\curveC$ is restricted to vertical and horizontal edges of the grid.  The \emphw{width} of $\curveC$ is
    \begin{math}
        \widthY{\curveC}{\curveA,\curveB} = \max_{(i,j) \in \VX{\curveC}} \heightX{i,j}.
    \end{math}
    The \emphi{discrete \Frechet distance} between $\curveA$ and $\curveB$ is the minimum of the width of any morphing between the two sequences.
\end{definition}

\subsubsection{The retractable discrete \Frechet distance}

\begin{definition}
    Let $\curveC = z_1, \ldots, z_{n+m-1} \in U$ be a discrete morphing between two sequences $\curveA = p_1, \ldots, p_n$ and $\curveB = q_1, \ldots, q_m$.  Let $D(i,j) = \max_{k: i < k < j} \heightX{z_k}$ be the inner width of the morphing $\curveC(i,j) = z_{i}, z_{i+1}, \ldots, z_{j}$. For simplicity, assume all pairwise distances $\dY{p_i}{q_j}$ are distinct.  The unique point $\btl(i,j)$ realizing the minimum $D(i,j)$, overall possible grid monotone paths $\curveC'$ between $z_i$ and $z_j$, is the \emphw{bottleneck} between $i$ and $j$. The \emphi{retractable discrete \Frechet morphing} between $\curveA$ and $\curveB$ is the unique morphing $\curveC$, such that
    \begin{compactenumi}
        \item $z_1 = (1,1)$, $z_{n+m-1} = (n,m)$, and
        \item for all $i < j$, we have $\btl( i,j ) \in \curveC(i,j)$.
    \end{compactenumi}
\end{definition}

Although the weights here are defined on the vertices, it is easy enough to interpret them as being on the edges (by, for example, assigning a grid edge $z \rightarrow z'$ the weight $\max( \heightX{z}, \heightX{z'} )$.  Plugging this (implicit) graph into \lemref{retractable} readily yields the following result.

\begin{lemma}
    \lemlab{f_d_r_algorithm}%
    Given two sequences $\curveA = p_1, \ldots, p_n$ and $\curveB = q_1, \ldots, q_m$, the retractable discrete \Frechet morphing between $\curveA$ and $\curveB$ can be computed in $O( n m \log (nm ))$ time.

    More generally, if the \Frechet distance between $\curveA$ and $\curveB$ is $\ell$, and
    \begin{equation*}
        \tau = \cardin{ \Set{(p,q)}{ p \in \curveA, q\in \curveB,
              \dY{p}{q} \leq \ell} },
    \end{equation*}
    then the running time of the algorithm is $O( \tau \log \tau)$.
\end{lemma}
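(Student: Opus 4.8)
The plan is to realize the retractable discrete \Frechet morphing as the retractable path (in the sense of \defref{retract}) in an explicitly described grid \DAG, and then invoke \lemref{retractable} essentially verbatim. First I would build the graph $\G = (\VV,\EE)$ with vertex set $\VV = U = \IRX{n}\times\IRX{m}$, putting a directed edge $z \rightarrow z'$ exactly when $z'$ is obtained from $z$ by incrementing one coordinate by $1$ (the ``up'' and ``right'' edges of the grid). Every directed path in $\G$ is automatically $x/y$-monotone, so for any two vertices $z_i$ and $z_j$ the directed paths from $z_i$ to $z_j$ are precisely the monotone grid paths between them; consequently \defref{retract} applied to the source $(1,1)$ and sink $(n,m)$ is exactly the definition of the retractable discrete \Frechet morphing, with the vertex $\btl(i,j)$ appearing there being the endpoint of the graph-theoretic bottleneck edge. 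As the text already notes, the weights $\heightX{z}$ live on the vertices, so I would move them onto edges by assigning $z \rightarrow z'$ the weight $\max(\heightX{z},\heightX{z'})$; the ``distinct weights'' hypothesis of \lemref{retractable} is then restored by a fixed consistent tie-break (e.g.\ lexicographic on the endpoints), which does not change the resulting retractable tree. Note $\G$ has $nm$ vertices and at most $2nm$ edges.

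The first bound is then immediate: plugging $\G$ into \lemref{retractable} computes the retractable tree rooted at $(1,1)$, and in particular the retractable path to $(n,m)$, in time $O(nm\log(nm) + nm) = O(nm\log(nm))$.

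For the output-sensitive bound I would run the same algorithm but terminate the instant $(n,m)$ is extracted from the heap. The modified Dijkstra extracts vertices in nondecreasing order of their bottleneck distance $d_B\bigl((1,1),\cdot\bigr)$ from the source (this is the same invariant that drives the correctness argument in \lemref{retractable}), and with the edge weights chosen above one has, for any monotone path $\pi$ from $(1,1)$ to $(n,m)$, that $b(\pi) = \max_{z \in \pi}\heightX{z}$, so $d_B\bigl((1,1),(n,m)\bigr)$ equals the discrete \Frechet distance $\ell$. Hence every vertex $v$ extracted before termination satisfies $d_B\bigl((1,1),v\bigr) \le \ell$; since the path realizing this bottleneck distance ends at $v$ and has bottleneck at least $\heightX{v}$, we get $\heightX{v} = \dY{p_i}{q_j} \le \ell$, so $v$ is one of the $\tau$ pairs counted in the statement. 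Thus at most $\tau$ vertices are ever marked visited, and each contributes only $O(1)$ outgoing edges to the heap, so the algorithm performs $O(\tau)$ heap operations each of cost $O(\log\tau)$, for a total of $O(\tau\log\tau)$.

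The step I expect to require the most care is the very first one — verifying that the graph-theoretic retractable path of \defref{retract} really does coincide, under the vertex-to-edge weight translation, with the retractable discrete \Frechet morphing as defined via the points $\btl(i,j)$, and that the tie-breaking needed to meet the distinctness hypothesis does not alter which morphing is ``retractable''. Once that correspondence is nailed down, both running-time claims follow routinely from \lemref{retractable} together with the bounded degree of the grid graph.
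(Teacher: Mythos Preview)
Your proposal is correct and follows exactly the approach the paper takes: translate the vertex weights to edge weights via $\max(\heightX{z},\heightX{z'})$ and invoke \lemref{retractable} on the grid \DAG. The paper's own argument is in fact just the one sentence preceding the lemma (``Plugging this (implicit) graph into \lemref{retractable} readily yields the following result''), so you have supplied considerably more detail than the paper does, in particular a full justification of the output-sensitive $O(\tau\log\tau)$ bound via early termination, which the paper leaves implicit.
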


\begin{remark}
    \begin{compactenumA}
        \item The first running time bound of \lemref{f_d_r_algorithm} is a worst-case bound, and for many realistic inputs, it is much smaller if the (discrete) \Frechet distance between them is small compared to the diameter of the two curves.  See \figref{near_linear} for an example.

        \smallskip%
        \item Note that if one has to explore a large fraction of the grid (i.e., $\tau = \Omega( n m)$), then the retractable \Frechet algorithm is slower than the standard discrete \Frechet algorithm, because of the use of a heap.
    \end{compactenumA}
\end{remark}

\section{The \VEFrechet distance}

In this section, we give a formal definition of the \VEFrechet distance, show some of its basic properties in relation to other variants of the \Frechet distance, and briefly discuss practical considerations.

\subsection{Definition and basic algorithm}

The input is two polygonal curves $\curveA = p_1 p_2 \cdots p_n$ and $\curveB = q_1 q_2 \cdots q_m$. This induces the free space diagram $\doggy =\doggyY{\curveA}{\curveB}$ (\defref{f_s_d}), which is a rectangle $R$, partitioned into a non-uniform $(n-1) \times (m-1)$ grid, where the $i$\th edge $p_ip_{i+1} \in \curveA$ and the $j$\th edge $q_j q_{j+1} \in \curveB$, induces the $(i,j)$ grid cell $\CellY{i}{j}$.

\begin{definition}
    For two points $u, v \in \Re^d$, let $\dirY{u}{v} = \frac{ v - u}{\dY{v}{y}}$ be the (unit length) \emphi{direction} vector from $u$ to $v$.
\end{definition}

Let $\vecP_i = \dirY{p_i}{p_{i+1}}$ and $\vecQ_j = \dirY{q_j}{q_{j+1}}$.  Let $x_i = \lenX{ \curveA[p_1,p_i] }$ and $y_j = \lenX{ \curveB[q_1,q_{j}]}$.  The grid cell $\Cell_{i,j}$ is the subrectangle $[x_i, x_i + \dY{p_i}{p_{i+1}}] \times [ y_j, y_j + \dY{q_j}{q_{j+1}}]$. The elevation function for any point $(x,y)$ inside $\CellY{i}{j}$ is given by
\begin{equation*}
    \elevY{x}{y}
    =
    \dY{\pi(x)}{-\sigma(y)}
    =
    \dY{p_i + (x - x_i) * \vecP_i
    }{\,-\, q_j - (y
       - y_j) * \vecQ_j\bigr.}.
\end{equation*}

The function $\elevY{x}{y}$ is a smooth convex function.%
\footnote{Specifically, it is a square root of a paraboloid, as a tedious but straightforward calculation shows.} More pertinent for our purposes, is that the minimum of the elevation function along each boundary edge of $C_{i,j}$ is unique, and corresponds to the distance of a vertex of one curve (say $p_i \in \curveA$) to the closest point on an edge (i.e., $q_j q_{j+1}$) of the other curve. This minimum along an edge of $\CellY{i}{j}$ is a \emphi{portal}. The portals on the left and bottom edges of $\CellY{i}{j}$ are the \emphw{incoming portals}, and the other two are \emphw{outgoing portals}.

The standard \Frechet distance asks for computing the $x/y$-monotone path between the opposite corners of the free space, while minimizing the maximum height along this path. The corresponding \Frechet morphing has the property that its intersection with a cell is either a segment (or empty). We restrict the \Frechet morphing further, by requiring that inside a cell, the morphing must be a straight segment connecting two portals of the cell. As a result, we have to give up on the monotonicity requirement (more on that shortly).

Importantly, computing the new \VEFrechet distance is now a graph search problem, in the following graph.

\begin{figure}[h]
    \phantom{}%
    \hfill%
    \includegraphics[width=0.2\linewidth]{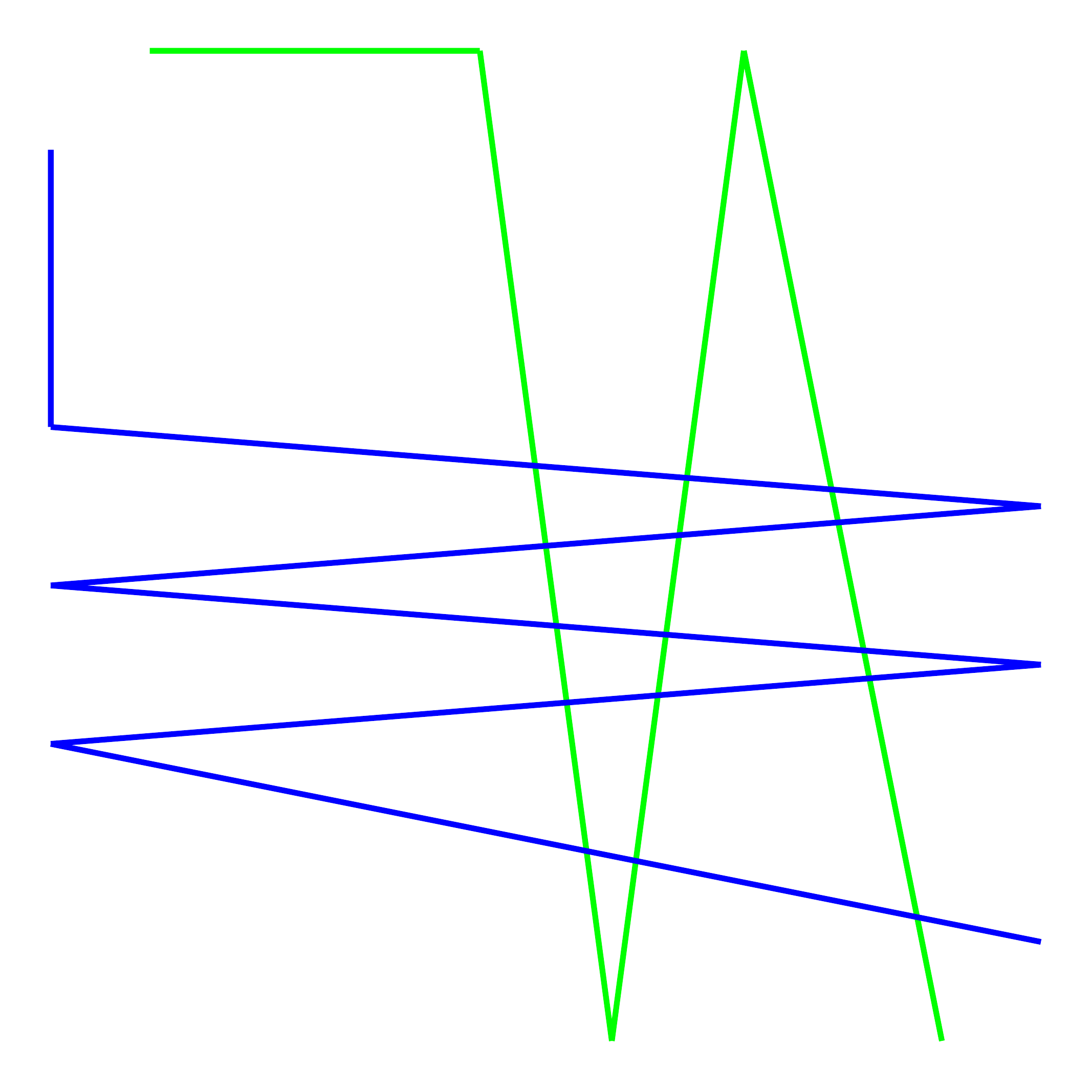}%
    \hfill%
    \includegraphics[width=0.38\linewidth]{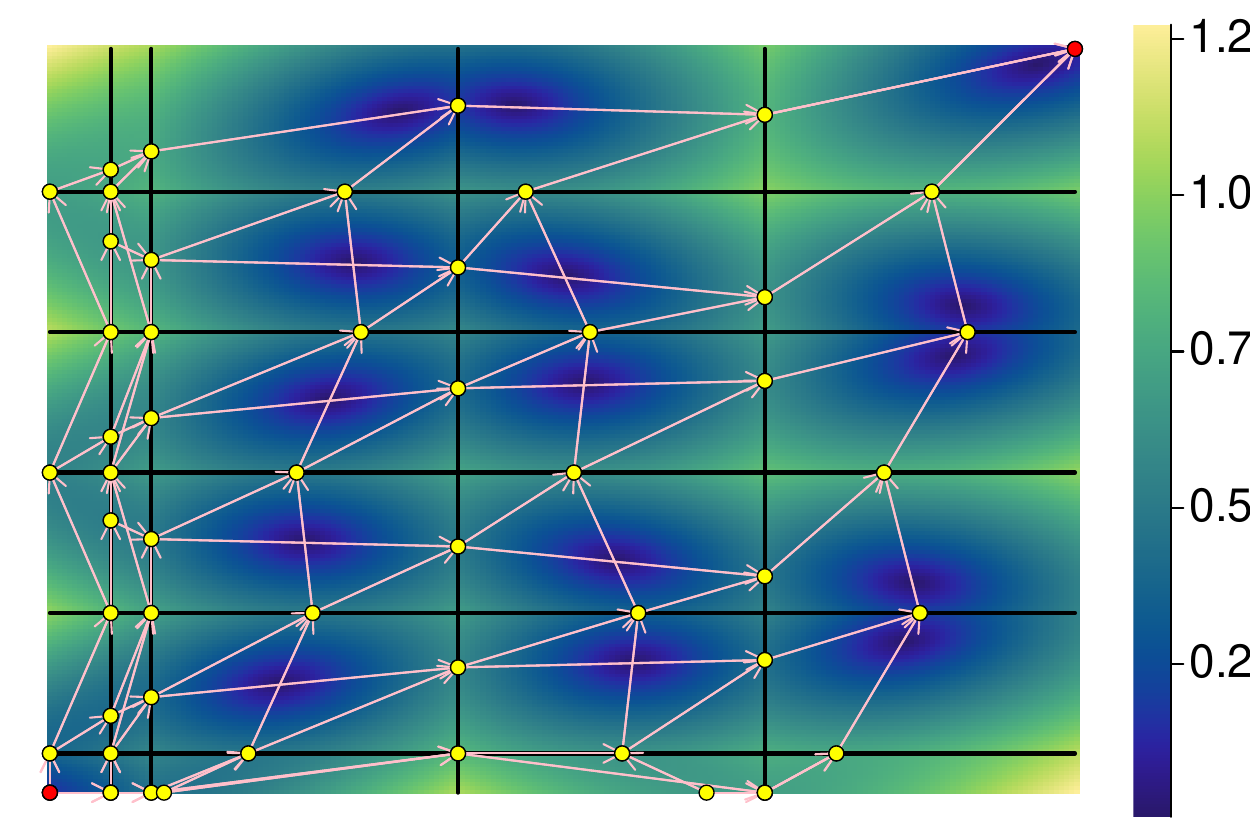}%
    \hfill%
    \phantom{}%
    \caption{Left: Two curves. Right: Their elevation function, and the associated graph.  See \href{\baseUrlX{10}}{here} for more info.  }
    \figlab{v_e_g_2}
\end{figure}

\begin{definition}
    \deflab{v_e_graph}%
    Given two curves $\curveA = p_1 p_2 \cdots p_n$ and $\curveB = q_1 q_2 \cdots q_m$, their \emphi{\VE graph} $\GVEC = \GVEY{\curveA}{\curveB}$ is a \DAG defined over the grid of the free space diagram $\doggy = \doggyY{\curveA}{\curveB}$, where every grid cell has four vertices on its boundary (some of them might coincide), corresponding to the portals on the edges.  Here, every cell has edges from each of its incoming portals to each of its outgoing portals (i.e., four edges in total when the portals do not coincide).  In addition, there are two special vertices -- the start vertex $s$ (i.e., bottom left corner of $\doggy$), and target vertex $t$ (i.e., top right corner of $\doggy$).  The portals on the grid edges adjacent to $s$ and $t$ are moved to $s$ and $t$, respectively.  Every vertex $v$ of this graph, has an associated location $(x,y) \in \doggy$, which in turn corresponds to two points $p_v = \curveA(x)$ and $q_v = \curveB(y)$. The \emphi{height} of $v$ is the elevation of $(x,y)$ -- that is $\dY{p_v}{q_v}$.
\end{definition}

For an example of the $\GVEC$ graph, see \figref{v_e_g_1}.  The \VEFrechet distance between $\curveA$ and $\curveB$ is simply the bottleneck distance between $s$ and $t$ in $\GVEC$. We might as well compute the \defrefY{retractable}{retract} version.

\begin{definition}
    Given two polygonal curves $\cA$ and $\cB$, the \emphi{retractable \VEFrechet morphing} is the unique morphing $\mrp$ realized by the retractable path in $\GVEY{\curveA}{\curveB}$ from $s$ to $t$, see \defref{retract}. The \emphi{\VEFrechet distance} is the maximum elevation of any point along $\mrp$, denoted by $\distVEFr{\cA}{\cB}$.
\end{definition}

Putting the above together, and using the algorithm of \lemref{retractable}, we get the following.
\begin{theorem}
    \thmlab{v_e_main}%
    Given two polygonal curves $\curveA$ and $\curveB$ with $n$ and $m$ vertices respectively, their \VEFrechet distance, and the corresponding retractable \VE \Frechet morphing $\morphVE{\cA}{\cB}$ that realizes it, can be computed in $O(n m \log (nm))$ time.
\end{theorem}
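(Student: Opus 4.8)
The plan is to verify that the \VE graph $\GVEC = \GVEY{\curveA}{\curveB}$ has at most $O(nm)$ vertices and $O(nm)$ edges, each computable in $O(1)$ time from the curve data, and then invoke \lemref{retractable} directly. The graph is laid out over the $(n-1)\times(m-1)$ grid of the free space diagram; each cell contributes four portal vertices (possibly coinciding) and at most four directed edges (from each incoming portal to each outgoing portal). Summing over all cells gives $O(nm)$ vertices and $O(nm)$ edges, and the two special vertices $s,t$ add nothing asymptotically. Computing a portal on a grid edge amounts to finding the point on one segment closest to a vertex of the other curve — a projection computation — so each portal, its height $\dY{p_v}{q_v}$, and each edge weight (the max of the two endpoint heights, as in the discrete case) is obtained in constant time.

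Next I would confirm that the assumption of \lemref{retractable} — distinct edge weights — can be taken without loss of generality, or handled by a fixed symbolic tie-breaking rule (e.g. break ties by the lexicographic order of the endpoint coordinates), so that the retractable path and hence the retractable \VE morphing is well defined and unique. Since $\GVEC$ is a \DAG with all of its relevant vertices reachable from $s$ (every cell's portals are reachable by a monotone staircase of edges from the bottom-left corner), the hypotheses of \lemref{retractable} are met: running the modified Dijkstra from $s$ produces the retractable tree $T$ rooted at $s$, and in particular the retractable path from $s$ to $t$, in $O(N \log N + M)$ time where $N,M = O(nm)$ — that is, $O(nm \log (nm))$.

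Finally I would observe that the path so computed is, by definition, exactly the retractable \VE \Frechet morphing $\morphVE{\cA}{\cB}$, and that reading off the maximum height along it gives $\distVEFr{\cA}{\cB}$; both are therefore available within the stated time bound. I do not expect a genuine obstacle here — the theorem is essentially a packaging of \lemref{retractable} applied to an explicitly bounded implicit graph. The only point requiring a little care is the bookkeeping that translates ``weights on portals/vertices'' into ``weights on edges'' so that \lemref{retractable} applies verbatim; this is the same device already used just before \lemref{f:d:r:alg}, namely assigning edge $z \to z'$ the weight $\max(\heightX{z},\heightX{z'})$, and one checks that a bottleneck/retractable path under this edge-weighting coincides with the one under the natural vertex-weighting. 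Once that is noted, the running time and correctness both follow immediately.
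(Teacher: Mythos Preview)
Your proposal is correct and matches the paper's approach exactly: the paper does not give a separate proof of \thmref{v:e:main} but simply states ``Putting the above together, and using the algorithm of \lemref{retractable}, we get the following,'' and your write-up is precisely a spelled-out version of that sentence. The only extra content you add --- the explicit $O(nm)$ vertex/edge count, the constant-time portal computation, the tie-breaking remark, and the vertex-to-edge weight translation via $\max(\heightX{z},\heightX{z'})$ --- are all either implicit in the paper's preceding discussion or handled identically just before \lemref{f:d:r:alg}, so there is nothing to correct or compare.
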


We emphasize that the running time bound in this theorem is pessimistic, and in many cases, the algorithm is significantly faster.  For an example of the output of the algorithm, see \figref{v_e_g_3}.

\begin{figure}[h]
    \centerline{\includegraphics[width=0.6\linewidth]{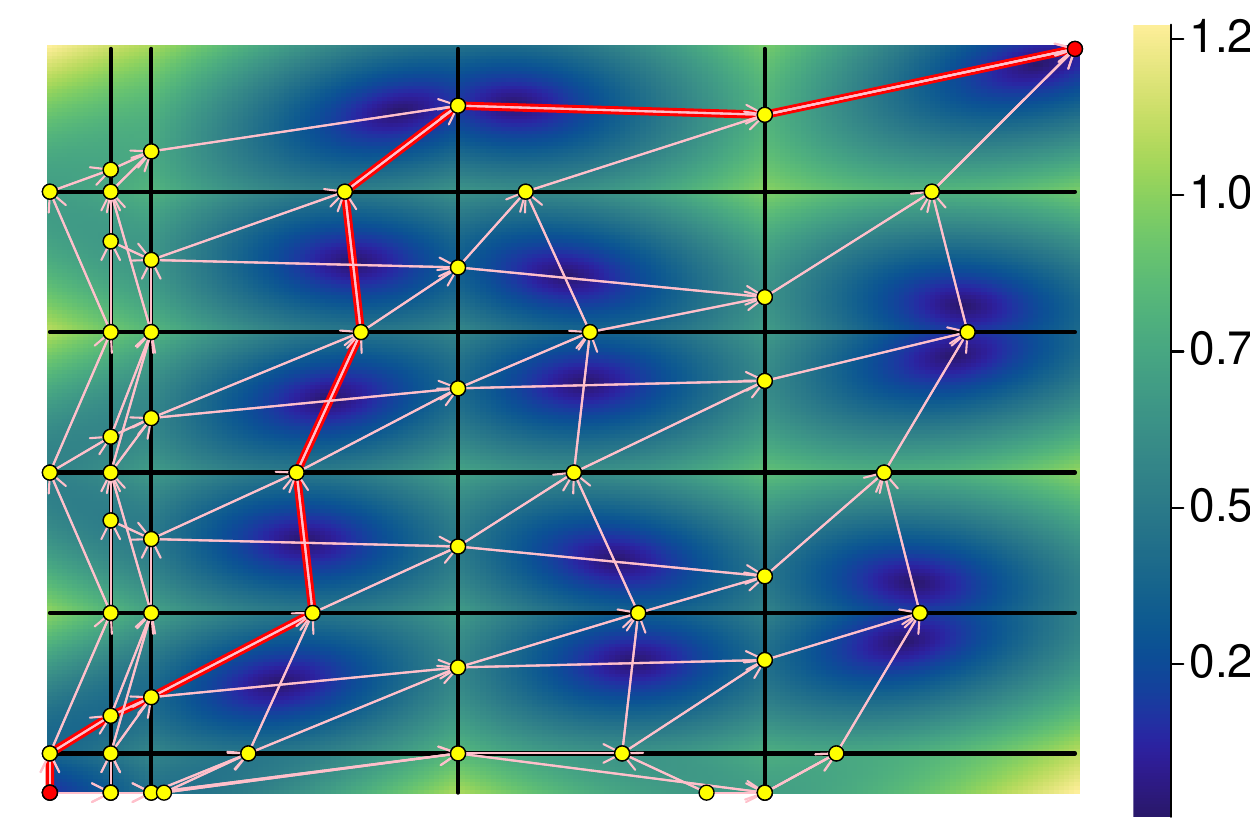}}
    \caption{The \VEFrechet morphing for the two curves from \figref{v_e_g_1}. Note that the solution is ``slightly'' not $x/y$-monotone. For animation of this morphing, follow the \href{https://frechet.xyz/examples/10/}{link}.  }
    \figlab{v_e_g_3}
\end{figure}

\subsubsection{Reachable complexity}

\begin{definition}
    For two curves $\cA$ and $\cB$, consider the \VE graph $\GVEC = \GVEY{\cA}{\cB}$. For a threshold $r \geq 0$, let $\relCZ{\leq r}{\cA}{\cB}$ be the number of vertices of $\GVEC$ that are reachable from the start vertex via paths with maximum elevation at most $r$. The quantity $\relCZ{\leq r}{\cA}{\cB}$ is the \emphi{$r$-reachable complexity} of $\cA$ and $\cB$.
\end{definition}

A similar notion (relative free space complexity) was used by Driemel \etal \cite{dhw-afdrc-12}. If $r = \distVEFr{\cA}{\cB}$, then the algorithm for computing the \VE morphing only explores vertices in the $r$-reachable region, which readily implies the following.

\begin{corollary}
    \corlab{v_e}%
    Given two polygonal curves $\curveA$ and $\curveB$ with $n$ and $m$ vertices respectively, their \VEFrechet distance, and the corresponding retractable \VE \Frechet morphing $\morphVE{\cA}{\cB}$ that realizes it, can be computed in $O( n + m + \relCC \log \relCC )$ time, where $\relCC = \relCZ{\leq r}{\cA}{\cB}$ and $r= \distVEFr{\cA}{\cB}$.
\end{corollary}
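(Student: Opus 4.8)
The plan is to bound the work done by the retractable bottleneck-path algorithm of \lemref{retractable} when it is run on the implicitly-defined graph $\GVEC = \GVEY{\curveA}{\curveB}$, rather than on an explicitly materialized copy of it. The key observation is that the algorithm of \lemref{retractable} is a variant of Dijkstra started at $s$: it only ever touches a vertex $v$ after it has found a path from $s$ to $v$ whose bottleneck (maximum elevation along the path) is at most the current extraction threshold, and it stops as soon as $t$ is handled. Since $t$ is handled at the moment the extraction threshold reaches $\distVEFr{\cA}{\cB} = r$ (because the retractable $s$–$t$ path has bottleneck exactly $r$ and the algorithm processes edges in increasing weight order), every vertex the algorithm ever inserts into its heap or marks as visited lies on some $s$-rooted path of bottleneck at most $r$; that is, it lies in the $r$-reachable region, of which there are $\relCC = \relCZ{\leq r}{\cA}{\cB}$ vertices.

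First I would set up the implicit representation: the grid of $\doggy$ has $(n-1)(m-1)$ cells, so naming the cells, computing the vertices $x_i,y_j$, and the edge directions $\vecP_i,\vecQ_j$ takes $O(n+m)$ time and is done up front. Crucially, the portal on a given grid edge — the unique minimizer of the elevation function along that edge, i.e.\ the foot of the perpendicular from a curve vertex to a curve edge — is computable in $O(1)$ time, and likewise each cell has at most four \VE-graph edges whose weights (maxima of the two endpoint elevations) are $O(1)$ to evaluate. Hence, whenever the algorithm first visits a vertex $v$, it can generate all $O(1)$ outgoing edges of $v$, together with their weights, in $O(1)$ time, and push them on the heap. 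This is exactly the ``compute the vertices and edges as they are explored'' paradigm referenced in the text.

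Next I would do the accounting. The algorithm visits at most $\relCC$ vertices, and each visited vertex contributes $O(1)$ heap insertions (decrease-key operations) and is itself removed by one delete-min; a vertex outside the $r$-reachable region is never reached, because to reach it the algorithm would have to extract from the heap an edge of weight exceeding $r$, which happens only after $t$ has already been handled and the algorithm has terminated. Therefore the heap sees $O(\relCC)$ insertions and $O(\relCC)$ delete-mins, costing $O(\relCC)$ and $O(\relCC \log \relCC)$ respectively with a Fibonacci heap, exactly as in the running-time analysis of \lemref{retractable}. Adding the $O(n+m)$ preprocessing gives the claimed $O(n + m + \relCC \log \relCC)$ bound. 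Correctness of the computed morphing is immediate from \lemref{retractable} (the retractable $s$–$t$ path is computed) together with the definition of the retractable \VEFrechet morphing.

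The main obstacle is the subtle point that running Dijkstra on an implicit graph ``truncated'' at threshold $r$ really does compute the same retractable $s$–$t$ path as running it on the full graph — one must check that no edge of weight $> r$ is needed, i.e.\ that the retractable path from $s$ to $t$ never uses an edge heavier than its own bottleneck $r$, which is immediate from \defref{retract} and the fact that $r = \distVEFr{\cA}{\cB}$ is that bottleneck value, and that the induction in the proof of \lemref{retractable} never escapes the reachable region (the recursive subpaths have bottleneck $\le r$, hence stay within the $r$-reachable set). Everything else is routine bookkeeping about the $O(1)$-cost local computation of portals and cell edges.
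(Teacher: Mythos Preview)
Your proposal is correct and follows exactly the approach the paper intends: the paper offers no formal proof for this corollary, only the one-line justification immediately preceding it (``the algorithm for computing the \VE morphing only explores vertices in the $r$-reachable region, which readily implies the following''), and your write-up is a faithful and accurate expansion of that sentence, including the lazy on-demand construction of $\GVEC$, the $O(1)$ out-degree, and the early-termination argument.
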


\begin{remark}
    The above implies that the worst case, for the above algorithm, is when $N = n m$. This happens, for example, (somewhat counterintuitively) when $\cA$ and $\cB$ are far from each other, and their \VEFrechet distance is realized by the distance of their endpoints (say, at the start of the morphing). This somewhat absurd situation can happen for real-world inputs -- follow this \href{\baseUrlX{11}}{link} for an example.  Furthermore, this quadratic behavior is probably unavoidable, as computing the \VEFrechet distance probably requires quadratic time in the worst case. We offer some approaches to avoid this for real-world input later on.

    Fortunately, for many inputs where the two curves are close to each other (e.g., their \VEFrechet distance is significantly smaller than their diameters), the running time of the algorithm of \corref{v_e} is $O\bigl( (n+m) \log (n+m)\bigr)$, see \figref{near_linear} for an example.
\end{remark}

\subsection{Basic properties of the \VE \Frechet morphing/distance}

\begin{lemma}
    \lemlab{lowerbound}%
    For any two curves $\cA$ and $\cB$, we have that $\distFr{\cA}{\cB} \geq \distVEFr{\cA}{\cB}$.
\end{lemma}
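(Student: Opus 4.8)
The plan is to show that any $x/y$-monotone morphing realizing $\distFr{\cA}{\cB}$ can be converted into a morphing of the \VE graph $\GVEY{\cA}{\cB}$ of no larger width, which by definition of $\distVEFr{\cA}{\cB}$ as the bottleneck distance from $s$ to $t$ immediately gives the inequality. Let $\mrp^*$ be an $x/y$-monotone morphing attaining the \Frechet distance, so $\WidthX{\mrp^*} = \distFr{\cA}{\cB}$. By the well-behavedness assumption (all morphings considered are segments inside each cell), $\mrp^*$ visits a sequence of cells $\CellY{i_1}{j_1}, \CellY{i_2}{j_2}, \ldots$ of the free space diagram, and in each cell it is a segment; because it is monotone, consecutive cells are adjacent (sharing a horizontal or vertical grid edge), and each time $\mrp^*$ crosses from one cell to the next it does so at some point on their shared grid edge.

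First I would replace, for each grid edge $\edge$ that $\mrp^*$ crosses, the crossing point of $\mrp^*$ on $\edge$ by the portal $\pnt_\edge$ on $\edge$ — i.e., the unique minimizer of the elevation function $\elevC$ along $\edge$. This yields a polygonal path $\mrp'$ in $\doggy$ that, inside each visited cell, is the segment connecting the two portals on the edges where $\mrp^*$ entered and left that cell; for cells adjacent to $s$ or $t$ the relevant portal has been moved to $s$ (resp. $t$), matching \defref{v:e:graph}. Thus $\mrp'$ is exactly (the morphing induced by) a path in $\GVEY{\cA}{\cB}$ from $s$ to $t$, so $\WidthX{\mrp'} \geq \distVEFr{\cA}{\cB}$. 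It remains to argue $\WidthX{\mrp'} \leq \WidthX{\mrp^*}$. The width of a well-behaved morphing is attained at a boundary point of some visited cell (since $\elevC$ restricted to each cell is convex, its max over the segment inside that cell is at an endpoint, and the endpoints lie on cell boundaries). For $\mrp'$, each such endpoint is a portal $\pnt_\edge$, and since $\pnt_\edge$ minimizes $\elevC$ along $\edge$ while $\mrp^*$ also passes through some point of $\edge$, we get $\elevX{\pnt_\edge} \leq \elevX{\mrp^* \cap \edge} \leq \WidthX{\mrp^*}$. Taking the max over all visited grid edges gives $\WidthX{\mrp'} \leq \WidthX{\mrp^*} = \distFr{\cA}{\cB}$, and combined with $\WidthX{\mrp'} \geq \distVEFr{\cA}{\cB}$ we conclude $\distFr{\cA}{\cB} \geq \distVEFr{\cA}{\cB}$.

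The main obstacle I anticipate is bookkeeping at the boundary, not the core inequality: one must check that after moving crossing points to portals the resulting path is genuinely a legal path in $\GVEC$ — in particular that consecutive portals are connected by an edge of $\GVEC$ (an incoming portal of a cell to an outgoing portal of the same cell), that the start and end conditions at $s,t$ are met given the special treatment of edges adjacent to the corners, and that degenerate cases (where $\mrp^*$ enters and leaves a cell through the same vertex, or portals coincide) do not break the argument. A secondary subtlety is the claim that the width of a well-behaved morphing is attained on cell boundaries; this follows from strict convexity of $\elevC$ on each cell (stated in the text), but should be spelled out since $\mrp'$ might, a priori, have its maximum in a cell interior — convexity rules this out because the restriction of $\elevC$ to the segment $\mrp' \cap \CellY{i}{j}$ is convex and hence maximized at an endpoint.
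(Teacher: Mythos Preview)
Your proposal is correct and follows essentially the same approach as the paper: take an optimal \Frechet morphing, snap each grid-edge crossing to that edge's portal, observe this yields a valid $s$--$t$ path in $\GVEC$, and use that portals minimize elevation on their edges to conclude the bottleneck did not increase. You supply more detail than the paper (the convexity argument that the width of a well-behaved morphing is attained on cell boundaries, and the boundary/degeneracy bookkeeping), but the core idea is identical.
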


\begin{proof}
    Consider the \Frechet morphing $\mrp$ realizing the \Frechet distance $\alpha = \distFr{\cA}{\cB}$. Track the path $\mrp$ in the free space diagram, and inside each cell snap it to the portals of the corresponding edges it crosses. Clearly, this results in a valid morphing that is a valid path in $\GVEC$ from the source to the target, and its bottleneck value is no bigger than the original value (as by definition portals were the minimum elevation points along the corresponding edges). However, $\beta =\distVEFr{\cA}{\cB}$ being the minimum bottleneck path in this graph from the start vertex to the target vertex, is definitely not bigger than $\alpha$.
\end{proof}

\begin{observation}\obslab{mono}
    The \VEFrechet morphing is monotone over the vertices.

    Specifically, let $\mrp$ be the \VEFrechet morphing between $\curveA$ and $\curveB$. If we project $\mrp$ on the $x$-axis, denoted by $x(\mrp)$, we get a path from $0$ to $\lenX{\curveA}$ on the real line. Note, that this path is not necessarily monotone, but importantly, for any vertex $v \in \curveA$ it is monotone. Specifically, let $\alpha$ be the distance of $v$ along $\curveA$ from its start. Then, the path $x(\mrp)$ crosses from one side of $\alpha$ to the other side, exactly once (assuming $v$ is an internal vertex).  To see why this is true, observe that any morphing must cross the vertical line $x = \alpha$. This line has vertices along it, and observe that in $\GVEY{\curveA}{\curveB}$ it is only possible to cross this line by using one of these vertices, as no edge crosses from one side of this line to the other (i.e.\ the vertices on this line are a separator).  However, vertices along this line have outgoing edges only to vertices that are to the right (or remain on the same $x$, but these target vertices are on grid edges adjacent to the right of this line).
\end{observation}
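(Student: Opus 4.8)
The plan is to prove Observation~\ref{observation:mono} — that the \VEFrechet morphing is monotone over the vertices — by a clean separator argument in the \DAG $\GVEY{\curveA}{\curveB}$. Fix an internal vertex $v \in \curveA$ at arclength distance $\alpha$ from the start of $\curveA$, and consider the vertical line $L: x = \alpha$ inside the free space diagram $\doggy$. The key structural fact is that $L$ coincides with a grid line of the non-uniform grid (since $\alpha = x_i$ for the appropriate index $i$), so $L$ does not pass through the interior of any grid cell. Every cell-edge of $\GVEC$ therefore either lies entirely in the closed half-plane $\{x \le \alpha\}$, lies entirely in $\{x \ge \alpha\}$, or is incident to $L$ — and in the last case, by the definition of the \VE graph, the cell edges go from a portal on $L$ only to outgoing portals of cells to the \emph{right} of $L$ (or to portals that remain on $L$ but sit on grid edges adjacent on the right). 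No edge of $\GVEC$ goes from strictly-right of $L$ to strictly-left of $L$, nor from a vertex on $L$ to a vertex strictly-left of $L$.

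\medskip

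The argument then has two steps. First, I would show the projected path $x(\mrp)$ crosses $L$ at least once: since $\mrp$ runs from $s = (0,0)$ (with $x=0 < \alpha$) to $t = (\lenX{\curveA},\lenX{\curveB})$ (with $x = \lenX{\curveA} > \alpha$, as $v$ is internal), continuity of the morphing forces $x(\mrp)$ to attain the value $\alpha$; and since the only vertices of $\GVEC$ on $L$ are the portals lying on the grid edges along $L$, and $\mrp$ as a path in $\GVEC$ can only move between the two sides of $L$ via such a vertex, the path must visit a vertex on $L$. Second, I would show it crosses \emph{exactly} once: suppose the retractable path $\mrp = z_1 \to z_2 \to \cdots$ visits $L$ twice, say it is at a vertex on $L$, later moves (weakly) into $\{x \ge \alpha\}$, and then returns to a vertex on $L$ or into $\{x < \alpha\}$. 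But by the structural fact above, once the path is at a vertex with $x \ge \alpha$, every outgoing edge keeps $x \ge \alpha$ — indeed every outgoing edge from a vertex strictly to the right stays strictly to the right, and every outgoing edge from a vertex on $L$ goes to $x \ge \alpha$. So the path can never decrease its $x$-coordinate below $\alpha$ once it has reached $\alpha$; hence it is on the side $\{x < \alpha\}$ for a prefix, touches $L$, and stays in $\{x \ge \alpha\}$ forever after. That is exactly one crossing. The identical argument with $\curveA$ and $\curveB$ swapped (vertical lines replaced by horizontal lines $y = \beta$) gives monotonicity over the vertices of $\curveB$.

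\medskip

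\textbf{The main obstacle} is nailing down the edge-direction claim precisely, i.e.\ verifying that in $\GVEC$ no edge crosses a grid line in the decreasing-$x$ direction. This is really just a careful reading of Definition~\ref{def:v:e:graph}: within a single cell $\CellY{i}{j}$, edges go from incoming portals (on the left and bottom edges) to outgoing portals (on the top and right edges), so an edge incident to the left grid line of the cell is directed rightward, and an edge incident to the right grid line either emanates from inside the cell (so its tail has smaller $x$) or is the degenerate case where portals coincide. One subtlety worth a sentence: a portal on the grid line $L$ may be shared as the outgoing portal of a cell on the left and the incoming portal of a cell on the right, so ``being at a vertex on $L$'' is a single event even though that vertex belongs to two cells; this is consistent with the above since all such a vertex's outgoing edges go into the right cell. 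Another minor point is the boundary convention that portals on grid edges adjacent to $s$ and $t$ are relocated to the corners — this only affects the two extreme grid lines, which are $x = 0$ and $x = \lenX{\curveA}$, not the line $x = \alpha$ of an internal vertex, so it causes no trouble. Once these are dispatched, the ``at most once'' and ``at least once'' steps are immediate, and the observation follows.
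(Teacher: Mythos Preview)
Your proposal is correct and follows essentially the same separator argument as the paper: the vertical grid line $x=\alpha$ through an internal vertex of $\curveA$ is crossed only at portals of $\GVEC$, and all outgoing edges from such portals point (weakly) rightward, so once the path reaches $x\ge\alpha$ it never returns. Your write-up is more careful than the paper's inline sketch (you explicitly handle the ``at least once'' direction, the shared-portal subtlety, and the corner relocation convention), but the underlying idea is identical.
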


\begin{definition}
    Given a polygonal curve $\curveA=p_1 p_2 \cdots p_n$, a \emphi{refinement} of $\curveA$ is a polygonal curve $\curveA'=p_1 Q_1 p_2 Q_2 p_3 \cdots Q_{n-1} p_n$, where for all $1\leq i<n$, $Q_i$ is a (possibly empty) sequence of points occurring in order along the (directed) edge $p_ip_{i+1}$.
\end{definition}

\begin{observation}
    If $\cA', \cB'$ are refinements of $\curveA, \curveB$ respectively, then $\distFr{\cA'}{\cB'} = \distFr{\cA}{\cB}$ -- indeed, the additional vertices from a refinement do not change the underlying curve.

    However, $\distVEFr{\cA'}{\cB'} \geq \distVEFr{\cA}{\cB}$ -- indeed, refinement corresponds to adding vertical and horizontal lines in the free space diagram that the morphing can cross only once (i.e., the morphing is monotone at the vertices). As such, the morphing realizing $\distVEFr{\cA'}{\cB'} $ is determined by more constraints, and it is thus (potentially) larger.
\end{observation}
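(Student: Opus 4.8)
(a) refinement does not change the continuous Fréchet distance; (b) refinement can only increase the \VEFrechet distance. Both statements intuitively follow from the "geometry" of the free space diagram: a refinement subdivides edges of the curves, which subdivides the rows/columns of the free space grid, but does not alter the underlying rectangle $\doggy$ or the elevation function $\elevC$ defined on it.

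For part (a), I would argue as follows. Let $\cA'$ be a refinement of $\cA$, and similarly $\cB'$ of $\cB$. As point sets (directed curves), $\cA' = \cA$ and $\cB' = \cB$, since every point of $Q_i$ already lies on the edge $p_i p_{i+1}$ and occurs in the correct order; hence the uniform parameterizations (\defref{u:param}) of $\cA'$ and $\cA$ are literally the same function, and likewise for $\cB'$ and $\cB$. Consequently the free space diagrams $\doggyY{\cA'}{\cB'}$ and $\doggyY{\cA}{\cB}$ are the same rectangle with the same elevation function; only the induced grid partition differs (the refined grid is a subdivision of the coarse one). Since $\distFr{\cdot}{\cdot}$ is defined (per the observation after \defref{width:f}) purely in terms of the existence of an $x/y$-monotone path in a sublevel set of $\elevC$ in $\doggy$ — a notion that does not reference the grid at all — we conclude $\distFr{\cA'}{\cB'} = \distFr{\cA}{\cB}$. (One should note in passing that "well behaved" morphings w.r.t.\ the finer grid are a subset of those w.r.t.\ the coarser grid, but this does not affect the value of $\distFr{}{}$, which equals the optimum over all $x/y$-monotone morphings regardless of grid, because any monotone morphing can be made well behaved by subdividing at grid-line crossings without increasing its width.)

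For part (b), the inequality $\distVEFr{\cA'}{\cB'} \geq \distVEFr{\cA}{\cB}$ follows from \obsref{mono}: a morphing in the \VE graph is monotone at every vertex, meaning it crosses each vertical line $x = x_i$ (for $x_i$ the arclength position of a vertex) exactly once, and similarly for horizontal lines. Refining $\cA$ and $\cB$ introduces additional such "one-way" vertical and horizontal lines into the free space diagram. Thus any retractable \VE morphing for $\cA', \cB'$ must, in particular, be monotone at these new lines as well, whereas the retractable \VE morphing for $\cA, \cB$ faces no such constraint there. Formally, I would show that any feasible \VE-morphing path in $\GVEY{\cA'}{\cB'}$ of bottleneck value $r$ induces a feasible path in (a graph closely related to) $\GVEY{\cA}{\cB}$ of bottleneck value at most $r$ — by "erasing" the new grid lines, the portals on the old cell boundaries only get more choices and the elevation values are the same — so $\distVEFr{\cA}{\cB} \leq \distVEFr{\cA'}{\cB'}$.

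\textbf{The main obstacle} is making part (b) precise: the portals of $\GVEY{\cA}{\cB}$ are the \emph{minima} of $\elevC$ along whole edges of the coarse grid, whereas in $\GVEY{\cA'}{\cB'}$ a coarse edge is split into several sub-edges, each with its own portal, and the \VE morphing is forced to route through one of these intermediate portals rather than through the coarse portal (which may lie in the interior of a different sub-edge). So it is not literally true that a coarse \VE path is a sub-case of a refined \VE path; rather, the refined morphing is \emph{more constrained}, which is exactly why its width can only go up. I would handle this by the monotonicity viewpoint rather than a direct graph embedding: take the morphing $\mrp'$ realizing $\distVEFr{\cA'}{\cB'}$, observe it is a monotone-at-coarse-vertices morphing that is a segment inside each coarse cell's refinement, and argue that snapping it to the coarse portals (as in the proof of \lemref{lowerbound}) yields a valid path in $\GVEY{\cA}{\cB}$ whose bottleneck is no larger, giving $\distVEFr{\cA}{\cB} \leq \distVEFr{\cA'}{\cB'}$. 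The equality $\distFr{\cA'}{\cB'} = \distFr{\cA}{\cB}$ needs only the trivial observation that the curves are unchanged, so it is not the hard part.
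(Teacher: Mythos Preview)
Your proposal is correct and follows the same approach as the paper: part (a) via ``refinement does not change the underlying curve,'' and part (b) via the monotonicity-at-vertices constraint from \obsref{mono}. The paper treats this as a one-line observation (``more constraints, hence potentially larger''), whereas you correctly identify the one technical wrinkle---that coarse and refined portals need not coincide---and resolve it with the snapping argument of \lemref{lowerbound}; this is the natural way to make the paper's informal justification rigorous, not a different route.
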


\section{Computing the regular \Frechet distance}

\subsection{Direct monotonization}

The basic idea to compute the \Frechet distance using the new algorithm is to refine the curves till their \VEFrechet morphing becomes monotone.

\begin{observation}
    If the \VEFrechet morphing is $x/y$-monotone, then it realizes the (regular) \Frechet distance between the curves. Indeed, if the morphing is monotone, then it is a valid morphing considered by \defref{width_f} for the \Frechet distance, and thus $\distVEFr{\cA}{\cB} \geq \distFr{\cA}{\cB}\geq \distVEFr{\cA}{\cB}$, where the latter inequality is from \lemref{lowerbound}.
\end{observation}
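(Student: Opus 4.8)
The plan is to sandwich $\distFr{\cA}{\cB}$ between $\distVEFr{\cA}{\cB}$ on both sides. The lower bound $\distFr{\cA}{\cB} \ge \distVEFr{\cA}{\cB}$ is already established in \lemref{lowerbound}, so the only thing left is the matching upper bound $\distFr{\cA}{\cB} \le \distVEFr{\cA}{\cB}$, together with the remark that the witness attaining it is exactly the given \VE morphing.

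First I would spell out what the \VEFrechet morphing $\mrp$ is as a geometric object: it is the curve in $\doggy = \doggyY{\cA}{\cB}$ traced out by the retractable path in $\GVEC$ from $s$ to $t$, consisting of one straight segment per visited cell, with consecutive segments meeting at portals. Since the edges of $\GVEC$ adjacent to $s$ and $t$ have had their portals relocated to the corners $(0,0)$ and $(\lenX{\cA},\lenX{\cB})$ (see \defref{v:e:graph}), the endpoints of $\mrp$ are exactly those two corners, so $\mrp$ meets the endpoint requirement of a morphing. By definition its width equals $\distVEFr{\cA}{\cB}$ — and this width is attained at a vertex of $\GVEC$, since the elevation function is convex along each segment, hence maximized at a segment endpoint (i.e.\ a portal).

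Next I would invoke the hypothesis that $\mrp$ is $x/y$-monotone. Monotonicity makes $\mrp$ automatically non-self-intersecting, so it is a genuine (well behaved) morphing, and it lies in the restricted class $\MrpMY{\cA}{\cB}$ of $x/y$-monotone morphings over which $\distFr{\cA}{\cB}$ is defined as a minimum in \defref{width:f}. Therefore $\distFr{\cA}{\cB} = \min_{\mrp' \in \MrpMY{\cA}{\cB}} \WidthX{\mrp'} \le \WidthX{\mrp} = \distVEFr{\cA}{\cB}$. Combining this with \lemref{lowerbound} yields $\distVEFr{\cA}{\cB} \le \distFr{\cA}{\cB} \le \distVEFr{\cA}{\cB}$, forcing equality throughout; in particular $\mrp$ itself is an optimal \Frechet morphing and realizes $\distFr{\cA}{\cB}$.

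The only genuinely delicate point — and the step I expect to require the most care — is the bookkeeping in the paragraph above: verifying that a path in the combinatorial \DAG $\GVEC$ really does unfold into a well behaved morphing with the correct endpoints (this is precisely where the portal-relocation convention at $s$ and $t$ is needed), and that ``$x/y$-monotone as a walk in $\GVEC$'' is the same condition as ``$x/y$-monotone as a curve in $\doggy$.'' Everything else is just chaining the two inequalities, so the proof is short once this identification is made explicit.
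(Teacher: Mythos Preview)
Your proposal is correct and follows exactly the same approach as the paper: the observation's ``proof'' is just the two-line sandwich argument (monotonicity puts $\mrp$ in $\MrpMY{\cA}{\cB}$, giving $\distFr{\cA}{\cB} \le \WidthX{\mrp} = \distVEFr{\cA}{\cB}$, and \lemref{lowerbound} gives the reverse). Your additional bookkeeping about endpoints, portal relocation, and well-behavedness is more careful than what the paper spells out, but it is not a different route.
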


\subsubsection{How morphing encodes the motion.}
A \VEFrechet morphing $\mrp$ in a free space diagram $\doggy = \doggyY{\cA}{\cB}$ of two curves $\cA$ and $\cB$ with $n$ and $m$ vertices, respectively, is a polygonal path with $t = n+ m +O(1)$ vertices. Thus, a morphing can be represented as $\mrp = (x_1,y_1), \ldots, (x_t,y_t)$. A consecutive pair of vertices of the morphing $(x_i, y_i), (x_{i+1}, y_{i+1})$ corresponds to two directed subsegments $\seg \subseteq \cA$ and $\segA \subseteq \cB$. The matching of these two subsegments encodes a linear motion starting at the start points and ending at the endpoints.  See \figref{morphing}.

\begin{figure}
    \phantom{}%
    \hfill%
    \includegraphics[width=0.25\linewidth]{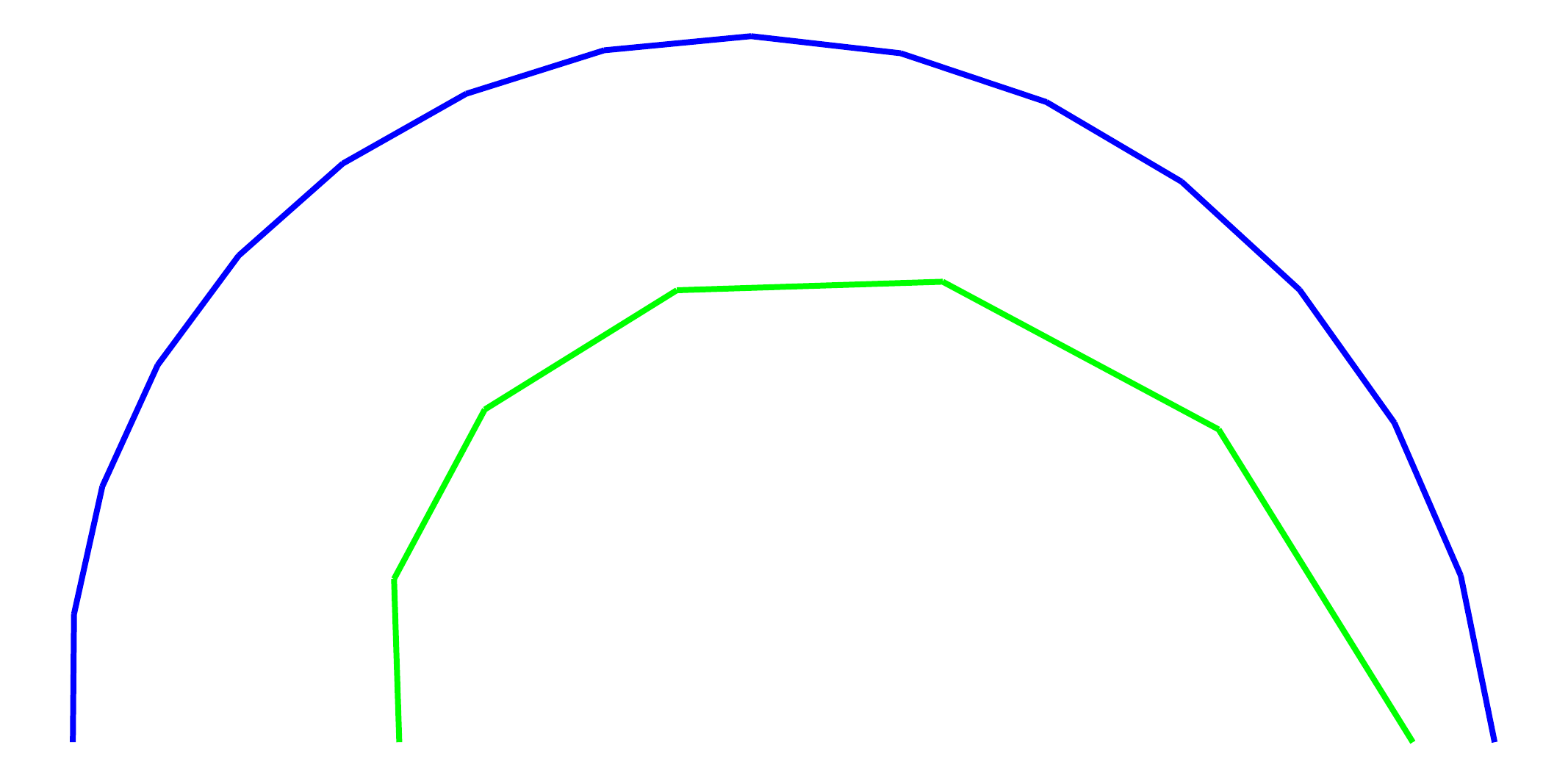} \hfill%
    \includegraphics[width=0.25\linewidth]{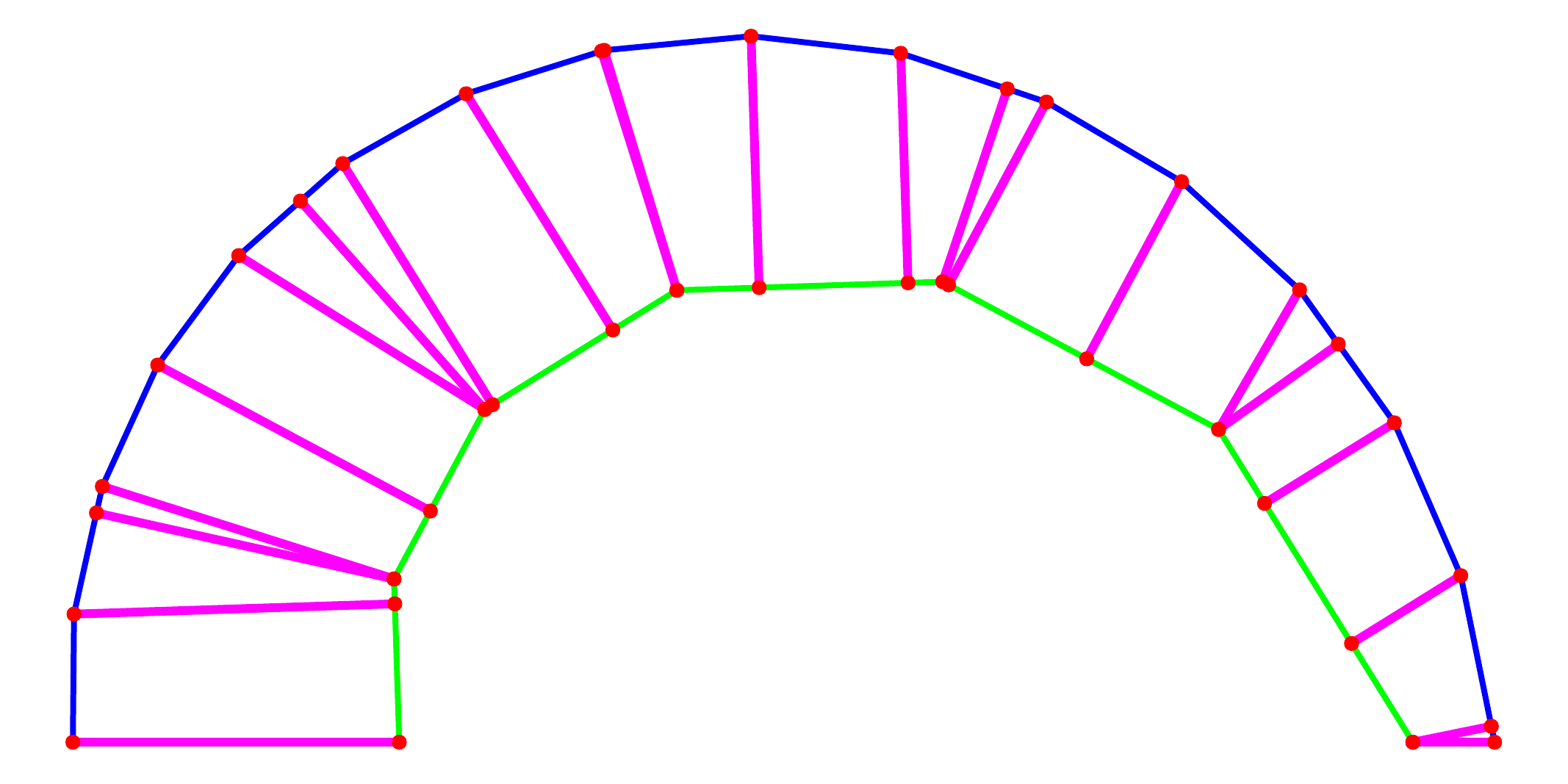} \hfill%
    \phantom{}
    \caption{Left: Two curves. Right: Their \Frechet morphing. Since the \Frechet morphing computed inside each cell of the free space diagram is a segment, it is enough to mark all the critical configurations when the morphing enters/leaves a cell. Each critical configuration is depicted on the right by a segment connecting the two points being matched. In between two such configurations, the morphing is the linear interpolation between the two configurations.  See \href{\baseUrlX{15/}}{here} for an animation of this specific morphing.}%
    \figlab{morphing}
\end{figure}

\begin{definition}
    Given a morphing $\mrp = (x_1,y_1), \ldots, (x_t,y_t)$, let $X_i = \max_{j=1}^i x_j$ and $Y_i = \max_{j=1}^i y_j$, for $i=1,\ldots, t$. Let $\monoX{\mrp} = (X_1,Y_1), (X_2,Y_2), \ldots, (X_t, Y_t)$ denote the \emphi{direct monotonization} of $\mrp$ (which can be computed in $O(t)$ time).
\end{definition}

\begin{observation}%
    \obslab{interval}%
    Let $\VEmorphing$ be the \VEFrechet morphing between $\cA$ and $\cB$.  Then $\distFr{\cA}{\cB} \in [\distVEFr{\cA}{\cB}, \WidthX{\monoX{\VEmorphing}}]$, where the lower bound follows from \lemref{lowerbound} and the upper bound follows as $\monoX{\VEmorphing}$ is a monotone morphing and thus is a valid morphing considered by \defref{width_f} for the \Frechet distance.
\end{observation}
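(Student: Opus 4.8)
The plan is to prove the two endpoints of the interval separately, since they come from different sources. For the lower inequality $\distFr{\cA}{\cB} \ge \distVEFr{\cA}{\cB}$ I would simply quote \lemref{lowerbound}, which asserts exactly this and whose proof (snapping the optimal \Frechet morphing onto the portals of the cells it visits) is already recorded above; nothing further is needed here.

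For the upper inequality $\distFr{\cA}{\cB} \le \WidthX{\monoX{\VEmorphing}}$, the crux is to verify that $\monoX{\VEmorphing}$ is an admissible competitor in the minimization of \defref{width:f}, i.e.\ that it belongs to $\MrpMY{\cA}{\cB}$. Writing $\VEmorphing = (x_1,y_1),\ldots,(x_t,y_t)$ and $\monoX{\VEmorphing} = (X_1,Y_1),\ldots,(X_t,Y_t)$ with $X_i = \max_{j\le i} x_j$ and $Y_i = \max_{j\le i} y_j$, I would check: (i) \emph{monotonicity} --- both $(X_i)_i$ and $(Y_i)_i$ are nondecreasing by construction, and along each segment $(X_i,Y_i)\to(X_{i+1},Y_{i+1})$ the two coordinates interpolate linearly between nondecreasing endpoints, so the polygonal path is $x/y$-monotone; (ii) \emph{endpoints} --- since $\VEmorphing$ starts at $(0,0)$ we get $X_1=Y_1=0$, and since $\VEmorphing\subseteq\doggy$ every $x_j\le\lenX{\cA}$ and every $y_j\le\lenX{\cB}$ while $(x_t,y_t)=(\lenX{\cA},\lenX{\cB})$, so $X_t=\lenX{\cA}$ and $Y_t=\lenX{\cB}$; (iii) \emph{containment} --- each $X_i\in[0,\lenX{\cA}]$ and $Y_i\in[0,\lenX{\cB}]$ as maxima of such values, and $\doggy$ is convex, so the connecting segments stay inside; (iv) \emph{simplicity} --- a polygonal path whose two coordinate sequences are both nondecreasing traces a simple arc once one discards the zero-length segments that occur whenever a prefix maximum fails to increase. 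Granting these, $\monoX{\VEmorphing}\in\MrpMY{\cA}{\cB}$, hence $\distFr{\cA}{\cB}=\min_{\mrp\in\MrpMY{\cA}{\cB}}\WidthX{\mrp}\le\WidthX{\monoX{\VEmorphing}}$.

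I expect the only mildly delicate point to be (iv), namely reconciling the ``not self-intersecting'' clause in the definition of a morphing with the fact that the running-maximum construction can create repeated consecutive vertices. The resolution is to regard $\monoX{\VEmorphing}$ as its image curve (equivalently, to delete redundant vertices first), after which the fact that both coordinates are nondecreasing makes self-intersection impossible. Everything else is a direct unwinding of the definition of $\monoX{\cdot}$, of \defref{width:f}, and of \lemref{lowerbound}.
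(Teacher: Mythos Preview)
Your proposal is correct and follows exactly the same route as the paper: the paper treats this as a self-evident observation whose entire justification is the sentence already embedded in the statement (lower bound from \lemref{lowerbound}, upper bound because $\monoX{\VEmorphing}$ is a monotone morphing admissible in \defref{width:f}). Your write-up simply spells out in more detail why $\monoX{\VEmorphing}\in\MrpMY{\cA}{\cB}$, which the paper leaves implicit.
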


For many natural inputs $\WidthX{\VEmorphing}=\WidthX{\monoX{\VEmorphing}}$. Specifically, this occurs if either $\VEmorphing$ is monotone to begin with or if the ``zigzagging'' in $\VEmorphing$ is happening in parts where the leash is relatively short, and thus the fixup due to direct monotonization does not affect the global bottleneck. Note that in this case, $\distVEFr{\cA}{\cB}=\WidthX{\monoX{\VEmorphing}}$, and so by \obsref{interval}, $\monoX{\VEmorphing}$ is an optimal morphing realizing $\distFr{\cA}{\cB}$.

Given a morphing $\mrp$, one can compute its width in linear time, as by convexity it is realized by the elevation of one of the vertices of $\mrp$.
Thus, a first step to trying to compute $\distFr{\cA}{\cB}$ is to compute the \VEFrechet morphing $\VEmorphing$ between $\cA$ and $\cB$, and then directly monotonize it. If $\WidthX{\VEmorphing} = \WidthX{\monoX{\VEmorphing}}$, then we have computed the \Frechet distance, and so we are done. Otherwise, we will use refinement to narrow the interval $[\distVEFr{\cA}{\cB}, \WidthX{\monoX{\VEmorphing}}]$, as described below.

\subsubsection{Monotonization via refinement}
\seclab{monotone_refinement}

\myparagraph{The lower/upper bound.}

Let $\VEmorphing$ be the \VEFrechet morphing between $\cA$ and $\cB$.
By \obsref{interval}, $\distFr{\cA}{\cB}\in [\distVEFr{\cA}{\cB}, \WidthX{\monoX{\VEmorphing}}]$.
Thus, we have initial upper and lower bounds for $\distFr{\cA}{\cB}$. The idea is to improve, repeatedly, these bounds until they are equal. At this point, the computed morphing realizes the \Frechet distance between the two curves.

\begin{observation}
    Let $\eps \in (0,1)$. One can refine $\cA$ and $\cB$ into curves $\cA'$ and $\cB'$, with \VEFrechet morphing $\VEmorphing'$, such that
    \[
        \WidthX{\monoX{\VEmorphing'}}-\distVEFr{\cA'}{\cB'} \leq \eps
    \]
    Indeed, create $\cA'$ and $\cB'$ by introducing vertices along $\cA$ and $\cB$ respectively, such that no edge has length exceeding $\eps/2$.  Recall that by \obsref{mono}, $\VEmorphing'$ is already monotone over vertices, and thus $\monoX{\VEmorphing'}$ does not change the edges from the curves that points in the morphing are mapped to.  Thus, the direct monotonizations of the \VE morphings of $\cA'$ and $\cB'$ can increase the distance by at most twice the maximum edge length.
\end{observation}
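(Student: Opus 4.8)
The plan is to take the obvious refinement: subdivide every edge of $\cA$ and of $\cB$ into equal pieces, finely enough that every edge of the resulting curves $\cA'$ and $\cB'$ has length at most $\eps/2$. This is a legal refinement, so $\cA'$ and $\cB'$ have the same underlying point sets (and the same uniform parameterizations) as $\cA$ and $\cB$. Write $\VEmorphing'=(x_1,y_1),\ldots,(x_t,y_t)$ for the \VEFrechet{} morphing of $\cA'$ and $\cB'$, and $\monoX{\VEmorphing'}=(X_1,Y_1),\ldots,(X_t,Y_t)$ for its direct monotonization, with $X_i=\max_{j\le i}x_j$ and $Y_i=\max_{j\le i}y_j$. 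Since the width of a morphing is realized at one of its vertices (by convexity of the elevation function along each in-cell segment, as noted earlier), it suffices to show $\elevY{X_i}{Y_i}\le\distVEFr{\cA'}{\cB'}+\eps$ for every $i$.

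The crux is to bound $X_i-x_i$ and $Y_i-y_i$ by $\eps/2$, and this is exactly where \obsref{mono} enters. Let $e$ be the edge of $\cA'$ containing $x_i$ and let $\alpha$ be the arc-length position of its right endpoint (a vertex of $\cA'$). If $e$ is the last edge of $\cA'$ the bound $0\le X_i-x_i\le\lenX{e}\le\eps/2$ is immediate since $X_i\le\lenX{\cA'}=\alpha$. Otherwise $\alpha$ is an internal vertex, so by \obsref{mono} the path $x(\VEmorphing')$ crosses the value $\alpha$ exactly once, necessarily from left to right (it starts at $0$ and ends at $\lenX{\cA'}>\alpha$); since $x_i\le\alpha$, that crossing has not yet occurred at morphing-vertex $i$, hence $x_j\le\alpha$ for all $j\le i$ and $X_i=\max_{j\le i}x_j\le\alpha$. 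Thus $X_i$ lies on the same edge $e$ as $x_i$ and $0\le X_i-x_i\le\lenX{e}\le\eps/2$. The symmetric argument on $\cB'$ gives $0\le Y_i-y_i\le\eps/2$.

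Finally, assemble the estimate with the triangle inequality. As $\cA'(x_i)$ and $\cA'(X_i)$ lie on the common straight edge $e$ at arc-length distance $X_i-x_i$, we have $\dY{\cA'(x_i)}{-\cA'(X_i)}\le X_i-x_i\le\eps/2$, and similarly $\dY{\cB'(y_i)}{-\cB'(Y_i)}\le\eps/2$; also $\elevY{x_i}{y_i}=\dY{\cA'(x_i)}{-\cB'(y_i)}\le\WidthX{\VEmorphing'}=\distVEFr{\cA'}{\cB'}$ since $(x_i,y_i)$ lies on $\VEmorphing'$. Hence
\begin{align*}
\elevY{X_i}{Y_i}
& = \dY{\cA'(X_i)}{-\cB'(Y_i)}
\le \dY{\cA'(X_i)}{-\cA'(x_i)} + \elevY{x_i}{y_i} + \dY{\cB'(y_i)}{-\cB'(Y_i)} \\
& \le \eps/2 + \distVEFr{\cA'}{\cB'} + \eps/2 = \distVEFr{\cA'}{\cB'}+\eps .
\end{align*}
Taking the maximum over $i$ yields $\WidthX{\monoX{\VEmorphing'}}\le\distVEFr{\cA'}{\cB'}+\eps$, which is the claim.

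The step I expect to require the most care is the first half of the crux: arguing that the running-maximum operation underlying $\monoX{\cdot}$ cannot drag a matched point off its current edge. This is precisely what vertex-monotonicity (\obsref{mono}) rules out — without it, a large early $x_j$ could push $X_i$ far past $x_i$ into an unrelated part of $\cA'$, inflating the leash by far more than $O(\eps)$; given that control, the remainder is a routine triangle-inequality estimate (together with the minor, but worth-checking, point that $\monoX{\VEmorphing'}$ is still a segment inside each cell, so that its width is attained at a vertex).
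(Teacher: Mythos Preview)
Your proof is correct and follows the same approach as the paper: refine so every edge has length at most $\eps/2$, invoke vertex-monotonicity (\obsref{mono}) to conclude that direct monotonization never moves a matched point off its current edge, and then bound the leash increase by the triangle inequality. The paper's justification is a two-sentence sketch; you have simply made each of those steps explicit (the running-maximum argument pinning $X_i$ to the same edge as $x_i$, and the verification that $\monoX{\VEmorphing'}$ remains a segment per cell so its width is attained at a vertex).
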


Thus, in the limit, one can compute the \Frechet distance from a refinement of the two curves. Fortunately, there is a finite refinement that realizes the \Frechet distance.
\begin{lemma}
    \lemlab{refinement_exact}%
    Let $\cA$ and $\cB$ be two polygonal curves of total complexity $n$. Then, there is a set $S$ of $O(n^2)$ vertices, such that if we refine $\cA$ and $\cB$ using $S$, then for the resulting two curves, $\cA'$ and $\cB'$, we have that
    \begin{math}
        \distVEFr{\cA'}{\cB'} \leq \distFr{\cA}{\cB} = \WidthX{ \monoX{\morphVE{\cA'}{\cB'}} }.
    \end{math}
\end{lemma}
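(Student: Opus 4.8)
The plan is to identify a finite set $S$ of "candidate" vertices along the two curves such that, once we refine by $S$, the $\VE$ graph of the refined curves contains a monotone path realizing $\distFr{\cA}{\cB}$, and moreover every non-monotone shortcut that the $\VE$ morphing could otherwise take has been "blocked" by a new vertex. Recall from \obsref{mono} that the $\VE$ morphing is always monotone across vertices; the only source of non-monotonicity is backtracking \emph{inside} an edge. So the goal is to insert enough vertices so that no backtracking inside an edge can survive in the retractable $\VE$ path, while (by the refinement observation) the regular \Frechet distance is unchanged.

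First I would recall the structure of the optimal (monotone) \Frechet morphing $\mrp^\ast$ realizing $\alpha = \distFr{\cA}{\cB}$: it is well behaved, a segment inside each cell, and it is governed by the three event types from \secref{event:types} — vertex-vertex, vertex-edge, and monotonicity events. The key geometric fact is that an optimal monotone morphing can be chosen so that within each cell it is either a portal-to-portal segment (a $\VE$ edge) or it "stalls" at a monotonicity event, i.e.\ it touches a point on one edge $\edge$ that is equidistant from two vertices $u,v$ of the other curve. Each monotonicity event is determined by a triple (one edge, two vertices), so there are $O(n^2)$ of them; for each such event I would add to $S$ the point on $\edge$ realizing it (and, symmetrically, the analogous points coming from the bisector structure). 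This is the set $S$, of size $O(n^2)$.

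Next I would argue two things. (i) \emph{Upper bound unchanged / lower bound still valid:} since $S$ consists of points lying on edges of $\cA$ and $\cB$, the refined curves $\cA',\cB'$ are refinements in the sense of \defref{...} (the refinement definition), so $\distFr{\cA'}{\cB'} = \distFr{\cA}{\cB}$ by the refinement observation, and $\distVEFr{\cA'}{\cB'} \le \distFr{\cA'}{\cB'} = \alpha$ by \lemref{lowerbound}. (ii) \emph{The monotonized $\VE$ morphing achieves $\alpha$:} I would show that after refining by $S$, the retractable $\VE$ morphing $\morphVE{\cA'}{\cB'}$, once directly monotonized, has width exactly $\alpha$. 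The idea is that in the refined free space diagram, the optimal monotone morphing $\mrp^\ast$ can be tracked cell by cell; snapping it to portals (as in the proof of \lemref{lowerbound}) yields a \emph{monotone} path in $\GVEY{\cA'}{\cB'}$ of width $\le \alpha$, because every place where $\mrp^\ast$ would have needed a monotonicity event now has a vertex of the refined curve sitting exactly there, so the snapped path passes through that vertex rather than backtracking around it. Hence the bottleneck distance in $\GVEY{\cA'}{\cB'}$ is $\le \alpha$ and is realized by a monotone path; since the retractable path has bottleneck no larger, $\distVEFr{\cA'}{\cB'} \le \alpha$, and direct monotonization cannot increase the bottleneck past $\alpha$ (it only "freezes" during intervals whose height is already $\le \alpha$ on the monotone witness). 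Combined with $\WidthX{\monoX{\morphVE{\cA'}{\cB'}}} \ge \distFr{\cA'}{\cB'} = \alpha$ from \obsref{interval}, equality follows.

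The main obstacle — the step I expect to require real care — is (ii), specifically verifying that inserting the $O(n^2)$ monotonicity-event points is genuinely \emph{sufficient} to prevent the retractable $\VE$ morphing from producing a monotonization of width $> \alpha$. The subtlety is that the retractable path is chosen greedily (by bottleneck optimality of all subpaths), not as "the snapped $\mrp^\ast$", so one must argue that \emph{any} bottleneck-$\le \alpha$ path in $\GVEY{\cA'}{\cB'}$ monotonizes without blowing up past $\alpha$; equivalently, that all backtracking in such a path happens within intervals of height $\le \alpha$. This should follow because a non-monotone detour inside an edge of the refined curve that reached height $>\alpha$ would have to "go around" a monotonicity configuration, but the relevant witness point on that edge is now a vertex, and by \obsref{mono} the morphing is monotone across it — so such a detour is structurally impossible in $\GVEY{\cA'}{\cB'}$. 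Pinning down this case analysis (handling also vertex-edge events, degenerate coincidences of portals, and the boundary cells adjacent to $s$ and $t$) is where the bulk of the work lies; the counting bound $|S| = O(n^2)$ is then immediate from the number of (edge, vertex, vertex) triples.
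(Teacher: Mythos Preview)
Your proposed set $S$ is the wrong one, and this breaks both the size bound and the correctness argument.

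First, the counting: monotonicity events are indexed by triples (one edge of one curve, two vertices of the other), so there are $\Theta(n^3)$ of them in general, not $O(n^2)$. Your claim ``so there are $O(n^2)$ of them'' is simply a miscount, and the lemma explicitly asserts $|S| = O(n^2)$.

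Second, and more importantly, the paper does not refine at monotonicity events at all. It fixes $r = \distFr{\cA}{\cB}$ and takes $S$ to be the set of all interior points on edges of $\cA$ that are at distance \emph{exactly} $r$ from some vertex of $\cB$, and vice versa. Each (edge, vertex) pair contributes at most two such points, giving $|S| = O(n^2)$ for free. The payoff is a clean structural property you do not have: in the refined free space diagram, the elevation range along any grid edge cannot contain $r$ in its interior, so every grid edge is either entirely at elevation $\le r$ or entirely at elevation $\ge r$. Since $\distVEFr{\cA'}{\cB'} \le r$, the retractable $\VE$ morphing uses only grid edges of the first kind. The paper then observes that directly monotonizing a zigzag inside a row (or column) produces a path that crosses the \emph{same} grid edges as the zigzag did; hence the monotonized morphing also stays at elevation $\le r$ everywhere, and $\WidthX{\monoX{\morphVE{\cA'}{\cB'}}} \le r$, which together with \obsref{interval} gives equality.

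Your step (ii) is, as you yourself flag, the crux, and your sketch does not close it. The assertion that ``a non-monotone detour \dots\ that reached height $> \alpha$ would have to go around a monotonicity configuration'' is not established, and even granting it you would need the corresponding bisector point to lie on the \emph{specific refined edge} where the backtracking occurs --- which it need not, since bisector intersections and the backtracking interval are governed by different pairs of vertices. The paper sidesteps this entirely via the all-or-nothing feasibility of grid edges, which is exactly what refining at the level set $\{\elevC = r\}$ buys you and what refining at bisector points does not.
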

\begin{proof}
    Let $r=\distFr{\cA}{\cB}$, and introduce a vertex in the interior of an edge of $\cA$ if this interior point is in distance exactly $r$ from some vertex of $\cB$, and vice versa. Let $\cA'$ and $\cB'$ be the resulting refined curves. Consider the grid $H$ of $\doggyY{\cA'}{\cB'}$, and any edge $e$ of this grid. The value of the elevation function on $e$ is an interval: $\elevX{e} = \Set{\elevX{p}}{ p \in e}$, and this interval cannot contain $r$ in its interior (if it did, we would have broken it by introducing a vertex at the corresponding interior point).
    Thus, any edge in $H$ whose elevation interval has a value above $r$ cannot contain any interior point with value below $r$. Thus, such an edge is infeasible (in its interior) for the \VEFrechet morphing $\morphVE{\cA'}{\cB'}$, as we know $\distVEFr{\cA'}{\cB'} \leq r$. Namely, every edge of the free space is either entirely feasible (for $r$) or infeasible.

    \begin{figure}[h]
        \centerline{\includegraphics{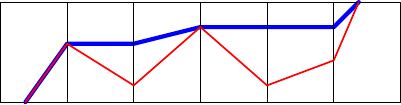}}
        \caption{The red path has a zigzag in this row. The blue path depicts the monotonized path.}
        \figlab{monotone}
    \end{figure}

    Now, if $\mrp = \morphVE{\cA'}{\cB'}$ is monotone then we are done.  Otherwise, the portions of $\mrp$ that are not monotone can be broken into a sequence of ``zigzags'' in columns and rows. Consider such a (say) row with non-monotonicity. The path (i.e., morphing) enters the row from the bottom on the left side, and leaves on the right from the top, see \figref{monotone}. The monotonized portion of this path inside this row enters and leaves through the same extreme portals, but importantly, it uses the same edges. All these edges have maximum elevation at most $r$, and thus the width of the monotonized morphing is at most $r$ (and thus, equal to $r$), establishing the claim.
\end{proof}

\begin{remark}
    \remlab{b_refinement}%
    It is straightforward to check if a given morphing $\mrp$ is monotone. If it is not, then compute for each curve the portions where this morphing travels on them backward, and introduce a vertex in the middle of each backward portion. Now, we recompute a morphing on the refined curve. Clearly, this reduces the ``back and forth'' non-monotone motion on the two curves. One can naturally repeat this process several times till the non-monotonicity disappears or becomes so small that it can be ignored.  We refer to this as \emphi{bisection refinement}.
\end{remark}

\subsection{Computing the regular \Frechet distance}

The above suggests natural strategies for computing the \Frechet distance:
\begin{compactenumI}
    \smallskip%
    \item First, compute the \VEFrechet morphing.  We now apply the bisection refinement, described in \remref{b_refinement}, to the resulting morphing (recomputing the morphing using \VEFrechet if needed).  In practice, a few rounds of this bisection refinement make these portions of the morphing involve leashes shorter than the bottleneck (even after direct monotonization), and we end up with the \Frechet distance.
\end{compactenumI}
\medskip%

\begin{remark}
    Underlying the bisection refinement strategy is the implicit assumption that the matching induced by the \VEFrechet morphing is (relatively) \emphi{stable}, and matches roughly the same portions of the curves after refinement. This seems to be the case for all the curves we considered. Understanding this stability might be an interesting venue for future research.
\end{remark}

\begin{compactenumI}[resume]
    \smallskip%
    \item Another (more systematic) approach to refinement is to consider all the vertices $V$ (say) of $\cA$ involved in a subpath of the morphing that is not monotone on an edge $e$ of $\cB$. We refine $e$ along all the points that are intersections of bisectors of pairs of points of $V$ with $e$.  Since repeated application of this process converges to the refinement set of \lemref{refinement_exact} (the set used in this lemma is too expensive to compute explicitly), this implies that sooner or later this yields the \Frechet distance (after direct monotonization).  Essentially, this performs a search of the monotonicity events, see \secref{event_types}, that are relevant to the current morphing.
\end{compactenumI}

\subsection{Speed-up via simplification}

A natural approach to try and avoid the quadratic complexity of computing the \VEFrechet distance is to use simplification of the input before computing the distance. Here, we explore how to use simplification so that we can compute the exact distance.  We start with a simple linear-time simplification algorithm \cite{ahkww-fdcr-06,dhw-afdrc-12}.

\begin{algorithm}
    \alglab{simplification}%
    For a polygonal curve $\cA= p_1, p_2, \ldots, p_n$, and a parameter $\delta>0$, the algorithm marks the initial vertex $p_1$, and sets it as the current vertex. The algorithm repeatedly scans $\cA$ from the current vertex till reaching the first vertex $p_i$ that is at least $\delta$ distance from the current vertex. It marks $p_i$, and sets it as the current vertex. The algorithm repeats this until reaching the final vertex, which is also marked. The resulting subcurve connecting the marked vertices $\cAs = \simpX{\curveA,\delta}$ is the \emphi{$\delta$-simplification} of $\cA$, and it is not hard to verify (and it is also well known \cite{dhw-afdrc-12}) that $\distFr{\cA}{\cAs}\leq \delta$.
\end{algorithm}

Observe that
\begin{math}
    \distFr{\cA}{\cB} - 2\delta%
    \leq%
    \distFr{\cAs}{\cBs}%
    \leq%
    \distFr{\cA}{\cB} + 2\delta.
\end{math}
(same holds for the \VE version), where $\cAs = \simpX{\cA,\delta}$ and $\cBs = \simpX{\cB,\delta}$.

\begin{lemma}%
    \lemlab{greedy_morph}%
    Given a curve $\cA$ with $n$ vertices, and a simplified curve $\cAs$ computed by applying \algref{simplification} to $\cA$, with a parameter $\delta$, one can compute, in $O(n)$ time, a \Frechet morphing between $\cA$ and $\cAs$ of width $\ell \leq 2\delta$, such that $\distFr{\cA}{\cAs} \leq \ell$.
\end{lemma}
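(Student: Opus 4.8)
The plan is to exhibit the morphing explicitly from the structure produced by \algref{simplification}, rather than invoking the general \VEFrechet machinery. Recall that \algref{simplification} marks a subsequence $p_{i_1} = p_1, p_{i_2}, \ldots, p_{i_k} = p_n$ of the vertices of $\cA$, where each $p_{i_{\ell+1}}$ is the first vertex, scanning forward from $p_{i_\ell}$, at distance at least $\delta$ from $p_{i_\ell}$; the simplified curve is $\cAs = p_{i_1} p_{i_2} \cdots p_{i_k}$. The key observation driving the construction is that every vertex $p_j$ with $i_\ell \le j < i_{\ell+1}$ is within distance $\delta$ of $p_{i_\ell}$ (by the ``first vertex at distance at least $\delta$'' rule, all earlier ones were strictly closer), and similarly $p_{i_{\ell+1}}$ is within distance (essentially) $\delta$ of $p_{i_\ell}$ as well — actually it may be slightly more, but it is at distance at most $\delta + \lenX{p_{i_{\ell+1}-1} p_{i_{\ell+1}}}$; to be safe one argues it is at distance at most $2\delta$, or one treats the segment $p_{i_\ell} p_{i_{\ell+1}}$ of $\cAs$ directly.

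First I would define the morphing cell-by-cell over the grid of $\doggyY{\cA}{\cAs}$. For the portion of the walk where the $\cA$-agent traverses the subcurve from $p_{i_\ell}$ to $p_{i_{\ell+1}}$, hold the $\cAs$-agent fixed at the vertex $p_{i_\ell}$ until the $\cA$-agent reaches $p_{i_{\ell+1}}$, and then (while the $\cA$-agent is momentarily at $p_{i_{\ell+1}}$) slide the $\cAs$-agent along the single edge $p_{i_\ell} p_{i_{\ell+1}}$ of $\cAs$ from $p_{i_\ell}$ to $p_{i_{\ell+1}}$. This is an $x/y$-monotone, well-behaved morphing, since inside each grid cell it is either a horizontal segment or a vertical segment. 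The width of the first phase (agent on $\cA$ between $p_j$ and $p_{j+1}$, agent on $\cAs$ fixed at $p_{i_\ell}$) is the maximum elevation over that cell, which by convexity is attained at a vertex of the cell, i.e.\ $\max(\dY{p_j}{-(-p_{i_\ell})},\dY{p_{j+1}}{-(-p_{i_\ell})}) = \max(\dY{p_j}{p_{i_\ell}}, \dY{p_{j+1}}{p_{i_\ell}})$, wait — I will just write it as the distance from a vertex of $\cA$ in the block to $p_{i_\ell}$, which is $\le \delta$ by the scanning rule, except possibly when $p_{j+1} = p_{i_{\ell+1}}$, where it is $\le 2\delta$ (triangle inequality through $p_{i_{\ell+1}-1}$, or a direct bound). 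For the second phase (agent on $\cAs$ sliding along $p_{i_\ell}p_{i_{\ell+1}}$, agent on $\cA$ fixed at $p_{i_{\ell+1}}$) the elevation at a point $p_{i_\ell} + s(p_{i_{\ell+1}} - p_{i_\ell})/\dY{p_{i_\ell}}{p_{i_{\ell+1}}}$ is a convex function of $s$, hence maximized at an endpoint: it is either $\dY{p_{i_{\ell+1}}}{p_{i_\ell}} \le 2\delta$ or $0$. So every cell the morphing visits has elevation at most $2\delta$, giving $\WidthX{\mrp} \le 2\delta =: \ell$; and then $\distFr{\cA}{\cAs} \le \WidthX{\mrp} = \ell$ because $\mrp$ is a monotone morphing and thus a valid competitor in \defref{width:f}. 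The running time is $O(n)$ since the morphing has $O(n)$ vertices (one block-traversal phase per vertex of $\cA$, one edge-slide phase per vertex of $\cAs$) and each is produced by a constant-time computation during a single scan of $\cA$.

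The main obstacle I expect is the careful bookkeeping of the ``boundary'' distance $\dY{p_{i_\ell}}{p_{i_{\ell+1}}}$: the scanning rule only guarantees the vertices strictly before $p_{i_{\ell+1}}$ are within $\delta$, so $p_{i_{\ell+1}}$ itself could be as far as $\delta$ plus one edge length away — hence the factor $2$ rather than $1$ in the width bound. One must confirm that a single morphing edge from $p_{i_\ell}$ to $p_{i_{\ell+1}}$ (not a longer detour) suffices and that its elevation never exceeds $2\delta$; this is exactly the convexity/triangle-inequality argument sketched above, and is the only place where anything beyond routine checking is needed. Everything else — monotonicity, well-behavedness, the linear time bound, and the final inequality $\distFr{\cA}{\cAs}\le\ell$ — follows directly from the definitions and \defref{width:f}.
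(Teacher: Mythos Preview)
Your proposed morphing does not achieve width $\leq 2\delta$ in general, because $\dY{p_{i_\ell}}{p_{i_{\ell+1}}}$ is not bounded by any multiple of $\delta$. The scanning rule in \algref{simplification} only forces the \emph{intermediate} vertices $p_j$ with $i_\ell \le j < i_{\ell+1}$ to lie within $\delta$ of $p_{i_\ell}$; the next marked vertex $p_{i_{\ell+1}}$ can be arbitrarily far away, since a single edge of $\cA$ may have arbitrary length. Concretely, take $\cA = p_1 p_2$ with $\dY{p_1}{p_2} = 100$ and $\delta = 1$: then $\cAs = p_1 p_2$, and your morphing parks the $\cAs$-agent at $p_1$ while the $\cA$-agent walks the whole edge, so the leash reaches length $100$. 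Your own bound ``$\delta$ plus one edge length'' is the correct one, but that edge length is not controlled by $\delta$, so the step ``to be safe one argues it is at most $2\delta$'' (and likewise the triangle inequality through $p_{i_{\ell+1}-1}$) simply fails.

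The paper's construction is genuinely different: it lets the $\cAs$-agent move along each simplification edge $s = p_{i_\ell} p_{i_{\ell+1}}$ while the $\cA$-agent traverses the corresponding piece. It first maps every point of the piece to its \emph{nearest point} on $s$ --- a possibly non-monotone morphing of width at most $\delta$, since each intermediate vertex is within $\delta$ of the endpoint $p_{i_\ell}\in s$ --- and then monotonizes. Because projection onto $s$ is $1$-Lipschitz and $p_{i_\ell}$ projects to itself, all these nearest points lie within $\delta$ of $p_{i_\ell}$ along $s$; hence after monotonization the $\cAs$-agent is still within $\delta$ of $p_{i_\ell}$ whenever the $\cA$-agent is at an intermediate vertex, and the triangle inequality through $p_{i_\ell}$ gives leash length at most $2\delta$. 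The endpoints of the piece are matched to themselves with leash length zero, which is exactly what rescues the long-edge case that your ``park then slide'' scheme cannot handle.
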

\begin{proof}
    Consider the case that $\cAs$ is a single segment $s$.  Consider first the ``silly'' morphing $\mrp'$, that matches every vertex of $\cA$ to its nearest neighbor on $s$ (in between, it interpolates linearly). Clearly, this is a weak morphing between $\cA$ and $\cAs$, but it is also clearly optimal. Consider the monotone version $\mrp$ of $\mrp'$ that one gets from $\mrp'$ by never moving back along $s$, and let $\ell$ be its width.  It is not hard to verify that $\ell \leq 2\delta$.

    Now remove the assumption that $\cAs$ is a single segment.  By construction, $\cAs$ is a subsequence of $\cA$, including the start and end vertices. Thus, the edges of $\cAs$ partition $\cA$ into pieces, which start and end at the corresponding endpoints of the edge. Therefore, in this more general case, we can apply the above argument to each edge of $\cAs$ and its corresponding piece of $\cA$, and then concatenate all these morphings together.
\end{proof}

Given a $\delta$-simplification $\cAs$, the above gives us a fast way to compute a morphing between $\cA$ and $\cAs$. This morphing may not be optimal, but its error is bounded by $2\delta$, and we are free to set $\delta$ to our desired value (depending on which the simplified curve may have significantly fewer vertices).  This basic approach works for any subcurve specified by a subsequence of the vertices of $\cA$.

\begin{remark}
    A natural conjecture is that for any sub-curve $\cAs$ of $\cA$ defined by a subset of the vertices of $\cA$ (including the same start and end vertices), the above quantity $\distFr{\cA}{\cAs} \leq \ell \leq 2 \distFr{\cA}{\cAs}$. To see that this is false, let $m > 1$ be an integer, and let $\delta = 1/2m$, and consider the one dimensional curve and its subcurve:
    \begin{align*}
      \cA &= 0, 1, \delta, 1-\delta, 2\delta, 1-2\delta, \ldots,
            1/2-\delta, 1/2+\delta, 1/2.\\
      \cAs &= 0, \phantom{1, \delta,} 1-\delta, 2\delta, 1-2\delta, \ldots,
             1/2-\delta, 1/2+\delta, 1/2.
    \end{align*}
    The above algorithm would compute $\ell \approx 1$ (or $\ell \approx 1/2$ if it computes the exact \Frechet distance between every segment of the simplification and its corresponding subcurve), but $\distFr{\cA}{\cAs} = O( \delta)$.
\end{remark}

\subsubsection{Combining morphings}

\myparagraph{Getting a monotone morphing via simplification.}  The input is the two curves $\cA$ and $\cB$, and a parameter $\delta$.  We compute $\cAs = \simpX{\cA,\delta}$ and $\cBs = \simpX{\cB,\delta}$. Next, compute
\begin{equation*}
    \mrp_1 = \monoX{ \morphVE{\cA}{\cAs} },
    \mrp_2 = \monoX{ \morphVE{\cAs}{\cBs} },
    \quad\text{and}\quad
    \mrp_3 = \monoX{ \morphVE{\cBs}{\cB} }.
\end{equation*}

For the sake of simplicity of exposition, think about these morphings $\mrp_i$ as piecewise linear $1$-to-$1$ functions. For example, $\mrp_1: [0,\lenX{\cA}] \rightarrow [0, \lenX{\cAs}]$. Thus, combining the above three morphings $\mrp = \mrp_3 \circ \mrp_2 \circ \mrp_1: [0,\lenX{\cA}] \rightarrow [0,\lenX{\cB}]$ yields the desired monotone morphing.

\begin{remark}
    In practice, one does not need to compute $\mrp_1 = \monoX{ \morphVE{\cA}{\cAs} }$, but one can do a ``cheapskate'' greedy morphing (which is monotone) as done by the algorithm of \lemref{greedy_morph}.
\end{remark}

The algorithm for combining two morphings is a linear-time algorithm somewhat similar in nature to the merge subroutine used in merge-sort. We omit the straightforward but tedious details.
\begin{lemma}
    Given two compatible $x/y$-monotone morphings $\mrp_1$ and $\mrp_2$ (i.e., polygonal curves in $\Re^2$), their combined morphing $\mrp_2 \circ \mrp_1$ can be computed in linear time.
\end{lemma}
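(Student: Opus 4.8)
The plan is to view each monotone morphing $\mrp_i$ as a monotone staircase-like polygonal curve in the plane, and to combine $\mrp_1:[0,\lenX{\cA}]\to[0,\lenX{\cAs}]$ with $\mrp_2:[0,\lenX{\cAs}]\to[0,\lenX{\cBs}]$ into a single monotone polygonal morphing $\mrp_2\circ\mrp_1:[0,\lenX{\cA}]\to[0,\lenX{\cBs}]$. First I would make precise what ``compatible'' means: the range $[0,\lenX{\cAs}]$ of $\mrp_1$ must coincide with the domain of $\mrp_2$ (which it does, since both are parameterized by arc length along the common intermediate curve $\cAs$). Since each $\mrp_i$ is $x/y$-monotone, for generic parameters it is the graph of a nondecreasing piecewise-linear function in the intermediate coordinate; the only subtlety is vertical/horizontal segments (where the morphing pauses on one curve while the other advances), which I will handle by treating $\mrp_i$ as a monotone relation rather than a strict function, exactly as with degenerate steps in a merge.

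The core of the argument is the merge procedure. Let $0=a_0<a_1<\cdots$ be the breakpoints of $\mrp_1$ in its intermediate ($\cAs$) coordinate, i.e.\ the $y$-values of vertices of $\mrp_1$, and let $0=b_0<b_1<\cdots$ be the breakpoints of $\mrp_2$ in its domain ($\cAs$) coordinate, i.e.\ the $x$-values of vertices of $\mrp_2$. Merge these two sorted lists into a common refinement $0=c_0<c_1<\cdots<c_k=\lenX{\cAs}$ of the intermediate axis. On each slab $[c_{\ell},c_{\ell+1}]$ both $\mrp_1$ and $\mrp_2$ are affine (as relations), so the composition restricted to the preimage of that slab is again an affine piece, and concatenating these $k$ affine pieces in order yields $\mrp_2\circ\mrp_1$ as a polygonal curve. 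Monotonicity of the composition is immediate: a composition of two nondecreasing maps is nondecreasing, so the resulting curve is $x/y$-monotone; it also clearly starts at $(0,0)$ and ends at $(\lenX{\cA},\lenX{\cBs})$, hence is a valid morphing. The running time is $O(|\mrp_1|+|\mrp_2|)$: merging the two sorted breakpoint lists is linear, and each slab contributes $O(1)$ work to emit one vertex of the output, so the total number of output vertices is $O(k)=O(|\mrp_1|+|\mrp_2|)$, giving the claimed linear time.

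The main obstacle, and the reason the paper calls the details ``tedious,'' is the careful bookkeeping around the degenerate (axis-parallel) segments and coincident breakpoints: when $\mrp_1$ has a horizontal segment at intermediate value $a$, it corresponds to an interval of $\cAs$ collapsed to a single point of $\cA$, and dually a vertical segment of $\mrp_1$ is an interval of $\cA$ mapped to a single point of $\cAs$; these must be interleaved correctly with the analogous degeneracies of $\mrp_2$ so that no point of the common intermediate axis is visited ``out of order,'' and so that pauses on $\cAs$ induced by $\mrp_1$ and by $\mrp_2$ are reconciled (a pause on $\cAs$ from $\mrp_1$'s side can be absorbed, since $\mrp_2$ simply stays put there, and vice versa). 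Handling this cleanly is most easily phrased by the pointer-advancing invariant of a merge: maintain pointers into the vertex lists of $\mrp_1$ and $\mrp_2$, always advance whichever has the smaller current intermediate coordinate (breaking ties by advancing a horizontal/degenerate step first), and emit the corresponding output vertex; the invariant is that after processing intermediate value $c$, the partial output is exactly $\mrp_2\circ\mrp_1$ restricted to $[0,x(\mrp_1^{-1}(c))]$. Verifying this invariant is elementary but case-heavy, which is why we omit it.
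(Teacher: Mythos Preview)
Your approach is exactly what the paper has in mind: it explicitly says the algorithm is ``similar somewhat in nature to merge (from merge-sort)'' and then omits the ``straightforward but tedious details,'' giving no proof at all. Your merge of the breakpoint lists along the common intermediate axis, with pointer-advancing and careful handling of axis-parallel (degenerate) segments, is precisely the fleshed-out version of that one-line sketch, so there is nothing to compare.
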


\begin{remark}
    The above provides us with a ``fast'' algorithm for computing a morphing between two input curves $\cA$ and $\cB$ in time that is effectively near linear (if the simplification is sufficiently aggressive). Specifically, we simplify the two curves, and then compute the monotone \Frechet distance between the simplified curves by computing the \VEFrechet morphing and refining it till it becomes monotone. We then stitch the three morphings together to get a morphing for the original two curves. Clearly, by making the simplification sufficiently small, this converges to the optimal \Frechet morphing.

    The question is thus the following: Given a \Frechet morphing, can we test ``quickly'' whether it realizes the \Frechet distance between the two curves (without computing the \Frechet distance directly).
\end{remark}

\subsection{Computing the exact \Frechet distance using simplification}

\subsubsection{Morphing sensitive simplification}

The challenge in using simplification for computing the \Frechet distance is to be able to argue that the computed morphing is optimal. To this end, we are interested in computing a lower bound on the \Frechet distance from simplified curves, such that the upper/lower bounds computed match, thus implying that the computed solution is indeed optimal.

\myparagraph{A weak lower bound.}

Consider two curves $\cA$ and $\cB$, and their respective simplifications $\cAs$ and $\cBs$. By the triangle inequality, we have
\begin{equation*}
    \distFr{\cA}{\cB}
    \geq
    \lbY{\cA}{\cB} =
    \distFr{\cAs}{\cBs}
    -
    \distFr{\cA}{\cAs}
    -
    \distFr{\cB}{\cBs}.
\end{equation*}
The \emphi{lower bound} $\lb$ is usually easy to compute (or bound from below), since the simplifications usually provide an immediate upper bound on $\distFr{\cA}{\cAs}$ and $\distFr{\cB}{\cBs}$.

\begin{definition}
    For any $\psi >1$, a \Frechet morphing $\mrp$ of $\cA$ and $\cB$ is a \emphi{$\psi$-approximate morphing} if
    \begin{math}
        \distFr{\cA}{\cB} \leq \WidthX{\mrp} \leq \psi \distFr{\cA}{\cB}.
    \end{math}
    Thus $\lb = \WidthX{\mrp} / \psi$ is a lower bound on the \Frechet distance between $\cA$ and $\cB$.
\end{definition}

\begin{definition}
    For a morphing $\mrp \in \MrpY{\cA}{\cB}$, a lower bound $\lb \leq \distFr{\cA}{\cB}$, and a point $p=(x,y) \in \mrp$, the \emphi{slack} at $p$ is $\slackX{p} = \max( \lb - \elevX{p}, 0)$.  A point $p \in \mrp$ is \emphi{tight} if $\slackX{p} = 0$.
\end{definition}

Consider an optimal (say retractable) \Frechet morphing $\mrp$ between $\cA$ and $\cB$. If the lower bound is the \Frechet distance, then we expect only the point in the morphing realizing the bottleneck to be tight.  One can map the slack from the morphing back to the original curves. Specifically, for $v \in \cA \cup \cB$, let $\max_\mrp(v)$ be the maximum length leash attached to $v$ during the morphing $\mrp$ (this can correspond to the maximum elevation along a vertical or horizontal segment in $\mrp$).

\begin{definition}
    For a vertex $v \in \cA \cup \cB$, the \emphi{slack} of $v$ is
    \begin{math}
        \slackX{v}%
        =%
        \max ( \lb - \max\nolimits_\mrp(v), 0 ).
    \end{math}
\end{definition}

Intuitively, the slack of a vertex is how much one can move it around without introducing too much error -- observe that for some vertices the slack is zero, implying they cannot be moved. So, consider a curve $\cA = p_1, \ldots, p_n$, and a ``simplified'' subcurve of it $\cAs$. For a vertex $p_i \in \cA$, its \emphi{representative} in $\cAs$, denoted by $\repX{p_i}$, is the point $p_j$, such that $p_j$ is a vertex of $\cAs$, $j$ is maximal, and $j \leq i$.

\begin{definition}
    For some real number $\tau \geq 1$, the curve $\cAs$ is a \emphw{$(\mrp,\tau)$-sensitive simplification} (or simply \emphi{$\mrp$-simplification}) of $\cA$ if, for all $i$, we have that $\dY{p_i}{-\repX{p_i}} \leq \slackX{p_i}/\tau$.
\end{definition}

\subsubsection{Computing a lower-bound on the \Frechet distance}

Consider an edge $e = p_i p_j$ of the simplified curve. The \emphi{width} of the edge, denoted by $\wFX{e} = \distFr{p_ip_j}{ p_i p_{i+1} \ldots p_{j}}$, and consider the hippodrome
\begin{equation*}
    \hippoX{e}
    =
    e \oplus \wFX{e}
    =%
    \Set{ p \in \Re^d}{ \dSY{p}{e} \leq \wFX{e}},
\end{equation*}
where $\dSY{p}{e}$ is the distance of $p$ from $e$.  More generally, for a point $p \in e \subseteq \cAs$, we denote by $\wFX{p} = \wFX{e}$.

Let $\mrp_\cA:[0,\lenX{\cA}] \rightarrow [0,\lenX{\cAs}$, be a (monotone) morphing of $\cA$ to $\cAs$ ($\mrp_\cB$ is defined similarly for $\cB$ and $\cBs$).  There is a natural simplification of the elevation function $\elevY{x}{y}$. Specifically, we define
\begin{align*}
  \elevSY{x}{y}
  &=%
    \dY{\cAs(\mrp_\cA(x)\bigr)}{ - \cBs\bigl(\mrp_\cB(y)\bigr) }
    - \wFX{\cAs(\mrp_\cA(x)\bigr)}
    - \wFX{\cBs(\mrp_\cB(y)\bigr)}
  \\&%
  \leq%
  \dY{\cAs(\mrp_\cA(x)\bigr)}{ - \cBs\bigl(\mrp_\cB(y)\bigr) }
  - \dY{\cA(x)}{-\cAs(\mrp_\cA(x)\bigr)}
  - \dY{\cB(y)}{-\cBs(\mrp_\cB(y)\bigr)}
  \leq %
  \elevY{x}{y}.
\end{align*}
Namely, computing the \Frechet distance using $\elevCS$ (instead of $\elevC$), would provide us with a lower bound on the \Frechet distance between the two curves. Of course, this \Frechet distance computation can be done directly on the corresponding ``elevation'' function between the two simplified curves $\cAs$ and $\cBs$:
\begin{equation*}
    \forall (x,y) \in [0,\lenX{\cAs}] \times [0,\lenX{\cBs}]
    \qquad
    \elevDY{x}{y}
    =%
    \dY{\cAs(x)}{ - \cBs( y ) }
    - \wFX{\cAs(x\bigr)}
    - \wFX{\cBs(y\bigr)}.
\end{equation*}
Naturally, one can replace $\wFX{\cAs(x\bigr)}$ and $\wFX{\cBs(y\bigr)}$ by larger quantities, and still get the desired lower bound. Let $\distDFr{\cAs}{\cBs}$ denote this lower-bound on the \Frechet distance.

\subsubsection{Computing the exact \Frechet distance}

The above suggests an algorithm for computing the exact \Frechet distance. Start with a low-quality approximate morphing $\mrp$ between $\cA$ and $\cB$ (this can be computed directly by simplifying the two curves). Use this to get a more sensitive approximation to the two curves, and compute the \Frechet distance between the two simplified curves. This yields a morphing between the two original curves (i.e., upper bound), and a matching lower bound on the \Frechet distance between the two curves. If the two are equal, then we are done.

The critical property of this algorithm is that it never computes directly the exact \Frechet distance between the two original curves, which might be too large to handle in a reasonable time. The result is summarized in the following lemma.

\begin{lemma}
    Let $\cA$ and $\cB$ be two curves, and let $\mrp$ be a $\psi$-approximate morphing between them, for some $\psi > 1$.  Let $\tau \geq 1$ be some constant, and consider $(\mrp,\tau)$-sensitive simplifications $\cAs,\cBs$, of $\cA,\cB$, respectively. Let $\mrp_2: \cAs \rightarrow \cBs$ be the optimal \Frechet morphing between $\cAs$ and $\cBs$, and let $\mrp_1: \cA \rightarrow \cAs$, and $\mrp_3: \cBs \rightarrow \cB$, be the natural morphings associated with the simplifications. This gives rise to a natural morphing $\mrp' = \mrp_3 \circ \mrp_2 \circ \mrp_1$ from $\cA$ to $\cB$.  If $\WidthX{\mrp'} = \distDFr{\cAs}{\cBs}$, then $\mrp'$ realizes the optimal \Frechet distance between $\cA$ and $\cB$.  where $\distDFr{\cAs}{\cBs}$ is the \Frechet distance between $\cAs$ and $\cBs$ under the modified elevation function $\elevD$.
\end{lemma}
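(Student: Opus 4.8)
The plan is to chain together the three inequalities that the earlier development has set up, and observe that the hypothesis forces every inequality in the chain to be tight. First I would record the two "sandwich" facts we already have. On the lower side, the modified elevation function $\elevD$ satisfies $\elevDY{x}{y} \le \elevY{x}{y}$ pointwise (after composing with $\mrp_\cA,\mrp_\cB$), exactly as computed in the display preceding the lemma; hence the optimal monotone morphing under $\elevD$ has width no larger than the optimal monotone morphing under $\elevC$, which is $\distFr{\cA}{\cB}$. Spelled out: for any monotone morphing $\mrp_2$ between $\cAs$ and $\cBs$, the composition $\mrp_3 \circ \mrp_2 \circ \mrp_1$ is a monotone morphing between $\cA$ and $\cB$ whose width under $\elevC$ is at least $\distFr{\cA}{\cB}$, and is at least its width evaluated under $\elevD$; minimizing over $\mrp_2$ gives $\distDFr{\cAs}{\cBs} \le \distFr{\cA}{\cB}$. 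On the upper side, $\mrp' = \mrp_3 \circ \mrp_2 \circ \mrp_1$ is an honest monotone morphing between $\cA$ and $\cB$, so by \defref{width:f}, $\distFr{\cA}{\cB} \le \WidthX{\mrp'}$.

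Next I would combine these: putting the two bounds together yields
\begin{equation*}
    \distDFr{\cAs}{\cBs} \;\le\; \distFr{\cA}{\cB} \;\le\; \WidthX{\mrp'}.
\end{equation*}
The hypothesis of the lemma is precisely $\WidthX{\mrp'} = \distDFr{\cAs}{\cBs}$, so the two ends of this chain coincide, forcing $\distFr{\cA}{\cB} = \WidthX{\mrp'}$. That is the claimed conclusion: $\mrp'$ is a monotone morphing of width exactly the \Frechet distance, hence realizes it.

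The one point that needs a little care — and I expect it to be the main (modest) obstacle — is justifying the first inequality $\distDFr{\cAs}{\cBs}\le\distFr{\cA}{\cB}$ cleanly, because $\distDFr{}{}$ is defined as a \Frechet-type optimum over morphings between the \emph{simplified} curves, evaluated with $\elevD$, whereas the natural bound comes from pulling an optimal morphing of $\cA,\cB$ back through $\mrp_1,\mrp_3$. I would handle this exactly as in the proof of \lemref{lowerbound}: take the optimal monotone morphing $\mrp^\ast$ realizing $\distFr{\cA}{\cB}$, reparameterize it through the simplification morphings to obtain a monotone morphing between $\cAs$ and $\cBs$ (this is where monotonicity of $\mrp_1,\mrp_3$ and compatibility of the domains is used), and then invoke the pointwise bound $\elevDY{x}{y}\le\elevY{x}{y}$ from the displayed computation to conclude that its width under $\elevD$ is at most $\WidthX{\mrp^\ast}=\distFr{\cA}{\cB}$; since $\distDFr{\cAs}{\cBs}$ is the minimum such width, the inequality follows. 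Everything else is the two-line sandwiching argument above, plus the trivial observation that $\mrp'$ is a genuine monotone morphing (its monotonicity is inherited from the monotonicity of the three factors), so it is admissible in \defref{width:f}.
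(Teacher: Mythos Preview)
Your proposal is correct and is exactly the argument the paper intends: the paper does not write out a proof of this lemma at all, treating it as immediate from the preceding discussion, which establishes that $\distDFr{\cAs}{\cBs}$ is a lower bound on $\distFr{\cA}{\cB}$ (the displayed computation $\elevSY{x}{y}\le\elevY{x}{y}$), while $\WidthX{\mrp'}$ is trivially an upper bound since $\mrp'$ is a valid monotone morphing; your sandwich argument is precisely this.

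One small comment: your first-paragraph derivation of $\distDFr{\cAs}{\cBs}\le\distFr{\cA}{\cB}$ is not quite valid as written---from ``for every $\mrp_2$, the composed width under $\elevC$ is $\ge\distFr{\cA}{\cB}$ and also $\ge$ the width under $\elevD$'' you cannot conclude anything relating $\distFr{\cA}{\cB}$ to the minimum $\elevD$-width, since both inequalities go the same way. You evidently noticed this, because your final paragraph gives the correct argument (pull back the optimal $\mrp^\ast$ through the simplification morphings and apply the pointwise bound), which is exactly how the paper's preceding paragraph justifies that $\elevD$ yields a lower bound. With that correction the proof is complete and matches the paper's (implicit) reasoning.
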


It is easy to verify that for $\tau$ large enough, the two simplified curves are the original curves, and the above would compute the \Frechet distance. Thus, the resulting algorithm is iterative -- as long as the above algorithm fails, we double the value of $\tau$, and rerun the algorithm, till success.

\subsection{The sweep distance}
\seclab{sweep_distance}

\subsubsection{Definitions}
Another measure of distance between curves is the \CDTW (Continuous Dynamic Time Warping) distance. This distance has a neat description in our setting -- given a morphing $\mrp$ between $\cA$ and $\cB$, we define the two functions
\begin{equation*}
    f(x) = \min_{y: (x,y) \in \mrp} \elevY{x}{y}
    \qquad\text{and}\qquad
    g(y) = \min_{x: (x,y) \in \mrp} \elevY{x}{y}.
\end{equation*}
In words, $f(x)$ (resp. $g(y)$) is the minimum length leash attached to the point $\cA(x)$ (resp. $\cB(y)$) during the morphing of $\mrp$. We define the \emphi{warping cost} of $\mrp$ to be
\begin{equation}
    \costX{\mrp}
    =%
    \int_{x=0}^{\lenX{\cA}} f(x) \dX{x}
    +
    \int_{y=0}^{\lenX{\cB}} g(y) \dX{y}.
    \eqlab{integrals}
\end{equation}

The \emphi{Continuous Dynamic Time Warping} (\emphi{\CDTW}), between the two curves is
\begin{equation*}
    \dTWY{\cA}{\cB} = \min_{\mrp \in \MrpMY{\cA}{\cB}} \costX{\mrp}.
\end{equation*}

Computing the \CDTW between curves is not easy, and only partial results are known -- see the introduction for details. We point out, however, that our available machinery readily leads to computing an upper bound on the \CDTW. Furthermore, this upper bound can be made to converge to the \CDTW distance via simple refinements of the two curves.
In particular, for a well-behaved morphing, the integrals of \Eqref{integrals} can be computed exactly.

\subsubsection{The sweep distance}

\myparagraph{The cost of morphing along a segment inside a cell.} %
Given two directed segments $\seg = \pA_1 \pA_2$ and $\seg' =\pB_1\pB_2$, the natural linear morphing between them is the one where the two points move at constant speed. Specifically, at time $t \in [0,1]$, we have the two moving points $\pA(t ) = (1-t)\pA_1 + t \pA_2$ and $\pB(t ) = (1-t)\pB_1 + t \pB_2$.  The cost of this morphing boils down to a function
\begin{equation*}
    f(x) = \sqrt{ a  x^2 + bx + c },
\end{equation*}
for some constant $a,b,c$, and integrating $f$ on an interval $[0,\lenX{\seg}]$. Using the computer algebra system \texttt{maxima} \cite{maxima-23} yields the following indefinite integral (which we subsequently verified is correct).
\begin{equation*}
    F(x)
    =
    \int f(x) \dX{x}
    =%
    \pth{\frac{c}{2 \sqrt{a}} - \frac{b^2}{8a^{3/2}}}
    \asinh \frac{2 a x  +b }{\sqrt{4ac - b^2}}
    + \pth{\frac{x}{2} + \frac{b}{4a} } f(x).
\end{equation*}
In particular, $\priceX{\seg} = F(\lenX{\seg}) - F(0)$ is the price of the \CDTW charged to $\seg$. We repeat the same argument, but now the morphing is interpreted as a point moving on $\seg'$. This would yield a similar indefinite integral $G(\cdot)$, and the price of $\seg'$ is $\priceX{\seg'} = G(\lenX{\seg'}) - G(0)$.

In the free space diagram, the morphing between $\seg$ and $\seg'$ corresponds to a segment that lies in a cell, and $\priceX{\seg} + \priceX{\seg'}$ is the price the sweep distance assigns to this segment in the free space diagram.

\myparagraph{Computing the sweep distance.}  The above gives us a new pricing of the edges of the \VE graph (see \defref{v_e_graph}), as every edge is a segment in the free space diagram. We can then use Dijkstra to compute the shortest path in the \VE graph. Naturally, the resulting path is not monotone, but monotonicity can be easily achieved by introducing middle points, as described in \remref{b_refinement}. We then recompute the shortest path, getting a morphing, and repeat the refinement step if the morphing is not yet monotone. Usually, one round of refinement seems to suffice. We output the computed distance (for the associated monotone morphing).  We refer to this quantity as the \emphi{sweep distance} between the two original curves, denoted by $\dSWY{\cA}{\cB}$.

This readily yields the following.

\begin{lemma}
    The sweep distance between two curves $\cA$ and $\cB$ can be computed by repeatedly running Dijkstra on the appropriately defined \DAG. The sweep distance computed is an upper bound on the \CDTW distance between the two curves.
\end{lemma}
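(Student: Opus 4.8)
The plan is to establish the two assertions of the lemma separately: first, that the sweep distance can be computed by repeatedly running Dijkstra on a suitable \DAG, and second, that the resulting quantity upper bounds the \CDTW distance. The first part is essentially a description of the algorithm already sketched in the paragraph preceding the lemma, so the ``proof'' mostly amounts to verifying that the pieces fit together. Concretely, I would argue as follows. Given the two curves $\cA$ and $\cB$, form the \VE graph $\GVEY{\cA}{\cB}$ of \defref{v:e:graph}, which is a \DAG, and re-price each of its edges: an edge corresponds to a segment inside a free space cell, i.e.\ to a linear morphing between two directed subsegments $\seg\subseteq\cA$ and $\seg'\subseteq\cB$, and we assign it the cost $\priceX{\seg}+\priceX{\seg'}$ computed via the closed-form antiderivatives $F$ and $G$ above. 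Since these costs are additive along a path, a shortest $s$--$t$ path in this \DAG under this pricing can be found by a single Dijkstra run (or a linear-time \DAG relaxation). The path obtained is not necessarily $x/y$-monotone; we apply the bisection refinement of \remref{b:refinement}, introducing midpoints on the backward portions of each curve, rebuild the \DAG on the refined curves, and recompute — iterating until the path is monotone (or its non-monotonicity is negligible). The final monotone path is a well behaved morphing $\mrp$, and the exactness of the integral evaluation for such a morphing (noted at the end of \secref{sweep:distance}) means $\costX{\mrp}$ equals the sum of the edge prices along the path. This establishes the first assertion essentially by construction.

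For the upper-bound assertion, the key observation is that the sweep distance is, by definition, $\costX{\mrp}$ for one particular $x/y$-monotone, well behaved morphing $\mrp$ (the one produced after the refinement iterations terminate). Since $\mrp \in \MrpMY{\cA}{\cB}$, it is one of the morphings over which the minimum defining $\dTWY{\cA}{\cB}$ is taken, so $\dTWY{\cA}{\cB} = \min_{\mrp'\in\MrpMY{\cA}{\cB}}\costX{\mrp'} \leq \costX{\mrp} = \dSWY{\cA}{\cB}$. The only thing requiring care here is to check that the cost functional $\costX{\cdot}$ used in the algorithm — built from summing the per-segment prices $\priceX{\seg}+\priceX{\seg'}$ over the cells the morphing visits — genuinely coincides with the integral expression in \Eqref{integrals} for a well behaved monotone morphing. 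This amounts to verifying that, along such a morphing, the functions $f(x)$ and $g(y)$ of \Eqref{integrals} reduce to the elevation along the morphing itself (there is a unique $y$ for each $x$ by monotonicity, modulo vertical segments, and vertical segments contribute only to $g$, horizontal only to $f$), and that partitioning the domain of integration cell-by-cell recovers exactly the telescoping sum $\sum(\priceX{\seg}+\priceX{\seg'})$.

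The step I expect to be the main obstacle — or at least the one demanding the most bookkeeping — is reconciling $f$ and $g$ of \Eqref{integrals} with the per-cell prices when the morphing has axis-parallel pieces. On a vertical segment of $\mrp$ (the person stands still on $\cA$ while the dog traverses a piece of $\cB$), the $x$ value is fixed, so this piece contributes nothing to $\int f$, but contributes $\int g$ over the traversed $y$-interval; the antiderivative machinery handles this as the ``$G$'' integral with $\seg$ a degenerate (point) segment. One must confirm the closed form degenerates correctly ($a\to 0$ limits, etc.) and that the bookkeeping of which cells/edges contribute to $\int f$ versus $\int g$ is consistent with the edge pricing. A second, more minor subtlety is termination of the refinement loop: in the worst case one may only get convergence rather than exact monotonicity in finitely many rounds, but this does not affect the stated lemma, which only claims that the computed quantity is an upper bound — any monotone (or near-monotone, up to the pragmatic tolerance discussed in the floating-point remark) morphing produced suffices. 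I would phrase the proof so that the upper-bound claim follows immediately from $\mrp$ being a feasible monotone morphing, and relegate the identification of $\costX{\mrp}$ with the sum of edge prices to a brief remark citing the exactness of the integral for well behaved morphings.
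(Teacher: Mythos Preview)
Your proposal is correct and follows essentially the same approach as the paper --- indeed, the paper provides no explicit proof at all, treating the lemma as an immediate consequence of the preceding algorithmic description (``This readily yields the following''). Your write-up is considerably more careful than the paper's, spelling out the identification of $\costX{\mrp}$ with the sum of edge prices and flagging the degenerate (axis-parallel) cases and the termination issue; the paper simply waves at all of this, so you are filling in detail the authors left implicit rather than taking a different route.
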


\begin{remark}
    The number of times Dijkstra has to be invoked by the above algorithm for real-world inputs seems to be once or twice, as the Sweep Distance tends to be larger for longer curves (since it's an integral (of a non-negative function) over the curves).
\end{remark}

Given a curve $\cA$, a \emphi{splitting} of $\cA$ is the curve resulting from introducing a vertex in the middle of each segment of $\cA$. Let $\cA^i$ denote the curve resulting from $i$ iterations of splitting. We then have the following.

\begin{lemma}
    $\lim_{i\rightarrow \infty} \dSWY{\cA^i}{\cB^i} = \dTWY{\cA}{\cB}$.
\end{lemma}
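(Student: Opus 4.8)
The plan is a sandwich argument: establish $\liminf_{i\to\infty}\dSWY{\cA^i}{\cB^i}\ge \dTWY{\cA}{\cB}$ and $\limsup_{i\to\infty}\dSWY{\cA^i}{\cB^i}\le \dTWY{\cA}{\cB}$, which together force the limit to exist and equal $\dTWY{\cA}{\cB}$. The lower bound is immediate from what we have: by the previous lemma the sweep distance is always an upper bound on the \CDTW distance, so $\dSWY{\cA^i}{\cB^i}\ge \dTWY{\cA^i}{\cB^i}$, and since $\cA^i$ and $\cB^i$ are refinements of $\cA$ and $\cB$ -- splitting only inserts vertices in edge interiors and does not change the underlying curve -- we have $\dTWY{\cA^i}{\cB^i}=\dTWY{\cA}{\cB}$.

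For the upper bound, fix $\eps>0$ and pick $\mrp^{\ast}\in\MrpMY{\cA}{\cB}$ with $\costX{\mrp^{\ast}}\le \dTWY{\cA}{\cB}+\eps$. After $i$ rounds of splitting, every edge of $\cA^i$ and of $\cB^i$ has length $O(2^{-i})$, so both the cells of the grid of $\doggyY{\cA^i}{\cB^i}$ and the gaps between consecutive portals of $\GVEY{\cA^i}{\cB^i}$ along any grid edge tend to $0$. I would use this to construct a well-behaved monotone morphing $\mrp_i$ between $\cA^i$ and $\cB^i$ that is an honest path in $\GVEY{\cA^i}{\cB^i}$ and converges uniformly to $\mrp^{\ast}$: follow the monotone staircase of grid cells visited by $\mrp^{\ast}$ and, inside each such cell, replace the corresponding sub-segment of $\mrp^{\ast}$ by the segment joining the two portals closest to where $\mrp^{\ast}$ meets the incoming and outgoing boundary edges of that cell (a snap of size $O(2^{-i})$).

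The quantitative heart of the argument is that the warping cost is continuous under this perturbation. The elevation $\elevY{x}{y}=\dY{\cA(x)}{-\cB(y)}$ is $1$-Lipschitz in each of $x$ and $y$, since $x\mapsto\cA(x)$ and $y\mapsto\cB(y)$ are unit-speed parameterizations and the distance function is $1$-Lipschitz. Consequently, if $\mrp_i\to\mrp^{\ast}$ uniformly, the leash-length functions $f,g$ of \Eqref{integrals} for $\mrp_i$ converge (pointwise a.e.\ and boundedly) to those of $\mrp^{\ast}$, so $\costX{\mrp_i}\to\costX{\mrp^{\ast}}$ after integrating over the fixed intervals $[0,\lenX{\cA}]$ and $[0,\lenX{\cB}]$. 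Hence for all large $i$ there is a monotone path in the \VE{} graph $\GVEY{\cA^i}{\cB^i}$ of warping cost at most $\dTWY{\cA}{\cB}+2\eps$.

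I expect the main obstacle to be the final step, because $\dSWY{\cA^i}{\cB^i}$ is not the cheapest monotone \VE{} path but the warping cost of whatever monotone morphing the Dijkstra-plus-bisection-refinement loop actually outputs. I would bound it from above by the warping cost of the direct monotonization $\monoX{\cdot}$ of the path returned by the \emph{first} Dijkstra call (which is no costlier than the near-optimal monotone path built above), and then show that passing from a \VE{} morphing to its direct monotonization changes the warping cost by only $o(1)$: a \VE{} morphing is already monotone over vertices, so each backtrack is confined to within $O(2^{-i})$ of the offending coordinate, and the $1$-Lipschitz bound on $\elevC$ turns this into an error of $O(2^{-i})$ per affected column or row, i.e.\ $O\bigl(2^{-i}(n+m)\bigr)$ in total. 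The genuinely fiddly point is that the warping cost is an integral rather than a maximum, so these per-cell perturbation bounds must be summed along the $O(2^{i}(n+m))$ cells the morphing visits; this is still $o(1)$, but it is where the bookkeeping lives. Combining the pieces gives $\dSWY{\cA^i}{\cB^i}\le \dTWY{\cA}{\cB}+2\eps+o(1)$; letting $i\to\infty$ and then $\eps\to0$ finishes the proof.
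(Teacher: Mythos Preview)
Your proposal is considerably more detailed than the paper's own argument, which is a two-sentence sketch: as $i$ grows the grid cells of $\doggyY{\cA^i}{\cB^i}$ shrink, the elevation function becomes essentially constant on each cell, and therefore the integral defining $\dTWY{\cA}{\cB}$ is well approximated by the shortest path in the \VE graph. The paper offers no explicit sandwich, no Lipschitz estimate, and no discussion of the monotonization loop. Your lower bound and your construction of a near-optimal monotone \VE path $\mrp_i$ via snapping (with the $1$-Lipschitz continuity of $\elevC$ driving the convergence of $\costX{\mrp_i}$) are a genuine improvement in rigor over what the paper supplies.

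Where your argument has a real gap is exactly the place you flag. You propose to bound $\dSWY{\cA^i}{\cB^i}$ from above by $\costX{\monoX{\pi_0}}$, where $\pi_0$ is the first Dijkstra output. But the sweep distance is \emph{not} defined as the cost of $\monoX{\pi_0}$; it is the cost of whatever monotone morphing the bisection-refinement loop eventually produces on a further-refined pair of curves. There is no stated inequality linking the loop's output to $\costX{\monoX{\pi_0}}$: each refinement round changes the \VE graph (adding new portals and monotonicity constraints), so the subsequent Dijkstra paths need not be comparable to $\pi_0$ or to $\monoX{\pi_0}$. Your per-cell $O(2^{-2i})$ bookkeeping controls a single monotonization step, but it does not, as written, control an a priori unbounded number of refinement rounds. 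To close the gap you would need either (i) to argue that for large $i$ the first Dijkstra path is already monotone (so the loop is trivial), or (ii) to show that each refinement round can raise the Dijkstra cost by at most $O(2^{-i})$ times the number of new vertices inserted, and that the total number of inserted vertices across all rounds is controlled. Neither is done here. To be fair, the paper's proof does not address this issue either; it simply does not engage with the algorithmic definition of the sweep distance at this level of precision.
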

\begin{proof}
    As $i$ increases, a cell in the free space diagram of $\cA^i$ and $\cB^i$ becomes smaller, corresponding to the distance functions between shorter subsegments of the two curves. In particular, the functions become closer to being constant on each cell, and the relevant integral, forming the \CDTW distance, is better approximated by the shortest path on the \VE graph.
\end{proof}

\subsubsection{Computing a lower bound on the \CDTW distance using the sweep distance}

It is natural to try to compute a lower bound on the \CDTW distance, so that one can estimate the quality of the upper bound computed by the above algorithm. To this end, consider the free space diagram $\doggy =\doggyY{\curveA}{\curveB}$ induced by $\cA$ and $\cB$, see \defref{f_s_d}.  Let $\Cells$ be the set of grid cells of $\doggy$, and consider the ``silly'' elevation function,
\begin{equation*}
    \forall (x,y) \in \doggy
    \qquad
    \elevC(x,y) = \min_{(x',y') \in \Cells(x,y)} \elevY{x'}{y'},
\end{equation*}
where $\Cells(x,y)$ denotes the cell in the grid of $\doggy$ containing the point $(x,y)$. Namely, we flatten the elevation function inside each grid cell to its minimum. Similarly, we assign a grid edge $e$ of $\doggy$, the minimum of the minimum elevations on the two cells adjacent to $e$.

\myparagraph{Algorithm.}  Observe that the grid of $R$ naturally defines a directed grid graph where each vertical cell edge is directed upwards, and each horizontal cell edge is directed to the right (note that this graph is different than the \VE graph). We now weight the edges of this graph according to the above ``flattened" elevation function.  Now, compute the shortest path on this weighted grid graph going from bottom left to top right. We claim the resulting quantity $\lbSDY{\cA}{\cB}$ is a lower bound on the sweep distance between the two curves.

\begin{lemma}
    $\lbSDY{\cA}{\cB} \leq \dSWY{\cA}{\cB}$.
\end{lemma}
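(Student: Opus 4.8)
The plan is to route everything through a ``flattened'' version of the cost. First I would write $\dSWY{\cA}{\cB}=\costX{\mrp}$, where $\mrp$ is the monotone polygonal morphing the sweep algorithm outputs, so that $\costX{\mrp}$ is the sum of the two integrals in \Eqref{integrals}, evaluated with the true elevation function $\elevC$. Let $\widehat{\elevC}$ be the flattened elevation function used to define $\lbSDY{\cA}{\cB}$: it equals $\min_{\Cell}\elevC$ on the interior of each grid cell $\Cell$, and on a grid edge $e$ it equals the minimum of those cell-minima over the (at most two) cells incident to $e$. Since $\widehat{\elevC}\le\elevC$ pointwise, replacing $\elevC$ by $\widehat{\elevC}$ in the definitions of $f$ and $g$ only shrinks both integrals; call the resulting quantity $\widehat{c}(\mrp)$, so $\widehat{c}(\mrp)\le\costX{\mrp}$.

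Next I would observe that, because $\mrp$ is a monotone polygonal morphing, the ``$\min$'' in the definitions of $f$ and $g$ is active only on a finite (hence measure-zero) set of coordinates — the finitely many $x$ at which $\mrp$ has a vertical segment, and symmetrically for $y$ — so $\widehat{c}(\mrp)$ is exactly the line integral of $\widehat{\elevC}$ along $\mrp$ with respect to $\ell_1$-arclength. As $\widehat{\elevC}$ is constant on each cell interior, this gives $\widehat{c}(\mrp)=\sum_{\Cell}\bigl(\min_{\Cell}\elevC\bigr)\cdot\ell_1\!\pth{\mrp\cap\Cell}$, where $\ell_1\!\pth{\mrp\cap\Cell}$ is the total horizontal-plus-vertical extent of the part of $\mrp$ lying in $\Cell$; the same identity holds for every monotone polygonal morphing.

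The heart of the argument is the claim that $\lbSDY{\cA}{\cB}\le\widehat{c}(\mrp')$ for \emph{every} monotone morphing $\mrp'$ — equivalently, that minimizing this flattened $\ell_1$-cost over monotone morphings loses nothing by restricting to monotone staircases of the directed grid graph. One inequality is free: a staircase $P$ of that graph is itself a monotone morphing, and its graph-weight equals $\widehat{c}(P)$ (along a horizontal run at a grid height the value of $\widehat{\elevC}$ is precisely the grid-edge value, and likewise for vertical runs), so $\lbSDY{\cA}{\cB}$, the minimum such weight, is at least $\inf_{\mrp'}\widehat{c}(\mrp')$. For the reverse I would transform any monotone polygonal $\mrp'$ into a grid staircase without increasing $\widehat{c}$: reroute $\mrp'\cap\Cell$ inside each cell to an axis-parallel two-segment path with the same endpoints — this leaves the $\ell_1$-length, hence $\widehat{c}$, unchanged and makes $\mrp'$ rectilinear with all bends on cell edges; then push each horizontal piece that lies strictly inside a row onto one of the two grid lines bounding that row (symmetrically for vertical pieces), noting that the horizontal weight can only drop (grid-edge values are minima over adjacent cells) while the displaced vertical length is merely relocated from one column to another, so one picks the bounding line (top or bottom) for which that relocation does not raise the cost. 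Iterating yields a grid staircase $P$ with $\widehat{c}(P)\le\widehat{c}(\mrp')$, and since $\widehat{c}(P)$ equals the weight of $P$, the claim follows. Chaining, $\lbSDY{\cA}{\cB}\le\widehat{c}(\mrp)\le\costX{\mrp}=\dSWY{\cA}{\cB}$.

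The main obstacle I anticipate is this last transformation. Snapping a single horizontal piece to the nearest row boundary can violate $x/y$-monotonicity when that piece is sandwiched between vertical pieces that are themselves strictly interior to the row, so the one-piece-at-a-time argument must be organized carefully (for instance, by processing a whole ``excursion'' of $\mrp'$ within a single row at once). The clean way to finish is to phrase the minimization of $\widehat{c}$ over monotone morphings, for a fixed coarse staircase of cells, as a small linear program — essentially a min-cost flow on the grid in which each column must route all of its horizontal movement into one of the rows it spans and each row must route all of its vertical movement into one of the columns it spans — and to invoke its integrality, so that the flattened optimum is attained by a genuine grid staircase. Everything else (the pointwise bound $\widehat{\elevC}\le\elevC$, the reduction of the two integrals to an $\ell_1$ line integral, and the identification of a staircase's grid-graph weight with $\widehat{c}$ of it) is routine bookkeeping.
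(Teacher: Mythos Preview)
Your proposal is correct and follows essentially the same route as the paper: bound the true cost below by the ``flattened'' cost (using $\widehat{\elevC}\le\elevC$ pointwise), then argue that minimizing the flattened cost over monotone morphings is achieved by a staircase on the grid graph. The paper dispatches your acknowledged obstacle---monotonicity-preserving snapping---exactly via the ``whole excursion'' fix you mention: it processes one column at a time, observing that within a column the vertical contribution is fixed by the entry/exit heights, so all horizontal movement can be routed through the single cheapest horizontal grid edge the path crosses (and symmetrically for rows); your LP/flow detour is therefore unnecessary.
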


\begin{proof}
    Consider computing the sweep distance for this ``elevation'' function.  Since the sweep distance is decomposed along its $x$ and $y$ components, and the function is a constant inside a grid cell, one can safely assume the optimal morphing is axis-aligned. The only remaining possibility is that the optimal morphing enters a column on a left edge, climbs in a stairway, and leaves through an edge on the right. Let $e_1, e_2, \ldots, e_k$ be the horizontal edges this path intersects, and observe that one can always modify this path to move vertically to the cheapest edge (under the constant elevation function), and cross using this function. Thus, the optimal path in this case can be restricted to use the grid edges, see \figref{snapping}.  The case where the path does the same thing in a row can be handled similarly.
\end{proof}

\begin{figure}
    \centering \includegraphics[page=1]{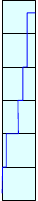}%
    \qquad%
    \includegraphics[page=2]{figs/climb}
    \caption{Snapping the path to the grid.}
    \figlab{snapping}
\end{figure}

\begin{lemma}
    For any two curves $\cA$ and $\cB$, we have $\lim_{i\rightarrow \infty} \lbSDY{\cA^i}{\cB^i} = \dTWY{\cA}{\cB}$.
\end{lemma}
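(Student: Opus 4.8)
The plan is to sandwich $\lbSDY{\cA^i}{\cB^i}$ between two sequences that both tend to $\dTWY{\cA}{\cB}$, i.e.\ to show $\limsup_{i\to\infty}\lbSDY{\cA^i}{\cB^i}\le \dTWY{\cA}{\cB}\le \liminf_{i\to\infty}\lbSDY{\cA^i}{\cB^i}$ (which forces the limit to exist and equal $\dTWY{\cA}{\cB}$). The $\limsup$ half is immediate from the two preceding lemmas: applying $\lbSDY{\cA'}{\cB'}\le \dSWY{\cA'}{\cB'}$ with $\cA'=\cA^i,\ \cB'=\cB^i$ gives $\lbSDY{\cA^i}{\cB^i}\le \dSWY{\cA^i}{\cB^i}$, and we already know $\lim_{i\to\infty}\dSWY{\cA^i}{\cB^i}=\dTWY{\cA}{\cB}$. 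So all the work is in the reverse inequality.

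For that, I would first quantify how refinement shrinks the free space grid. Each splitting step halves the length of every edge, so after $i$ rounds every edge of $\cA^i$ (resp.\ $\cB^i$) has length at most $L_\cA/2^i$ (resp.\ $L_\cB/2^i$), where $L_\cA,L_\cB$ are the longest edge lengths of $\cA,\cB$; hence every cell of $\doggyY{\cA^i}{\cB^i}$ has diameter at most $\rho_i:=\sqrt{L_\cA^2+L_\cB^2}\,/\,2^i\to 0$. Crucially, splitting does not change the underlying curves, so the elevation function $\elevC$ lives on the fixed compact rectangle $\doggy=\doggyY{\cA}{\cB}$, where it is continuous, hence uniformly continuous; fix a modulus of continuity $\omega(\cdot)$ with $\omega(\rho)\to 0$ as $\rho\to0$, and put $c_i:=2\,\omega(\rho_i)\pth{\lenX{\cA}+\lenX{\cB}}$, so $c_i\to0$.

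Now, for each $i$, let $\gamma_i$ be the grid path that the lower-bound algorithm produces, so $\lbSDY{\cA^i}{\cB^i}$ is its cost, i.e.\ the warping cost of \Eqref{integrals} for $\gamma_i$ computed with $\elevC$ replaced by the flattened, cellwise-minimum elevation function (all diagonal moves being useless since that function is constant on each cell). Since $\gamma_i$ uses only rightward and upward grid edges from $(0,0)$ to $(\lenX{\cA},\lenX{\cB})$, it is an $x/y$-monotone well-behaved morphing of $\cA$ and $\cB$, hence $\costX{\gamma_i}\ge\dTWY{\cA}{\cB}$. It then suffices to show $\costX{\gamma_i}-\lbSDY{\cA^i}{\cB^i}\to 0$. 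For this I would compare the functions $f,g$ of \Eqref{integrals} for $\gamma_i$ under the true and the flattened elevation: for a.e.\ $x$, the vertical line at $x$ meets $\gamma_i$ in a single point lying in the interior of one horizontal grid edge $e$, so $f^{\mathrm{true}}(x)=\elevY{x}{y_e}$ while $f^{\mathrm{fl}}(x)$ equals the minimum of $\elevC$ over the two cells flanking $e$; since $e$ and each flanking cell have diameter at most $\rho_i$, the uniform-continuity estimates give $0\le f^{\mathrm{true}}(x)-f^{\mathrm{fl}}(x)\le 2\omega(\rho_i)$, and symmetrically for $g$. Integrating,
\[
0 \;\le\; \costX{\gamma_i}-\lbSDY{\cA^i}{\cB^i} \;\le\; 2\,\omega(\rho_i)\pth{\lenX{\cA}+\lenX{\cB}} \;=\; c_i .
\]
Hence $\dTWY{\cA}{\cB}\le\costX{\gamma_i}\le\lbSDY{\cA^i}{\cB^i}+c_i$, so $\dTWY{\cA}{\cB}\le\liminf_{i\to\infty}\lbSDY{\cA^i}{\cB^i}$, completing the sandwich.

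I expect the main obstacle to be the last comparison: one must be precise that $\lbSDY{\cA^i}{\cB^i}$ is exactly the minimum over monotone staircases of the warping cost under the flattened elevation (so that an optimal $\gamma_i$ is genuinely a morphing one may insert into $\dTWY{\cA}{\cB}$), and then the edge-by-edge estimate needs a little care because a grid edge's flattened weight is a minimum over \emph{two} adjacent cells, not over the edge itself, and because for a monotone staircase the $x$- and $y$-integrals of \Eqref{integrals} must be seen to decompose cleanly into per-edge contributions. Everything else — the geometric-series bound on cell diameters and the appeal to uniform continuity of $\elevC$ on the (refinement-invariant) compact free space — is routine.
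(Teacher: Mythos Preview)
Your proposal is correct and follows essentially the same idea as the paper: as the cells of the refined free space diagram shrink, the flattened (cellwise-minimum) elevation converges uniformly to the true elevation, so the integrated quantities coincide in the limit. The paper's proof is a one-line sketch (``the function being integrated converges to a constant in the grid cells of the refined free space diagram''), whereas you have spelled out the sandwich explicitly---using the preceding two lemmas for the $\limsup$ direction, and for the $\liminf$ direction feeding the optimal staircase $\gamma_i$ back into $\dTWY{\cA}{\cB}$ as a bona fide monotone morphing and bounding $\costX{\gamma_i}-\lbSDY{\cA^i}{\cB^i}$ via uniform continuity of $\elevC$ on the fixed compact rectangle $\doggy$. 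Your identification of the two technical wrinkles (the edge weight being a minimum over \emph{two} adjacent cells, and the clean per-edge decomposition of the $x$- and $y$-integrals for a monotone staircase) is apt; both are handled by the observation that a grid edge lies in the closure of each flanking cell, so the pointwise gap between true and flattened elevation along $\gamma_i$ is at most $\omega(\rho_i)$, and for a.e.\ $x$ (resp.\ $y$) the staircase occupies a single horizontal (resp.\ vertical) edge. One could tighten the argument further by noting that $\elevC$ is in fact $1$-Lipschitz in each coordinate (since $\cA,\cB$ are arc-length parameterized), making the modulus $\omega$ explicit, but this is not needed.
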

\begin{proof}
    The same integration argument works -- the difference in value between the computed value for $\cA^i$ and $\cB^i$, and the optimal \CDTW distance shrinks as $i$ increases, as the function being integrated ``converges'' to a constant in the grid cells of the refined free space diagram.
\end{proof}

\subsubsection{Discussion of the two algorithms}

Experiments show that the upper bound computed by the algorithm described above is a good approximation to the optimal solution (even with no refinement). This holds because some of the \VE graph edges (specifically, left-to-right and bottom-to-top) realize the minimum lines in the cell of the grid, thus at least one of the two integrals is minimized if the solution follows this segment, see \obsref{min_lines}. Thus, the paths suggested by the \VE graph are already ``cheap'', at least locally.

The lower bound algorithm is far from being very good, as far as convergence after refinement. We leave the question of a better algorithm for computing a lower bound as an open problem for further research.

\subsection{Fast output-sensitive simplification extractor}
\seclab{fast_simplifier}

Given a curve $\cA$, the task at hand is to preprocess it so that, given a parameter $\mu$, one can quickly extract a subcurve $\cB$ of $\cA$, induced by a subset of its vertices, such that the \Frechet distance between $\cB$ and $\cA$ is at most $\mu$. Furthermore, since we are dealing with huge input curves, the query time has to be proportional to $\cardin{\VX{\cB}}$ -- namely, the query time is output sensitive, not depending directly on the input size. We first describe an algorithm that given a curve with $n$ vertices, in $O(n \log n)$ time, computes an array of numbers of size $n$, such that one can extract quickly an output-sensitive simplification from it. We then discuss ways to improve the quality of this simplification.

\subsubsection{Fast extractor}

\myparagraph{A fast approximation.}

For a closed set $X \subseteq \Re^d$ and a point $\pA \in \Re^d$, let $\dSY{\pA}{X} = \min_{ \pC \in X} \dY{\pA}{\pC}$ be the \emphi{distance} of $\pA$ to $X$. Similarly, let $\nnY{\pA}{X} = \arg \min_{ \pC \in X} \dY{\pA}{\pC}$ be the \emphi{nearest point} of $X$ to $\pA$. The \emphi{projection} of a set $Y$ to a closed set $X \subseteq \Re^d$, is the set $\nnY{Y}{X} = \Set{ \nnY{\pA}{X}}{ \pA \in Y}$.

\begin{lemma}
    \lemlab{spine}%
    Given a (directed) curve $\cA = \pA_1 \ldots \pA_n$ with $n$ vertices, consider any (directed) segment $\seg =\pD \pD'$, such that $\nnY{\pA_1}{\seg} = \pD$ and $\nnY{\pA_n}{\seg} = \pD'$. One can compute in linear time a monotone morphing $\mrp$, and its width $\WidthX{\mrp}$, between $\cA$ and $\seg$, such that $\distFr{\cA}{\seg} \leq \WidthX{\mrp} \leq 3 \distFr{\cA}{\seg}$.
\end{lemma}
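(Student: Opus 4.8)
The plan is to take $\mrp$ to be the monotonization of the ``nearest-point'' morphing between $\cA$ and $\seg$, and to bound its width against an optimal monotone \Frechet morphing. Parametrize $\seg$ by arclength, so $\seg(0)=\pD$ and $\seg(\lenX{\seg})=\pD'$. First I would form the (weak) morphing $\mrp'$ that matches each vertex $\pA_i$ to its nearest point $\nnY{\pA_i}{\seg}$ on $\seg$ and interpolates linearly between consecutive configurations. Since the distance to the segment $\seg$ is convex, within each piece the leash length is convex in the interpolation parameter, so $\WidthX{\mrp'} = \max_i \dSY{\pA_i}{\seg}$ (which is in fact $\distWFr{\cA}{\seg}$, though that is not needed). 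The hypotheses $\nnY{\pA_1}{\seg}=\pD$ and $\nnY{\pA_n}{\seg}=\pD'$ are exactly what guarantee that $\mrp'$ starts at $(0,0)$ and ends at $(\lenX{\cA},\lenX{\seg})$, so it is a genuine morphing; and since $\seg$ is a single edge, the free space diagram of $\cA$ and $\seg$ is one row of cells, so $\mrp'$ is a segment inside each. Let $\mrp=\monoX{\mrp'}$; this is a monotone, well-behaved morphing, and it together with its width (attained at a vertex of $\mrp$, by convexity) can be computed in $O(n)$ time. Since $\mrp$ is monotone, $\distFr{\cA}{\seg}\le\WidthX{\mrp}$, the easy direction.

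For the upper bound, set $w=\distFr{\cA}{\seg}$ and fix an optimal monotone \Frechet morphing. Reparametrizing by arclength along $\cA$ yields a nondecreasing map $\rho\colon[0,\lenX{\cA}]\to[0,\lenX{\seg}]$ with $\rho(0)=0$, $\rho(\lenX{\cA})=\lenX{\seg}$, and $\dY{\cA(t)}{-\seg(\rho(t))}\le w$ for all $t$. Now pass to coordinates aligned with the line supporting $\seg$, writing each point of $\Re^d$ by its signed projection parameter onto that line together with its distance from the line. Since $\seg(\rho(t))$ has projection parameter $\rho(t)$, the bound $\dY{\cA(t)}{-\seg(\rho(t))}\le w$ forces the projection parameter of $\cA(t)$ to lie in $[\rho(t)-w,\rho(t)+w]$. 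Clamping this parameter to $[0,\lenX{\seg}]$ — which is the operation producing $\nnY{\cA(t)}{\seg}$ — keeps it in that window, using $\rho(t)\in[0,\lenX{\seg}]$; and replacing it by its running maximum over $t$ keeps it there as well, since $\rho$ is nondecreasing. Evaluated at the vertices, this shows that the $\seg$-parameter $Y_i$ that $\mrp$ assigns to $\pA_i$ satisfies $|Y_i-\rho(t_i)|\le w$, where $t_i$ is the arclength position of $\pA_i$.

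Finally, at each vertex, using that $\seg$ is arclength-parametrized so that $\dY{\seg(Y_i)}{-\seg(\rho(t_i))}=|Y_i-\rho(t_i)|$, the triangle inequality gives $\dY{\pA_i}{-\seg(Y_i)}\le\dY{\pA_i}{-\seg(\rho(t_i))}+|Y_i-\rho(t_i)|\le w+w=2w$. Since $\mrp$ is a segment inside each cell and the leash length along such a segment is convex, $\WidthX{\mrp}=\max_i\dY{\pA_i}{-\seg(Y_i)}\le 2w\le 3w=3\distFr{\cA}{\seg}$, which proves the claim (in fact with $2$ in place of $3$). I expect the bulk of the care to be in the second paragraph: extracting the nondecreasing $\rho$ with the correct boundary values from the given morphing, and checking that each of the two operations applied to the projection parameter — clamping onto $\seg$, and taking a running maximum — preserves the ``within $w$ of $\rho$'' invariant, including the degenerate cases in which $\cA(t)$ projects outside $\seg$. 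Those boundary cases are also where the hypotheses $\nnY{\pA_1}{\seg}=\pD$ and $\nnY{\pA_n}{\seg}=\pD'$ are used a second time, to ensure the running maximum actually starts at $0$ and ends at $\lenX{\seg}$, so that $\mrp$ is a valid morphing between the two corners of the free space diagram.
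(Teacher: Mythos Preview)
Your proof is correct and in fact yields the sharper bound $\WidthX{\mrp}\le 2\,\distFr{\cA}{\seg}$, improving on the paper's $2\sqrt{2}$ (rounded up to $3$). The construction is identical---take the nearest-point morphing $\mrp'$ and monotonize it---but the analysis is genuinely different. The paper argues geometrically: it isolates each \emph{plateau} of the running maximum, identifies the sub-segment $\nu\subseteq\seg$ that $\mrp'$ sweeps over during that plateau, and uses a midpoint case split (does the optimal morphing match the plateau's start before or after the midpoint of $\nu$?) to show $\WidthX{\mrp^\star}\ge\lenX{\nu}/2$; it then combines this with a Pythagorean bound $\sqrt{\ell^2+\lenX{\nu}^2}$ on the leash inside the plateau. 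Your route is more direct: you fix a nondecreasing $\rho$ realizing the optimum, observe that the three operations you perform on the projection parameter of $\cA(t)$---project onto the line, clamp to $[0,\lenX{\seg}]$, take a running maximum---each preserve the invariant ``within $w$ of $\rho(t)$'' (the last because $\rho$ is monotone), and finish with a single triangle inequality at each vertex. This avoids the plateau decomposition and the case analysis entirely, and the lossless tracking of $\rho$ is what buys you the better constant; the example $\cA=0,1,0,1$ against $\seg=[0,1]$ in one dimension shows your factor $2$ is tight. The paper's approach has the minor advantage of being self-contained at the level of the picture (no explicit $\rho$), but yours is shorter and sharper.
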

\begin{proof} %
    Consider first the morphing $\mrp'$, which maps each point of $\cA$ to its nearest point on $\seg$. Clearly, $\WidthX{\mrp'} \leq \WidthX{\mrp^\star}$, where $\mrp^*$ is the optimal (monotone) morphing between $\cA$ and $\seg$.  Consider the monotone version $\mrp$ of $\mrp'$ that one gets from $\mrp'$ by never moving back along $\seg$.  If $\mrp'$ is monotone, then $\mrp = \mrp'$ and $\WidthX{\mrp} = \distFr{\cA}{\seg}$, and we are done.

    \begin{figure}[h]
        \phantom{}\hfill%
        \includegraphics[page=1]{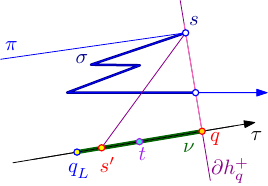}%
        \hfill%
        \includegraphics[page=2]{figs/zig_zag_zog}%
        \hfill\phantom{}%
        \caption{Illustration of the proof of \lemref{spine}.}
        \figlab{z_z_z}
    \end{figure}
    The curve $\cA$ is uniformly parameterized on the interval $I = [0,\lenX{\cA}]$, see \defref{u_parameter}.  The morphing $\mrp'$ matches $\cA(t)$ with the point $\nnY{\cA(t)}{\seg}$.

    Let $f(t) = \lenX{\nnY{\cA(t)}{\seg} - z}$ be the normalized parameterization of this projection point on $\seg$.  Thus, for all $t$, $f(t) \in [0,\lenX{\seg}]$, $f(0) = 0$, and $f( \lenX{\cA}) = \lenX{\seg}$.  The \emph{max} version of $f$ is $F(t) = \max_{x: [0,t]} f(x)$, and the morphing $\mrp$, for all $t$, maps $\cA(t)$ to $\seg(F(t))$.  A maximal interval $J = [\alpha, \beta]\subseteq [0,\lenX{\cA}]$, with $\alpha < \beta$, such that $F(J)$ is a single value, is a \emphw{plateau}. For a plateau, the morphing $\mrp$ matches all the points of $\cA(J)$ to the point $q = \seg( F(J))$. A point $x \in [0,\lenX{\cA}]$ not contained in any plateau, is matched by $\mrp$, to the same point it was matched by $\mrp'$, that is $\seg( f(x) ) = \nnY{\cA(x)}{\seg}$.

    The only places where the leash used by $\mrp$ might be larger than $\mrp^\star$ are in the plateaus.  So consider such a plateau $J = [\alpha,\beta]$, with $\alpha < \beta$. Let $\nu = \Set{ \seg(f(t))}{t \in J}$ be the portion of $\seg$ that is being used by $\mrp'$ for points of $\cB=\cA(J)$. The start of this plateau in $\cA$ is $s=\cA(\alpha)$, and the end is $t = \cA(\beta)$. The left endpoint of $\nu \subseteq \seg$ is the projection of some point $u = \cA(\gamma)$, with $\gamma \in (\alpha, \beta)$, such that $f(J) = [f(\gamma), f(\beta)]$. Let $q_L = \seg(f(\gamma))$ and $q = \seg(f(\beta))$ be the right and left endpoints of $\nu$, respectively, see \figref{z_z_z}. Finally, let $r$ be the middle point of $\nu$.

    If the optimal (monotone) morphing $\mrp^\star$ matches $s$ to a point $s' \in \seg$ that appears before $r$ on $\seg$, then $\WidthX{\mrp^\star} \geq \dY{s}{s'} \geq \lenX{\nu}/2$, see \figref{z_z_z} (left).  Otherwise, $\mrp^\star$ maps $u$ to a point $u' \in \seg$ that appears after $r$ on $\seg$. But then, $\WidthX{\mrp^\star} \geq \dY{u}{u'} \geq \lenX{\nu}/2$, see \figref{z_z_z} (right). We conclude that
    \begin{equation*}
        \WidthX{\mrp^\star}
        \geq%
        \max( \WidthX{\mrp'},  \lenX{\segA}/2 ).
    \end{equation*}
    As a reminder, $\mrp$ matches all the points of $\cB = \cA(J)$ to $q$.  Let
    \begin{math}
        \ell = \max_{\pA \in \cB} \dSY{\pA}{\nu}
    \end{math}
    be the maximum leash length deployed by $\mrp'$ for any point of $\cB$.  We have that the width of $\mrp$ restricted to $\cB$ is
    \begin{align*}
      \ell_J
      &=%
        \max_{\pA \in \cB} \dY{\pA}{q}
        \leq
        \sqrt{\ell^2 + \lenX{\nu}^2}
        \leq
        \sqrt{\WidthX{\mrp'}^2 + \lenX{\nu}^2}
        \leq
        \sqrt{2} \max(\WidthX{\mrp'},   \lenX{\segA} )
      \\&
      \leq
      2\sqrt{2} \max(\WidthX{\mrp'},   \lenX{\segA}/2 )
      \leq
      3\WidthX{\mrp^{\star}}.
    \end{align*}
    Now, $\WidthX{\mrp}$ is the maximum of $\WidthX{\mrp'}$ and the maximum of $\ell_J$, over all plateaus $J$, which implies the claim.
\end{proof}

\myparagraph{Preprocessing.}  The idea is to build a hierarchical representation of the curve. So let the input curve be $\cA = \pA_1 \pA_2 \ldots \pA_n$ -- the output will be an array $A[1\ldots n]$ of real numbers.  The roughest approximation for $\cA$ is the \emphw{spine} $\pA_1 \pA_n$. If we want a finer approximation, the natural vertex to add is $\pA_{\alpha(1,n)}$, where $k = \alpha(i,j) = \floor{(i + j)/2}$, which yields the curve $\pA_1 \pA_{k} \pA_n$.  The recursive algorithm $\Alg(1,n)$ uses the fast approximation algorithm of \lemref{spine} to compute
\begin{equation*}
    A[k] = \max( D(\cA[1, k] ), D(\cA[k,n])].
\end{equation*}
The algorithm now continues filling the array recursively, by calling $\Alg(1,k)$ and $\Alg(k,n)$.

\myparagraph{Extracting the simplification (\extract).}

Given a parameter $w$, the algorithm performs a recursive traversal of the curve, as described above. If the traversal algorithm arrives to an interval $\IRY{i}{j}$, with $k = \alpha(i,j)$, such that $A[k] > w$, then the algorithm adds $\pA_k$ to the simplified curve, after extracting the simplification recursively on the range $\IRY{i}{k}$, and before extracting the simplification recursively on $\IRY{k}{j}$. If $A[k] \leq w$, then the algorithm adds $\pA_k$ to the simplification, without performing the recursive calls.

\begin{lemma}
    \lemlab{fast_simplifier}%
    Given a polygonal curve $\cA$ with $n$ vertices, the algorithm \Alg preprocess it, in $O(n \log n)$ time, such that given a parameter $w \geq 0$, the query algorithm \extract computes a subcurve $\cAs$ (induced by a subset of the vertices of $\cA$), such that $\distFr{\cA}{\cAs} \leq w$. The extraction algorithm works in $O( \cardin{ \VX{\cAs}} )$ time.
\end{lemma}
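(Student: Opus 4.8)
The plan is to establish the two parts of the claim separately: the preprocessing time bound for \Alg, and the correctness plus output-sensitive running time of \extract. For the preprocessing, observe that \Alg is a balanced recursion on the index interval $\IRX{n}$: each call $\Alg(i,j)$ splits at $k=\alpha(i,j)$ and recurses on $\IRY{i}{k}$ and $\IRY{k}{j}$, so the recursion tree has depth $O(\log n)$. The work done at a node is dominated by the invocation of the fast approximation procedure of \lemref{spine}, which runs in time linear in the length of the subcurve $\cA[i,j]$, i.e.\ $O(j-i)$. Since at each level of the recursion tree the index intervals are (essentially) disjoint and cover $\IRX{n}$, the total work per level is $O(n)$, giving $O(n\log n)$ overall. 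I would phrase this as a standard $T(n) = 2T(n/2) + O(n)$ recurrence.

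For the correctness of \extract, the key invariant is: whenever the traversal reaches interval $\IRY{i}{j}$, the vertices $\pA_i$ and $\pA_j$ have already been committed to $\cAs$, and when the recursion on $\IRY{i}{j}$ is terminated early (because $A[k]\le w$ with $k=\alpha(i,j)$), the entire subcurve $\cA[i,j]$ is represented in $\cAs$ by the single edge $\pA_i\pA_j$ (up to the vertex $\pA_k$ that \extract still inserts). In that case, by the definition of $A[k]$ and by \lemref{spine}, we have $\distFr{\cA[i,k]}{\pA_i\pA_k} \le A[k] \le w$ and likewise for $\cA[k,n]$ restricted appropriately, so each piece of the original curve between two consecutive marked vertices is within \Frechet distance $w$ of the corresponding edge of $\cAs$. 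Concatenating these piecewise morphings (each between a subcurve of $\cA$ and a single edge of $\cAs$, sharing endpoints) yields a morphing of width at most $w$ between $\cA$ and $\cAs$, hence $\distFr{\cA}{\cAs}\le w$. Here I would invoke the standard fact, used already in \lemref{greedymorph}, that \Frechet morphings over subcurves sharing endpoints concatenate.

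The running time of \extract is output sensitive because the recursion is pruned precisely at the nodes where $A[k]\le w$: every recursive call that is made results in at least one vertex $\pA_k$ being added to $\cAs$ (the call either adds $\pA_k$ and stops, or adds $\pA_k$ and recurses). Thus the number of recursive calls is $O(\cardin{\VX{\cAs}})$, and each call does $O(1)$ work (a comparison against $w$ and appending a vertex), giving the $O(\cardin{\VX{\cAs}})$ bound.

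I expect the main obstacle to be the correctness argument rather than the timing bounds: one has to be careful that $A[k]$ as defined, $\max(D(\cA[1,k]), D(\cA[k,n]))$ — where $D$ denotes the width returned by \lemref{spine} — actually controls $\distFr{\cA[i,j]}{\pA_i\pA_j}$ for the \emph{local} interval $\IRY{i}{j}$, not just the global one; in the write-up the array entry should really be understood relative to the subcurve being processed, and one must check that the hierarchical structure makes the local spines $\pA_i\pA_k$ and $\pA_k\pA_j$ agree with what was measured when $A[k]$ was computed. This requires that the segments chosen in \lemref{spine} satisfy the nearest-point endpoint condition (which holds for the spine of a subcurve), and a short argument that the value stored is an upper bound valid for the later, possibly coarser, use. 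Once that bookkeeping is pinned down, the concatenation and the output-sensitive accounting are routine.
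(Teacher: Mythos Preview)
Your proposal is correct and follows essentially the same approach as the paper's proof, just with considerably more detail: the paper dismisses the running-time bounds as ``immediate'' and gives a one-sentence correctness argument (the output simplification partitions $\cA$ into sections, each within \Frechet distance $w$ of its corresponding segment in $\cAs$, and these morphings concatenate). Your identification of the main subtlety---that the stored value $A[k]$ must be understood as computed relative to the local interval $\IRY{i}{j}$ in the recursion, not the global $\IRY{1}{n}$ as the paper's top-level description literally reads---is exactly the right bookkeeping to check, and you resolve it correctly by noting that \extract traces the same recursion tree as \Alg, so the spines agree.
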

\begin{proof}
    The running time bounds are immediate. As for the correctness, it follows by observing that the simplification being output breaks the input curve into sections, such that each section is matched to a segment, such that the \Frechet distance between a section and its corresponding segment is at most $w$.
\end{proof}

\begin{remark}
    Note that the above extracts a simplification quickly -- this simplification might potentially have many more vertices than necessary, but in practice, it works pretty well. We can apply the algorithm described next as a post-processing stage to reduce the number of vertices.
\end{remark}

\subsubsection{A greedy simplification.}

A better simplification algorithm, that yields fewer vertices than the $\delta$-simplification used above, and the fast extractor used above, is to do a greedy approximation as suggested by Aronov \etal \cite{ahkww-fdcr-06}. Let $\cA = \pA_1 \ldots \pA_n$. The algorithm initially starts at the vertex $\pA_1$. Assume that it is currently at the vertex $\pA_j$.
The algorithm then sets the next vertex in the simplification to be $p_k$, where $k$ is the smallest index greater than $j$ such that $\distFr{\cA[\pA_j, \pA_k]}{\pA_j\pA_k} \leq \delta$ and $\distFr{\cA[\pA_j, \pA_{k+1}]}{\pA_j\pA_{k+1}}> \delta$ (or $k=n$ if the end of the curve is reached).  One can use the algorithm \lemref{spine} coupled with exponential and binary search to compute $k$. This takes $O( (k-j+1) \log (k-j+1))$ time (of course, here the guarantee is somewhat weaker). This algorithm yields quite a good approximation in practice with fewer vertices than the $\delta$-simplification.

\begin{lemma}
    \lemlab{greedy}%
    Given a curve $\cA$ with $n$ vertices, and a parameter $\delta$, the above algorithm computes, in $O(n \log n)$ time, a curve $\cAs$, such that $\distFr{\cA}{\cAs} \leq \delta$.
\end{lemma}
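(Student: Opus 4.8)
The plan is to follow the argument already used for \algref{simplification} and \lemref{greedymorph}. The greedy algorithm outputs a subsequence $\pA_1 = \pA_{i_0}, \pA_{i_1}, \ldots, \pA_{i_t} = \pA_n$ of the vertices of $\cA$, and the edges $\pA_{i_{\ell-1}}\pA_{i_\ell}$ of $\cAs$ split $\cA$ into the consecutive subcurves $\cA[\pA_{i_{\ell-1}}, \pA_{i_\ell}]$, for $\ell = 1, \ldots, t$. If each such piece is within \Frechet distance $\delta$ of its spine $\pA_{i_{\ell-1}}\pA_{i_\ell}$, then concatenating a monotone morphing of width at most $\delta$ for each piece --- one is supplied by \lemref{spine} --- yields a monotone morphing of $\cA$ to $\cAs$, and its width is the maximum of the piece widths, hence at most $\delta$. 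This gives $\distFr{\cA}{\cAs} \leq \delta$.

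For the per-piece bound, I would first check that the selection rule is well defined and always makes progress: for a current vertex $\pA_j$ with $j<n$, the candidate $k = j+1$ has $\distFr{\cA[\pA_j,\pA_{j+1}]}{\pA_j\pA_{j+1}} = 0 \leq \delta$ (the subcurve is a single edge), so the sequence of distances starts below $\delta$ and the algorithm either reaches $\pA_n$ or finds a crossing; in either case the chosen index is strictly larger than $j$. Whatever index $k$ is selected, the rule guarantees $\distFr{\cA[\pA_j,\pA_k]}{\pA_j\pA_k} \leq \delta$. The one wrinkle is that the test is evaluated using the $3$-approximate width $\WidthX{\mrp_k}$ of \lemref{spine} rather than the exact \Frechet distance; since $\distFr{\cA[\pA_j,\pA_k]}{\pA_j\pA_k} \leq \WidthX{\mrp_k}$, passing the test $\WidthX{\mrp_k} \leq \delta$ still certifies $\distFr{\cA[\pA_j,\pA_k]}{\pA_j\pA_k} \leq \delta$. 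It only means the simplification may be coarser than strictly necessary (the ``somewhat weaker'' guarantee mentioned in the text), which does not affect the stated $\delta$-bound.

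For the running time, each greedy step advancing from $\pA_j$ to $\pA_k$ is performed by exponential search --- probing offsets $1, 2, 4, \ldots$ until the test fails or $\pA_n$ is reached --- followed by a binary search within the final bracket, where every probe invokes the linear-time approximation of \lemref{spine} on a subcurve of $O(k-j)$ vertices. The exponential phase costs $O(k-j)$ in total and the binary search costs $O((k-j)\log(k-j))$, so the step costs $O((k-j)\log(k-j)) = O((k-j)\log n)$; the exponential overshoot is only a constant factor and is absorbed into the $O$. Summing over the $t$ steps and using $\sum_{\ell=1}^t (i_\ell - i_{\ell-1}) = n-1$, the total is $O(n\log n)$.

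The main subtlety is reconciling the fact that the algorithm never computes exact curve-to-segment \Frechet distances, only the $3$-approximation of \lemref{spine}: one has to observe that this approximation enters the test on its ``safe'' side (it is an upper bound on the true distance), so the $\delta$-guarantee is preserved and only the output size can degrade. A secondary point is that $\WidthX{\mrp_k}$ need not be monotone in $k$, so the exponential/binary search should be read as locating \emph{a} cutoff satisfying the selection rule, not the globally farthest one; this too is immaterial to the lemma as stated, which asserts only $\distFr{\cA}{\cAs}\le\delta$ together with the running time.
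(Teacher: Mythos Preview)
Your proposal is correct and follows exactly the approach the paper intends: the paper states the lemma without an explicit proof, relying on the algorithm description immediately preceding it (per-piece \Frechet bound plus exponential/binary search with \lemref{spine} probes), and your write-up simply fills in those details faithfully. Your remarks about the $3$-approximation entering on the safe side and about non-monotonicity of the probe values are apt clarifications of what the paper glosses over with ``the guarantee is somewhat weaker.''
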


To get a better, simplified curve quickly, after preprocessing, given a parameter $\delta \geq 0$, one can first use the fast extractor (\extract) to compute a simplification (with distance $\leq \delta/10$), and then apply the above algorithm to the resulting curve to compute a simplification with distance $\leq 0.9\delta$. By the triangle inequality, the resulting simplification is distance at most $\delta$ from the original curve.  This combined simplification approach works quite well in practice.

\begin{remark}
    Somewhat confusingly, we have three simplification algorithms described in this paper. The first, \algref{simplification}, is a simple linear scan of the input curve, and is potentially slow if the input curve is large, particularly if we need to generate many simplifications of the same curve. Furthermore, in practice, the upper bound it provides on the error is way bigger than the actual error. The second approach, using \Alg for preprocessing and using \extract, is much faster but still yields (in practice) inferior simplifications. Using the greedy simplification algorithm enables us to ``cleanup'', the simplification and get a curve with significantly fewer vertices.
\end{remark}

\section{Conclusions and future work}

In this work, we have demonstrated that our variant of the \Frechet{} distance is both theoretically efficient and viable in practice. We provided standard libraries for the algorithms in \Julia and \Python. Furthermore, our libraries compute the retractable \Frechet distance/morphing, which seems to perform quite well in practice, and is better at handling noise than the original \Frechet distance.  This suggests that our algorithm may be helpful in applications where \Frechet{} distance computations dominate the runtime, especially when approximation is acceptable.  More generally, we believe the high quality of our implementations, and their easy availability via standard libraries in \Julia and \Python, should make our code and algorithms easily accessible for casual users.

\myparagraph{Clustering.}

Using the implementation given by Bringmann \etal \cite{bkn-wdfpa-21}, there has been substantial empirical work on efficient clustering of curves. Notably, Buchin \etal \cite{bdvn-kcbct-19} empirically study algorithms for $(k,\ell)$-center clustering. We omit the formal definition of this problem here, but of note is the large number of \Frechet{} distance computations required to determine the furthest curve from the set of existing curves chosen as centers. We believe that our algorithm is very well suited for this task, as our heap-based implementation can be easily modified to stop early if the \Frechet{} distance goes beyond a given threshold.

Similarly, there has been prior empirical work on $(k,\ell)$-median clustering \cite{bbknp-klmct-20} under the dynamic time warping distance, and we believe our algorithm would perform well in this application as well.

\myparagraph{Polygonal hierarchies and fast simplification.}

A strategy that works quite well for working on a large database of curves is to precompute, for all input curves, the fast simplification extractor of \lemref{fast_simplifier}. Then, given a query, one can simplify the two curves quickly to the correct resolution $\delta$ (initially guessed to be some large value) using this extractor, then apply the algorithm of \lemref{greedy} to get a simplification with few vertices. With these simplifications, then compute the \Frechet distance, and repeat the process, with smaller $\delta$, till the query is resolved. By caching the simplifications of the input curves, one can achieve better speedup if there are multiple computations involving the same curve.

\myparagraph{When our algorithm performs badly?}

Our algorithm performs poorly when many refinement rounds are required, and even worse when there are many different paths that are more or less equivalent, and the refinement causes the morphing to jump around the free space during refinement. Specifically, such bad examples are made out of ``one-dimensional'' zigzags where the distance is large, just because one of the curves has one more zigzag than the other, or even if the zigzags match but are unfavorable. Such an example also implies that many rounds of refinements are required before the algorithm gets even remotely close to the optimal \Frechet distance. See \figref{bad}. Of course, such inputs are not in any way real-world input, and represent the gap between the theory and practice of computing the \Frechet distance.

\begin{figure}[h]%
    \centerline{%
       \includegraphics[width=0.6\linewidth]%
       {figs/curve_bad_35}%
    }%
    \vspace*{-0.25cm}%
    \caption{When the new algorithm does badly -- the implementation required about 120 refinement rounds before converging, as the morphing kept leaping around the free space diagram. See \href{\baseUrlX{35}}{here} for more details and animations.  This example can be made worse by scaling the $y$-axis further down.  }%
    \figlab{bad}
\end{figure}

\printbibliography

\appendix

\section{The elevation function}

We need some standard properties of the elevation function. We prove some of them here, so that our presentation would be self-contained.

\subsection{A helper lemma}

\begin{lemma}
    \lemlab{norm_of_affine}%
    for $f: \Re^k \rightarrow \Re^d$ an affine function, the function $u(\pB) = \norm{f(\pB)}$ is convex.
\end{lemma}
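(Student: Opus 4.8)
The plan is to reduce the claim to the two elementary facts that (i) the norm $\norm{\cdot}$ on $\Re^d$ is convex, and (ii) the precomposition of a convex function with an affine map is convex. Concretely, I would first write the affine function explicitly as $f(\pB) = A\pB + c$ for some $d\times k$ matrix $A$ and some $c\in\Re^d$; the key structural property I will use is that $f$ commutes with convex combinations, i.e.\ $f\pth{\lambda \pB_1 + (1-\lambda)\pB_2} = \lambda f(\pB_1) + (1-\lambda) f(\pB_2)$ for all $\pB_1,\pB_2\in\Re^k$ and $\lambda\in[0,1]$.

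Next, fix $\pB_1,\pB_2\in\Re^k$ and $\lambda\in[0,1]$, and estimate
\[
u\pth{\lambda \pB_1 + (1-\lambda)\pB_2}
=
\norm{f\pth{\lambda \pB_1 + (1-\lambda)\pB_2}}
=
\norm{\lambda f(\pB_1) + (1-\lambda) f(\pB_2)}
\leq
\lambda\norm{f(\pB_1)} + (1-\lambda)\norm{f(\pB_2)},
\]
where the last step is the triangle inequality together with absolute homogeneity of the norm (using $\lambda,1-\lambda\geq 0$). The right-hand side is exactly $\lambda u(\pB_1) + (1-\lambda) u(\pB_2)$, which establishes convexity of $u$.

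There is essentially no obstacle here: the only thing to be slightly careful about is invoking the affinity of $f$ correctly (an affine map need not be linear, so one should push the convex combination through $f$ rather than through a would-be linear part), and noting that the coefficients $\lambda$ and $1-\lambda$ are nonnegative so that homogeneity gives $\norm{\lambda v} = \lambda\norm{v}$ without a sign issue. No further structure of $\norm{\cdot}$ beyond the triangle inequality and homogeneity is needed, so the argument works verbatim for any norm.
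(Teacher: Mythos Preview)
Your proof is correct, and it takes a genuinely different route from the paper's. You argue abstractly: push the convex combination through the affine map $f$, then invoke the triangle inequality and absolute homogeneity of the norm. This is the standard ``convex composed with affine is convex'' argument and works for any norm on $\Re^d$, with no calculus at all.

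The paper instead restricts to an arbitrary segment in $\Re^k$, writes the restriction of $u$ explicitly as $h(t)=\sqrt{\alpha t^2+\beta t+\gamma}$ with $\alpha\geq 0$ and nonnegative discriminant, and then computes $h''(t)$ by hand to verify it is nonnegative. This is considerably longer and is specific to the Euclidean norm. The one thing the paper's computation buys is that it makes the explicit form $h(t)=\sqrt{\alpha t^2+\beta t+\gamma}$ visible, which the surrounding appendix uses when analyzing the elevation function on grid edges; your argument establishes convexity more cleanly but does not surface that formula. For the lemma as stated, your proof is strictly simpler and more general.
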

\begin{proof}
    The proof is straightforward, and the reader is encouraged to skip reading it.  Fix any two points ${\pnt'}. {\pntA'} \in \Re^k$, and consider the segment ${\pnt'} {\pntA'}$. We need to prove that $h(t) = \norm{ \bigl. f\pth{(1-t){\pnt'} + t {\pntA'}}}$ is convex.  Let $\pnt = f\pth{{\pnt'}}$ and $\pntA = f\pth{{\pntA'}}$.  Since $f$ is affine, we have that
    \begin{equation*}
        h(t)%
        =%
        \norm{ \bigl. f\pth{(1-t) {\pnt'} + t {\pntA'}}}%
        =%
        \norm{ \bigl. (1-t)f\pth{{\pnt'}} + t f\pth{{\pntA'}}}%
        =%
        \norm{ (1-t) \pnt + t \pntA }%
        =%
        \sqrt{ \sum\nolimits_{i=1}^d g_i(t)},
    \end{equation*}
    where
    \begin{math}
        g_i(t)%
        =%
        \pth{ (1-t) p_i + t q_i }^2%
        =%
        \alpha_i t^2 + \beta_it + \gamma_i,
    \end{math}
    $\alpha_i, \beta_i, \gamma_i$ are constants, for $i=1,\ldots, d$, $\pnt = (p_1,\ldots, p_d)$, and $\pntA = (q_1,\ldots, q_d)$. For any $i$, the function $g_i(t)$ is nonnegative. If $\alpha_i = 0$ then $g_i(t) = \gamma_i$ and then $p_i = q_i$.  Otherwise, $\alpha_i >0$ and $g_i(t)$ is a parabola.  Let $\alpha = \sum_i \alpha_i$, $\beta=\sum_i \beta_i$ and $\gamma = \sum_{i} \gamma_i$.  Consider the function $g(t) = \sum_{i=1}^d g_i(t) = \alpha t^2 + \beta t + \gamma$. If, for all $i$, $p_i = q_i$, then $g(t) = \gamma$, and the claim trivially holds. Otherwise, $g(t)$ is a non-negative parabola with $\alpha > 0$, and since it has at most a single root, we have that $\beta^2 - 4 \alpha \beta \leq 0$.

    Now, we have
    \begin{align*}
      &h'(t) = \frac{2\alpha t + \beta}{2 \sqrt{ \alpha t^2 + \beta t +
        \gamma}} = \frac{h_1(t)}{h(t)}%
        \qquad
        \text{for } h_1(t) = \alpha t + \beta/2, \\
      \text{ and } \qquad
      &%
        h''(t) %
        =%
        \frac{h(t)g'(t) - h'(t) h_1(t)}{(h(t))^2}%
        =%
        \frac{\alpha h(t) - (h_1(t))^2/h(t)}{(h(t))^2} %
        =%
        \frac{(h(t))^2 - (h_1(t))^2/\alpha}{(h(t))^3/\alpha}.
    \end{align*}
    This implies that
    \begin{align*}
      \signX{h''(t)}%
      &%
        = \signX{(h(t))^2 - (h_1(t))^2/\alpha} = \signX{
        \alpha t^2 + \beta t + \gamma - \alpha t^2 - \beta t -
        \beta^2 /4\alpha} %
      \\& = %
      \signX{\gamma -\beta^2 /4\alpha } = %
      \signX{4\alpha\gamma -\beta^2 } \geq 0,
    \end{align*}
    since $\alpha > 0$ and $\beta^2 -4\alpha \gamma \leq 0$. Which implies that $h(\cdot)$ is convex, and so is $\norm{f(\cdot)}$.
\end{proof}

\subsection{Back to the elevation function}

We are given two segments $\pps \ppe$ and $\qqs \qqe$, where $L_\pp = \dY{\pps}{\ppe}$ and $L_\qq = \dY{\qqs}{\qqe}$.  Their uniform parameterization is
\begin{equation*}
    f(s)
    =%
    \pps + s \pp
    \qquad\text{and}\qquad%
    g(t)
    =%
    \qqs + t \qq,
\end{equation*}
where $\pp = (\ppe - \pps)/L_\pp$ and $\pp = (\ppe - \pps)/L_\qq$.  The elevation function is
\begin{equation*}
    \forall (s,t) \in [0,L_p] \times [0,L_q] \qquad
    \elevY{s}{t}%
    =%
    \norm{ \uus +  s\pp - t\qq},
\end{equation*}
where $\uus = \pps - \qqs$.  Since the elevation function is the norm of an affine function, it is convex, by \lemref{norm_of_affine}.

\begin{figure}[b]
    \phantom{}%
    \hfill%
    \includegraphics[width=0.25\linewidth]{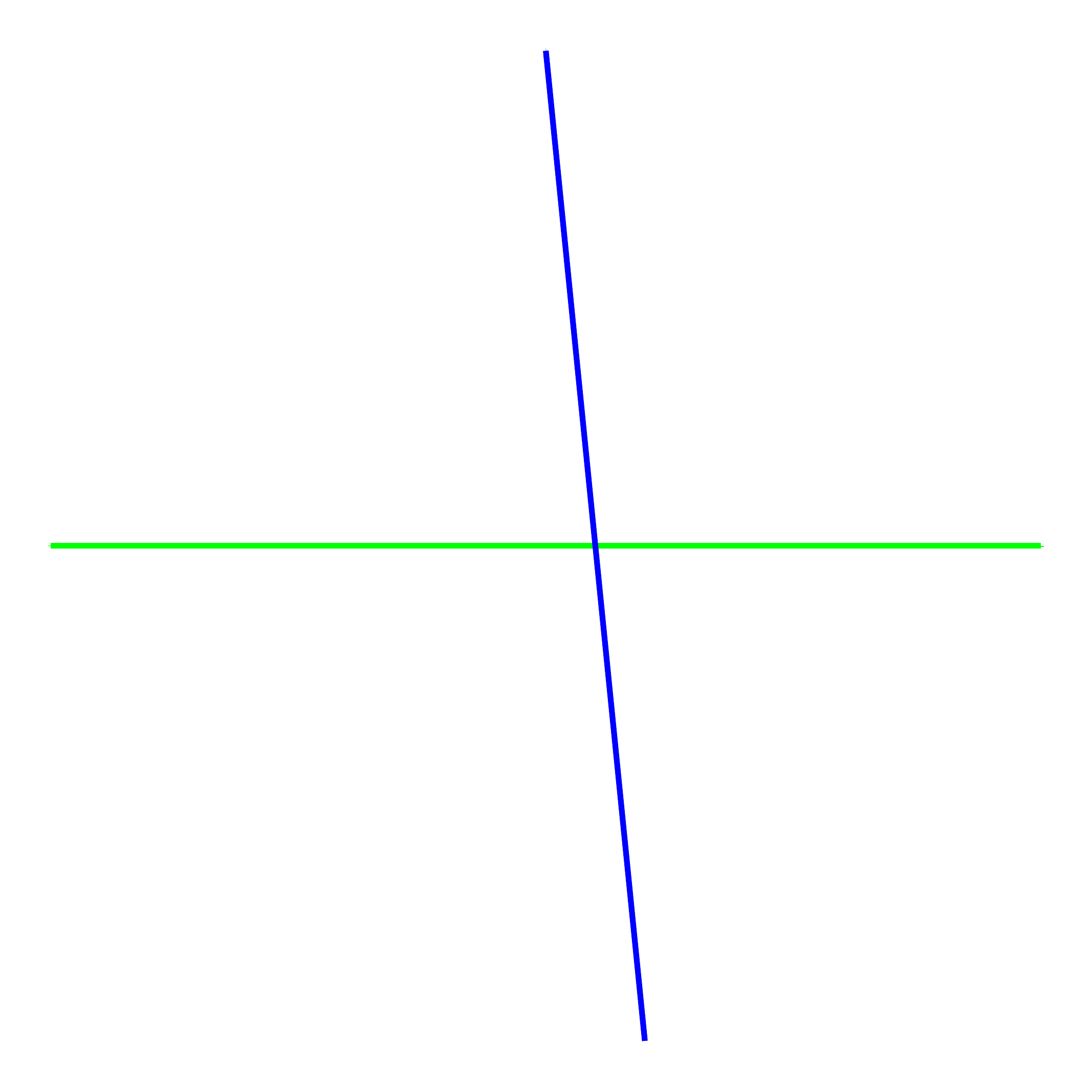} \hfill%
    \includegraphics{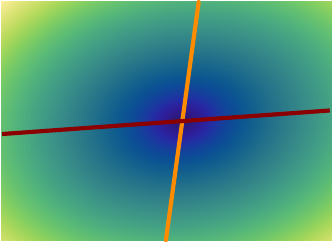} \hfill%
    \phantom{}%
    \caption{The elevation function for two segments, and the minimum lines.  See relevant animations \href{\baseUrlX{18}}{here}. %
    }%

    \figlab{18}
\end{figure}

The squared elevation function is
\begin{align*}
  E(s,t)
  =%
  \pth{\elevY{s}{t}}^2%
  &=%
    \DotProd{\uus + s \pp - t \qq}{\uus + s \pp - t \qq}
  \\&
  =
  \norm{\uus}^2 + 2 \DotProd{\uus}{s \pp - t \qq} +
  s^2  - 2 st \DotProd{\pp}{\qq} + t^2,
\end{align*}
since $\norm{\pp}^2 = 1$ and $\norm{\qq}^2 = 1$.  Its \emphi{derivative} along $s$ and $t$ respectively is
\begin{align*}
  \partial_s {E}(s,t)
  &=
    2  \DotProd{\uus}{\pp} +
    2s  - 2 t \DotProd{\pp}{\qq}
  \\%
  \text{and}\qquad%
  \partial_t {E}(s,t)
  &=
    -2 \DotProd{\uus}{  \qq}
    - 2 s \DotProd{\pp}{\qq} + 2t
\end{align*} %
In particular, let $h(\alpha)$ (resp. $v(\beta)$) be the $s$ (resp. $t$) coordinate of the minimum of the elevation function on the horizontal line $t = \alpha$ (resp. vertical line $s=\beta$). We have that $h(\alpha)$ is the solution to the (linear) equation $ \partial_s {E}(s,\alpha) = 0$ (resp.  $ \partial_t {E}(\beta,t) = 0$). That is
\begin{equation*}
    h(\alpha)
    =%
    \DotProd{\pp}{\qq} \alpha
    -
    {\DotProd{\uus}{\pp}}
    \qquad\text{and}\qquad%
    v(\beta)
    =%
    \beta \DotProd{\pp}{\qq}
    + \DotProd{\uus}{  \qq}.
\end{equation*}

\begin{observation}
    \obslab{min_lines}%
    In particular, the edges connecting the left portal to the right portal, and the edge connecting the bottom portal of a cell to its top portal, are both tracing these minimum edges.
\end{observation}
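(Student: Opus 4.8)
The plan is to read off the claim directly from the explicit formulas for $h(\alpha)$ and $v(\beta)$ derived just above. Recall that $h(\alpha)$ is the $s$-coordinate at which the elevation function attains its minimum along the horizontal slice $t=\alpha$, and that the stationarity equation $\partial_s E(s,\alpha)=0$ gives $h(\alpha) = \DotProd{\pp}{\qq}\,\alpha - \DotProd{\uus}{\pp}$, an affine function of $\alpha$. Symmetrically, $v(\beta) = \beta\DotProd{\pp}{\qq} + \DotProd{\uus}{\qq}$ is affine in $\beta$ and gives the $t$-coordinate of the minimum along the vertical slice $s=\beta$. Consequently the ``valley'' $\Set{(h(t),t)}{t\in[0,L_\qq]}$, i.e.\ the set of points obtained by minimizing the elevation over $s$ within each horizontal line, is itself a straight segment inside the cell; and likewise $\Set{(s,v(s))}{s\in[0,L_\pp]}$ is a straight segment.

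Next I would identify the portals as the endpoints of these valley segments. Generically (when the relevant minimizer lies in the interior of the corresponding cell edge), the bottom portal is $(h(0),0)$ and the top portal is $(h(L_\qq),L_\qq)$, so the \VE-graph edge joining them is, by affinity of $h$, exactly the segment $\Set{(h(t),t)}{t\in[0,L_\qq]}$ --- the minimum edge for horizontal slices. The argument for the left-to-right edge is identical with $v$ in place of $h$: the left portal is $(0,v(0))$, the right portal is $(L_\pp,v(L_\pp))$, and the connecting segment is $\Set{(s,v(s))}{s\in[0,L_\pp]}$. In particular, along the bottom-to-top edge the elevation at height $y$ equals $\min_x \elevY{x}{y}$, so this edge is exactly where the $g$-component of the warping cost is (locally) minimized, and dually for the left-to-right edge and the $f$-component; this is the consequence used later.

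The only point requiring a little care is the non-generic case where an unconstrained minimizer falls outside its edge and the portal is clamped to a cell corner. Here I would note that affinity rescues the statement: if both $h(0)$ and $h(L_\qq)$ lie in $[0,L_\pp]$ then $h(t)$ does for every intermediate $t$, so the whole valley segment stays inside the cell and the claim holds verbatim; if a portal is clamped, the connecting edge need not coincide with the full valley, but it still traces the valley on the sub-interval where the valley lies in the cell, which is all that the applications invoke. I expect this clamping bookkeeping to be the only mild obstacle --- the substance of the proof is simply the observation that an affine function is pinned down by its values at two points.
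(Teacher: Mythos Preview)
Your argument is correct and is exactly the reasoning the paper intends: in the paper this is stated as a bare observation with no separate proof, immediately following the derivation of the affine formulas for $h(\alpha)$ and $v(\beta)$, so the ``proof'' is precisely the affinity-plus-endpoints argument you spell out. Your treatment of the clamped-portal case goes slightly beyond what the paper makes explicit, but is consistent with how the observation is used.
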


\end{document}